\newcommand{\Supp}{\mathrm{Supp}}
\newcommand{\blue}[1]{#1}
\newcommand{\norm}[1]{\left\Vert#1\right\Vert}
\begin{document}

\title{Optimal quantum sampling on distributed databases}
\author{Longyun Chen}
\affiliation{
  \department{QICI Quantum Information and Computation Initiative, Department of Computer Science, School of Computing and Data Science}
  \institution{The University of Hong Kong}
  \city{Hong Kong SAR}
  \country{China}
}
\affiliation{
  \institution{Nanjing University}
  \department{State Key Laboratory for Novel Software Technology}
  \city{Nanjing}
  \country{China}
}
\email{chenlongyun01@gmail.com}

\author{Jingcheng Liu}
\affiliation{
  \department{State Key Laboratory for Novel Software Technology, New Cornerstone Science Laboratory}
  \institution{Nanjing University}
  \city{Nanjing}
  \country{China}
}
\email{liu@nju.edu.cn}

\author{Penghui Yao}
\affiliation{
  \department{State Key Laboratory for Novel Software Technology, New Cornerstone Science Laboratory}
  \institution{Nanjing University}
  \city{Nanjing}
  \country{China}
}

\affiliation{
  \institution{Hefei National Laboratory}
  \city{Hefei}
  \country{China}
}
\email{phyao1985@gmail.com}

\ccsdesc[500]{Theory of computation~Quantum query complexity}

\keywords{Quantum sampling, distributed quantum computing, quantum query complexity, adversary method}

\begin{abstract}

Quantum sampling, a fundamental subroutine in numerous quantum algorithms, involves encoding a given probability distribution in the amplitudes of a pure state.  Given the hefty cost of large-scale quantum storage, we initiate the study of quantum sampling in a distributed setting. Specifically, we assume that the data is distributed among multiple machines, and each machine solely maintains a basic oracle that counts the multiplicity of individual elements.  Given a quantum sampling task, which is to sample from the joint database, a coordinator can make oracle queries to all machines. We focus on the oblivious communication model, where communications between the coordinator and the machines are predetermined. We present both sequential and parallel algorithms: the sequential algorithm queries the machines sequentially, while the parallel algorithm allows the coordinator to query all machines simultaneously. Furthermore, we prove that both algorithms are optimal in their respective settings.

\end{abstract}

\maketitle

\section{Introduction}
Quantum sampling is a fundamental computational task in quantum computing that encodes a given distribution in the amplitudes of a quantum state. More specifically, the algorithm has access to a distribution $(p_1,\ldots,p_N)$ and is supposed to output the state $\sum_{i=1}^N\sqrt{p_i}\ket{i}$, where $\{\ket{1},\ldots,\ket{N}\}$ is a set of computational bases. Quantum sampling was inspired by the famous Grover's algorithm~\cite{10.1145/237814.237866} and is nowadays a key subroutine in many quantum algorithms. For example, the well-known Harrow-Hassidim-Lloyd algorithm~\cite{PhysRevLett.103.150502}, which solves a system of linear equations $Ax=b$ with an exponential speedup over the fastest classical algorithm, requires encoding the vector $b$ to the amplitude of a pure state $\ket{b}=\sum_i b_i\ket{i}$ up to normalization.

Many quantum algorithms for learning functions and distributions also require quantum sampling on a given distribution~\cite{gilyen_et_al:LIPIcs.ITCS.2020.25,Arunachalam2021twonewresultsabout,10.5555/3291125.3309633}. It is also known that the quantum advantages of certain quantum learning algorithms require quantum sampling and the advantages would vanish if quantum sampling was replaced by classical sampling~\cite{gilyen_et_al:LIPIcs.ITCS.2020.25}. Moreover, quantum sampling has also found many algorithmic applications, such as quantum walk~\cite{doi:10.1137/090745854,1366222,PhysRevA.78.042336}, quantum mean estimation~\cite{10.1145/3519935.3520045,hamoudi:hal-03454632,hamoudi:hal-02349991}, and quantum coupon collector~\cite{arunachalam_et_al:LIPIcs.TQC.2020.10}.  Thus, a number of works have been devoted to designing algorithms and analyzing the complexities of quantum sampling~\cite{5959806,10.1145/2493252.2493256,6108195}.

However, due to current physical limitations, having a single quantum storage for big data remains challenging. In order to achieve large-scale and more fault-tolerant quantum sampling, we initiate the study of  {\em distributed quantum sampling}. We assume that datasets are distributed across multiple machines, both for reducing the storage complexity for a single machine, and enabling fault-tolerance in the databases.
The main overhead introduced by distributed datasets are therefore the communication required between different machines.
As such, we primarily focus on the round complexity of (quantum) communications to achieve distributed quantum sampling.
It is important to note that our motivation to introduce "distributedness" into quantum sampling is to overcome the current physical limitations of large-scale quantum storage, rather than to accelerate computations as in traditional distributed computation. To carry out quantum sampling in a distributed setting adds complexity for algorithm and system designers. 

Our model of distributed database consists of several machines and a coordinator. The distributed database has a publicly known maximum capacity $\nu$ for each kind of element, which is an upper bound on multiplicities of the elements. 
Intuitively, this bound $\nu$ is needed for encoding the multiplicities using quantum registers, so that different machines can communicate via quantum communications.
A large dataset is distributed across these machines. The goal of the coordinator is to produce the quantum state $\sum_{i=1}^N\sqrt{p_i}\ket{i}$, where $p_i$ is the probability that you get $i$ when sampling uniformly from the distributed database. 

To minimize the implementation cost of each database, we assume that each database only needs to implement a simple oracle operation that maintains the number of times an element appears in its share of the dataset, which is a common setting for the quantum database in the previous works~\cite{10.1145/237814.237866,Boyer_1998,JuYi-Lin2007QCDa,Liu_2023}. Indeed, the assumption that the database only permits a specific type of query is generally accepted in theory of quantum computing, particularly within the context of quantum query complexity. This assumption is motivated by practical considerations -- the realization of a quantum database that accepts a fixed type of queries is arguably more feasible than the implementation of a fully fledged quantum database. This perspective stems from technical obstacles in the current development and maintenance of an all-powerful quantum database. In light of this, we only allow quantum communications in the form of a simple oracle query.  

We elaborate the technical challenges in designing distributed databases that support quantum sampling.
\paragraph{Necessity of quantum communications}
To illustrate, we discuss a simplified setting where there is a single machine holding an unknown subset S of elements in $[N]$, and the coordinator needs to guess another subset of elements T that overlaps with S substantially. This is a special case of the distributed sampling problem because, failing to do so would not produce a sample with constant fidelity. If we only allow classical communications, then it requires $\Omega(N)$ in communication complexity 
because of the existence of error correcting codes with linear rate. This means that the coordinator has to effectively ask every database, how many times every possible element appear.
For $n$ machines and a dataset defined on a data universe (alphabet) of size $N$, this means the classical query complexity could be as large as $nN$.

\paragraph{Physical and practical restrictions in quantum communications}
To reduce communication complexity, we allow quantum communications between the coordinator and machines. For the purpose of illustration, we revisit the simplified setting where there is a single machine, and consider a quantum variant of it: suppose that the machine is holding an unknown state $\sum_{i\in S} \ket{i}$, and the coordinator is trying to come up with another state $\sum_{i\in T} \ket{i}$ that approximates it well.  With unrestricted access to quantum communications, this single-machine quantum sampling problem becomes trivial because the machine could simply send the state over to the coordinator in one round. However, it requires an all-powerful quantum machine to prepare the state for the coordinator. 

Another issue arises when one tries to generalize the above idea to quantum sampling with more than one machine.
A natural attempt would be to run a quantum sampling algorithm on each machine individually, and let the coordinator combine their output somehow. Specifically, each machine simply produce its own quantum sampling state, after which the coordinator attempts to combine them to create a global sampling state. Unfortunately, this is not a physically realizable scheme, because combining quantum sampling states is not a quantum unitary operation\footnote{For example, suppose there are two machines, each holding a different element, $x$ and $y$. The quantum sampling state of each machine is $\ket{x}$ or $\ket{y}$. The coordinator aims to output the state $ ( \ket{x} + \ket{y})/\sqrt{2}$. An operator that takes input $\ket{x}\ket{y}$ and outputs $( \ket{x} + \ket{y})/\sqrt{2}  $ for every pair of states $\ket{x}$ and $\ket{y}$ cannot be a linear operator, even with ancillaries.}. Therefore, such an attempt is not possible using valid quantum operations, even with all-powerful quantum communications and zero approximation error from each database. 

To address both the practical and physical considerations, we take inspiration from arguably the most successful generic quantum search framework to date, the Grover search framework~\cite{10.1145/237814.237866}, and let each machine implement a local Grover oracle. The specific form of the query is carefully designed so that the coordinator can combine the individual quantum queries from multiple machines, using only unitary transformations.


\subsection*{Main results}
In this paper, we exhibit two distributed quantum sampling algorithms using {\em sequential} and {\em parallel} queries, respectively. Informally speaking, in the sequential model, the coordinator makes queries to each database sequentially. In the parallel model, the coordinator queries all machines simultaneously. We further show that both algorithms achieve optimal query complexity among all oblivious algorithms.

Our algorithms allow different machines to hold the same key because this is a more general setting, which does not require synchronization of quantum distributed databases (a costly assumption in classical settings). Our lower bound holds even if all databases are disjoint. Intuitively, the coordinator does not know which elements are in which machine and has to establish this knowledge with confidence through sufficiently many queries.

\begin{theorem}[Main result, informal]
    Given a distributed database consisting of $n$ machines with maximum capacity $\nu$, there exists an algorithm for quantum sampling with $O(n\sqrt{\nu N/M})$ sequential queries in the oblivious model. If parallel queries are allowed, then $O(\sqrt{\nu N/M})$ queries are sufficient. Here $N$ is the size of the data universe, and $M$ is the total number of elements stored across the distributed database counting multiplicities. Moreover, both algorithms are optimal in the oblivious communication model.
\end{theorem}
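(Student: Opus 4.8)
The plan is to prove the upper bound by exhibiting the two explicit algorithms and then establish matching lower bounds via the quantum adversary method.

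\textbf{Upper bound.} For the algorithm I would reduce distributed quantum sampling to amplitude amplification. Writing $c_i$ for the total multiplicity of element $i$ (so that $p_i=c_i/M$ and $\sum_i c_i=M$), I start from the uniform superposition $\frac{1}{\sqrt{N}}\sum_{i=1}^N\ket{i}$, append an ancilla, and perform a controlled rotation sending the ancilla to $\sqrt{c_i/\nu}\,\ket{1}+\sqrt{1-c_i/\nu}\,\ket{0}$; this is well defined precisely because $c_i\le\nu$. The amplitude on the flagged subspace $\ket{1}$ is then $\sqrt{\frac{1}{N}\sum_i c_i/\nu}=\sqrt{M/(\nu N)}$, and conditioned on the flag the state equals the target $\sum_i\sqrt{c_i/M}\,\ket{i}$. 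Amplitude amplification therefore boosts the success amplitude to a constant in $O(\sqrt{\nu N/M})$ iterations. The crucial point is that the rotation angle depends on $c_i=\sum_{j=1}^n c_i^{(j)}$, the sum of per-machine multiplicities, so each iteration must gather the counts from all $n$ machines. In the parallel model all machines answer one query simultaneously, so one iteration costs one round and the total is $O(\sqrt{\nu N/M})$; in the sequential model each iteration spends $n$ queries polling the machines one at a time, giving $O(n\sqrt{\nu N/M})$.

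\textbf{Lower bound.} For optimality I would use the adversary method. The idea is to build a family of hard instances in which the support of the database is a hidden subset of $[N]$ of size roughly $M/\nu$ (each present element having multiplicity $\nu$), arranged so that all databases are disjoint. Any algorithm producing a state of constant fidelity with the target must effectively locate this hidden support, and locating a planted set of relative density $M/(\nu N)$ through a Grover-type multiplicity oracle requires $\Omega(\sqrt{\nu N/M})$ queries; formally I would encode the instance family as an adversary weight matrix and bound the per-query change in the resulting progress measure. This yields the parallel lower bound $\Omega(\sqrt{\nu N/M})$. For the sequential model I would additionally spread the planted elements across the $n$ machines, so that a single query, which in the oblivious sequential schedule touches only one predetermined machine, can reduce the adversary potential by at most a $1/n$ fraction of what a full parallel query could; since the schedule is fixed in advance, an averaging argument over which machine holds the relevant element forces the extra factor of $n$, giving $\Omega(n\sqrt{\nu N/M})$.

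\textbf{Main obstacle.} I expect two pressure points. On the algorithmic side, making the multi-machine oracle composition genuinely unitary, rather than the forbidden operation of ``adding the sampling states'' discussed in the introduction, is the crux; the controlled count-accumulation into a shared register has to be verified to be reversible and to implement exactly the reflection required by amplitude amplification. On the lower-bound side, the harder part is pinning down the exact factor of $n$ in the sequential setting: I must argue that, under an oblivious (data-independent) query schedule, no clever reuse of earlier answers can avoid querying each machine $\Omega(\sqrt{\nu N/M})$ times, which is precisely where the obliviousness hypothesis and the disjointness of the databases are invoked to control the adversary bound.
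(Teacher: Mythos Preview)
Your upper bound is essentially the paper's algorithm: the operator $D$ there is exactly your controlled rotation, implemented by querying all $n$ machines to accumulate $c_i$ in an ancilla, rotating by $\arcsin\sqrt{c_i/\nu}$, and uncomputing; zero-error amplitude amplification then needs $O(\sqrt{\nu N/M})$ calls to $D$, whence the two stated complexities.

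The lower bound, however, diverges from the paper in a way that matters. First a minor point: the paper does not use an adversary weight matrix but Zalka's hybrid argument, tracking the potential $D_t=\mathbb{E}_{T}\bigl\|\,\ket{\psi_t^T}-\ket{\psi_t}\,\bigr\|^2$, where $\ket{\psi_t}$ is the run of the \emph{same} algorithm with the $k$-th machine emptied. One shows (i) $D_{t_k}\ge C\,M_k/M$ whenever the output has fidelity $>9/16$, and (ii) each query to machine $k$ raises $\sqrt{D_t}$ by at most $2\sqrt{m_k/N}$.

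The substantive gap is in how you extract the factor $n$ for the sequential bound. Your plan spreads the planted support across all machines and claims a single-machine query advances the potential by a $1/n$ fraction of a parallel query. That does not deliver the bound at constant fidelity. With the support spread evenly, $m_k\approx M/(n\nu)$, so in (ii) the per-query increment to $\sqrt{D_t}$ scales like $\sqrt{M/(n\nu N)}$ --- a factor $1/\sqrt{n}$, not $1/n$, relative to the concentrated case. Worse, in (i) the required final potential is only $\Theta(M_k/M)=\Theta(1/n)$, and the proof of (i) needs $M_k/M\gtrsim\epsilon$ (because one compares $\sqrt{F_{t_k}}\ge\sqrt{M_k/(2M)}$ against $\sqrt{E_{t_k}}\le\sqrt{2\epsilon}$); for constant $\epsilon$ and $M_k/M=1/n$ this fails outright. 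So the spread-instance route either yields only a $\sqrt{n}$ factor or forces $\epsilon=O(1/n)$.

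The paper's trick is the opposite of spreading. For each $k$ it takes a \emph{different} hard family in which essentially all data sits on machine $k$ (so $M_k=M$, $m_k\approx M/\kappa_k$), and from (i)--(ii) obtains $t_k=\Omega(\sqrt{\kappa_k N/M})$ for that family. Obliviousness is then used not to average over which machine holds the element, but to argue that the schedule $(t_1,\dots,t_n)$ is the same on every input, so the $n$ per-machine lower bounds---each proved on its own hard family---hold simultaneously and can be summed: $\sum_k t_k=\Omega\bigl(\sum_k\sqrt{\kappa_k N/M}\bigr)$. Your closing sentence (``no clever reuse \dots can avoid querying each machine $\Omega(\sqrt{\nu N/M})$ times'') is the right target, but the mechanism that enforces it is $n$ concentrated instances plus obliviousness, not one spread instance plus a $1/n$-per-query estimate.
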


These two optimal algorithms are designed by directly expanding the centralized quantum sampling algorithm. Their optimality is established by extending the method proposed by Zalka for Grover's algorithm in the centralized setting~\cite{Zalka_1999}. Our proof suggests that the essential barrier for distributed quantum sampling is the same as that for the centralized setting. 

\subsection*{Related work}
The problem of quantum sampling was raised after Grover's algorithm~\cite{10.1145/237814.237866}. The quantum query complexity of quantum sampling has been studied in various contexts. Shi~\cite{1181975} introduced the problem of index erasure: given an injective function $f:[n]\rightarrow[m]$ via a black-box oracle, the task is to prepare the quantum state, which is the uniform superposition on the image of $f$, i.e., $\sum_{x=1}^n\ket{f(x)}/\sqrt{n}$. Index erasure can be viewed as a uniform quantum sampling over a subset of the universe. This problem is closely related to graph isomorphism, and the tight query complexity of the problem was later established by Ambainis, Magnin, Roetteler, and Roland in the coherent setting~\cite{5959806} and by Lindzey and Rosmanis in the non-coherent setting~\cite{lindzey_et_al:LIPIcs.ITCS.2020.59}. Ozols, Roetteler, and Roland~\cite{10.1145/2493252.2493256} further introduced quantum rejection sampling, which converts a quantum state to another quantum state with a given amplitude. The quantum query complexity of quantum state conversion has been established in~\cite{6108195}. 

In addition to quantum query complexity, quantum sampling has also been studied in other models of computation. Aharonov and Ta-Shama~\cite{aharonov2003adiabaticquantumstategeneration}~studied the problem of preparing $\sum_{i\in\Omega}\sqrt{p_i}\ket{i}$ given the description of a classical circuit with output distribution $p$. A weaker quantum sampling model, where an extra register is allowed, that is, $\sum_i\sqrt{p_i}\ket{i}\ket{c_i}$, has also been considered in~\cite{hamoudi:hal-02349991,10.5555/3291125.3309633}. 

\paragraph*{Organization} \Cref{sec:preli} contains preliminary materials on basic notations. \Cref{sec:models} formally introduces the model of quantum distributed databases and quantum sampling on distributed databases. \Cref{sec:alg} presents both sequential and parallel algorithms for quantum sampling. \Cref{sec:lowerbound} proves lower bounds on query complexity in both sequential models and parallel models, which shows that both algorithms in \Cref{sec:alg} are optimal.

\section{Preliminaries}\label{sec:preli}
For integer $N>0$, let $[N]$ represent the set $\{1,\cdots,N\}$. Given a multiset $S$, $\Supp(S)$ represents the support of $S$. For any element $x$, the multiplicity of $x$ is the number of occurrences in $S$. The cardinality of a multiset $S$, denoted by $|S|$, is the sum of the multiplicities of all its elements.

Here we give a brief introduction to quantum computing and the notations used in this paper. Readers may refer to \cite{nielsen2010quantum} for a thorough treatment. Consider a Hilbert space $\mathcal{H}$ endowed with an inner product $\langle\cdot,\cdot\rangle$. A quantum state $\rho$ is a positive semidefinite matrix with a trace equal to $1$. 

Let $\ket{\psi}$ be a vector in $\mathcal{H}$. The norm of $\ket{\psi}$, denoted by $\norm{\ket{\psi}}$ is defined to be $\norm{\psi}:=\sqrt{\bra{\psi}\psi\rangle}$. For any two vectors $\ket{\phi}$ and $\ket{\psi}$ in $\mathcal{H}$, the distance between them is $\norm{\ket{\phi}-\ket{\psi}}.$ A quantum register $A$ is associated with a Hilbert space $\mathcal{H}_A$. The composition of two registers $A$ and $B$, denoted by $AB$, is associated with the Hilbert space $\mathcal{H}_A\otimes\mathcal{H}_B$. The identity operator on $\mathcal{H}_A $, (and associated register $A$) is denoted $I_A$. The subscript $A$ may be omitted when it is clear from the context. 

\blue{When the rank of $\rho$ is $1$, the state $\rho = \ket{\psi}\bra{\psi}$ is pure and can be represented by $\ket{\psi}$. For a mixed state $\rho = \sum_i p_i\ket{\psi_i}\bra{\psi_i}$ with $\sum_i p_i=1$, it can be regarded as an ensemble where the state $\ket{\psi_i}$ is observed with probability $p_i$. Given two quantum states $\rho,\sigma$, we measure their similarity by quantum fidelity, defined as
\[
F(\rho,\sigma) = \left(\operatorname{Tr}\left[\sqrt{\sqrt{\rho}\sigma\sqrt{\rho}}\right]\right)^2 \in [0,1].
\]
When $\rho = \ket{\psi}\bra{\psi},\sigma = \ket{\phi}\bra{\phi}$ are pure, the fidelity becomes $|\braket{\psi|\phi}|^2$, the square of the inner product.}

\section{Distributed databases}\label{sec:models}
In this section, we formally introduce the model of distributed databases considered in this paper. 
A distributed database consists of several machines, each of which stores part of the data and is maintained by a machine. Moreover, each database also implements some simple operations. There is a coordinator who makes queries to each database and outputs the answer at the end of the algorithm. The coordinator is assumed to be a quantum computer. In this paper, we assume that the coordinator sends an element from the dataset to a database, and the database answers the multiplicity of the element, i.e., the number of occurrences of the element in the database. A mathematical formulation is given below.  We are interested in minimizing the number of quantum queries made by the coordinator.

In this paper, we only consider the {\em oblivious} communication model, where the order of the communication between the coordinator and the machines is predetermined (only depends on the public knowledge known to the coordinator). The oblivious communication model has been studied in~\cite{Le_Gall_2022}.

We conjecture that non-oblivious communication does not help us to save the number of queries. It is worth noting that the final output of the algorithm is supposed to be a pure state. Thus, in the quantum circuits model, all intermediate measurements can be removed by the principle of deferred measurement and the gentle measurement argument~\cite{nielsen2010quantum,10.1145/3313276.3316378}. However, it is not clear whether they can be extended to a distributed setting. We leave it for future work.

\subsection*{Quantum sampling on distributed databases}

Suppose that the data universe is represented by the set $[N]:=\{1,\cdots, N\}$, and the dataset is distributed among $n$ machines.
The coordinator maintains a quantum state with three registers
\[\ket{\rho}=\sum_{i\in[N]}\alpha_i\ket{i}\ket{s_i}\ket{w_i}.\]
The first register is $N$-dimensional for element storage, the second register is $(\nu+1)$-dimensional to store the outcome of the oracle, and the last one is the ancillary register, whose dimension remains to be determined by the algorithm design. In our algorithm, $w_i$ should belong to $\{0,1\}$.

In this paper, we consider two models of queries: {\em sequential} queries and {\em parallel} queries.
In the sequential model, the coordinator sends queries to the machines sequentially.  In a sequential model, 
suppose the coordinator makes a query to the $j$-th database. It sends the first two registers to the $j$-th database. The $j$-th database implements the following operation $\mathcal{O}_j$:
\begin{equation}\label{eqn:sequentialO}
\mathcal{O}_j\ket{i}\ket{s}=\ket{i}\ket{(s+c_{ij}) \mod (\nu+1)},
\end{equation}
where $c_{ij}$ is the multiplicity of an element $i$ in the $j$-th database and $\nu$ is the maximum capacity of each database, which is known to the coordinator. Thus, $\nu\geq \max_{i\in[N]}(\sum_{j=1}^nc_{ij})$ is an upper bound for the multiplicities of the elements. 

It is worth noting that the oracle operation can be easily extended to a dynamic database. It is low-cost to update oracle operation $\mathcal{O}_j$ if the datasets are changed. For instance, if the multiplicity of element $i$ in the $j$-th database increases or decreases by $1$, i.e., $c_{ij}$ increases or decreases by $1$, we can simply update $\mathcal{O}_j$ by left multiplying operator $U$ or $U^{\dagger}$, respectively, where $U\ket{i}\ket{s}=\ket{i}\ket{(s+1)~\mod{(\nu+1)}}$.

In the parallel model, the coordinator may send multiple queries to distinct machines simultaneously. To be more specific, in the parallel model, the state with the coordinator contains four registers 
\[\ket{\rho}=\sum_{\bar{i}\in[N]^n}\alpha_{\bar{i}}\ket{\bar{i}}\ket{s^{\bar{i}}}\ket{b^{\bar{i}}}\ket{w_{\bar{i}}},\]
where $\bar{i}=(i_1,\ldots,i_n)\in[N]^n, b^{\bar{i}}\in\{0,1\}^n$.
Thus, each of the first three registers contains $n$ qudits. When the coordinator makes a query, it sends three qudits, one from each of the first three registers, to each database. For $j\in[n]$, the $j$-th database is implementing
\begin{equation}\label{eqn:parallelOj}  \hat{\mathcal{O}}_j\ket{i_j}\ket{s_j}\ket{b_j}=\ket{i_j}\ket{s_j+c_{i_{j},j}\cdot b_j\mod{(\nu+1)}}\ket{b_j}
\end{equation}
where $\nu$ is an upper bound on the multiplicities of $i_j$.
It is not hard to see that the operation in \Cref{eqn:parallelOj} can be realized by the query operation in the sequential model defined in \Cref{eqn:sequentialO}. Thus a parallel query 
\begin{equation}\label{eqn:parallelquery}
\mathcal{O}\ket{\bar{i}}\ket{s_1\cdots s_n}\ket{b_1\cdots b_n}=\bigotimes_{j=1}^n\hat{\mathcal{O}}_j\ket{i_j,s_j,b_j}
\end{equation}
can be implemented by $n$ sequential queries. 

To describe the problem of quantum sampling, we need to further introduce some notations. 
For $j\in[n]$, the dataset on the $j$-th machine is denoted by a multiset $T_j$.  Then, the multiset $T_j$ is completely determined by the values of $c_{ij}$ defined in \Cref{eqn:sequentialO}. 


Now we are ready to formally define the problem of quantum sampling on distributed databases. Given datasets $\{T_j\}_{j\in [n]}$, a quantum sampling algorithm is supposed to produce the state
\begin{equation}\label{eq:psi}
    \ket{\psi}=\frac{1}{\sqrt{M}}\sum_{i\in[N]}\sqrt{c_i}\ket{i},
\end{equation} 
where $c_i=\sum_{j\in[n]}c_{ij}$ is the total occurrences of the element $i$ across all machines and $M=\sum_{i\in[N]}c_i$ is the total count of all elements on the machines. 
Since $c_i/M$ is the frequency of the element $i$ appearing over all machines, measuring the state $\ket{\psi}$ under the computational basis $\{\ket{i}\}_{i\in[N]}$ is equivalent to sampling over the datasets.

All parameters and notations are summarized in Table~\ref{tb:parameters}.

\begin{table*}
    \caption{Table of Notations}\label{tb:parameters}
    \begin{tabular}{cc}
        \toprule
        Symbol & Meaning \\
        \midrule
        $n$ & the count of the machines \\
        $N$ & the number of the varieties of elements \\
        $T_j$ & the dataset (multiset) on the $j$-th machine \\
 $c_{ij}$ &the multiplicity of element $i$ in $T_j$ \\
 $c_{i}:=\sum_{j\in[n]}c_{ij}$ &the total count of occurrences of $i$ across all machines \\
        $M:=\sum_{i\in[N]}c_i$& the total count of the elements over all machines \\
 $M_j:=|T_j|$&the count of the elements on the $j$-th machine\\
        $\Supp(T_j)$ & the support set of distinct elements in $T_j$\\
        $m_j:=|\Supp(T_j)|$ & the count of distinct elements on the $j$-th machine\\
        $\nu$ & maximum capacity of the database \\
        $\ket{\psi}$ & the quantum sampling state defined in \Cref{eq:psi} \\
        $\ket{\pi}:=\frac{1}{\sqrt{N}}\sum_{i\in[N]}\ket{i}$ & the uniform superposition state\\
        $\mathcal{O}_j$ & the oracle of the $j$-th machine\\
        \bottomrule
    \end{tabular}
\end{table*}



\section{Algorithms}\label{sec:alg}
At the heart of our algorithm, each machine implements a local Grover step by the oracle, which allows us to query each database one by one while effectively simulating a global amplitude amplification. This minimizes the quantum ability of each machine, and its simplicity allows it to be easily updated when the database is dynamically changing.
\subsection{Sequential queries}
We start by giving a quantum sampling algorithm for the sequential model. A key operator in our algorithm is the {\em distributing operator} $D$ such that
\begin{equation}\label{definition of D}
    D\ket{i,0}=\sqrt{\frac{c_i}{\nu}}\ket{i,0}+\sqrt{\frac{\nu-c_i}{\nu}}\ket{i,1},
\end{equation}
We claim that there exists a unitary operator satisfying the above equation, and is thus a valid quantum operation.

\begin{lemma}
    The operator $D$ can be extended to a unitary operator on the whole Hilbert space. 
\end{lemma}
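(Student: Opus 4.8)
The plan is to recognize that $D$ is so far specified only on the $N$-dimensional subspace $V = \mathrm{span}\{\ket{i,0} : i\in[N]\}$ of the full Hilbert space $\mathcal{H} = \mathrm{span}\{\ket{i,b} : i\in[N],\, b\in\{0,1\}\}$, which has dimension $2N$, and to invoke the standard linear-algebra fact that a linear map extends to a unitary on $\mathcal{H}$ exactly when it acts as an isometry on its domain subspace, provided there is enough room in the ambient space. Since $\dim V = N \le 2N = \dim\mathcal{H}$, the dimension count already signals that no obstruction can arise; the work is to verify the isometry property and then perform the extension.

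First I would check that the amplitudes in \Cref{definition of D} are well defined: because $\nu \ge \max_i c_i$ and each $c_i \ge 0$, both $c_i/\nu$ and $(\nu-c_i)/\nu$ lie in $[0,1]$, so the square roots are real. Next I would confirm that $D$ is an isometry on $V$ by showing it sends the orthonormal basis $\{\ket{i,0}\}_{i\in[N]}$ to an orthonormal set. A direct computation gives $\norm{D\ket{i,0}}^2 = c_i/\nu + (\nu-c_i)/\nu = 1$, so every image is a unit vector; and for $i \neq i'$ each term in $\braket{D(i',0)|D(i,0)}$ carries a factor $\braket{i'|i} = 0$, so distinct images are orthogonal. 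Hence $D|_V$ preserves inner products and maps $V$ isometrically onto the $N$-dimensional subspace $D(V)$.

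Finally I would extend $D$ to all of $\mathcal{H}$. The orthogonal complements $V^\perp$ and $D(V)^\perp$ both have dimension $2N - N = N$; choosing any orthonormal bases of the two and matching them defines an isometry $V^\perp \to D(V)^\perp$. Gluing this to $D|_V$ produces a linear map that carries an orthonormal basis of $\mathcal{H}$ to an orthonormal basis of $\mathcal{H}$ while agreeing with $D$ on $V$, and such a map is by definition unitary.

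There is no genuine difficulty in this argument; the only points requiring care are verifying that $\{D\ket{i,0}\}_{i\in[N]}$ is honestly orthonormal (this is precisely where the chosen amplitudes and the orthogonality across distinct $i$ enter) and noting that the ancilla bit doubles the dimension, leaving exactly enough slack to accommodate the extension. If one wished to avoid explicitly constructing the extension, one could instead remark that any partial isometry on a finite-dimensional space dilates to a unitary, but the basis-matching construction above is self-contained and elementary.
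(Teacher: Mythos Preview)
Your proposal is correct and follows essentially the same approach as the paper: verify that $\{D\ket{i,0}\}_{i\in[N]}$ is orthonormal (equivalently, that $\braket{i,0|D^\dagger D|j,0}=\delta_{ij}$), and then invoke the standard fact that an inner-product-preserving map on a subspace extends to a unitary on the whole space. You spell out the extension step and the well-definedness of the amplitudes more explicitly than the paper does, but the argument is the same.
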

\begin{proof}
    \Cref{definition of D} defines the operator $D$ on the domain of a subspace spanned by $\{\ket{i,0}\}_{i\in[N]}$. It can be checked that \[\braket{i,0|D^\dagger D|j,0}=\delta_{ij}\]
with the Kronecker notation $\delta_{ij}$. Thus, $D$ preserves the inner product on the subspace, which can be extended to a unitary operator on the whole space.
\end{proof}

Notice that the operator $D$ depends on the input $\{T_j\}$ due to its definition in \Cref{definition of D}. The next lemma shows that the operator $D$ can be realized by $2n$ calls of $\mathcal{O}_j$'s.
\begin{lemma}\label{prop:implementation of D}
    The operator $D$ can be implemented with $2n$ queries given by the oracles $\mathcal{O}_j$ defined in \Cref{eqn:sequentialO} and unitary operators independent of the input.
\end{lemma}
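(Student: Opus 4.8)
The plan is to realize $D$ by the standard compute--rotate--uncompute pattern. First I would use the oracles to write the total count $c_i$ into the outcome register. Starting from $\ket{i}\ket{0}$ and applying $\mathcal{O}_1,\mathcal{O}_2,\ldots,\mathcal{O}_n$ in succession produces $\ket{i}\ket{(\sum_{j}c_{ij})\bmod(\nu+1)}$. Since $\nu\ge\max_{i}\sum_{j}c_{ij}$, every partial sum stays strictly below $\nu+1$, so no modular wraparound occurs and the register lands exactly in the state $\ket{i}\ket{c_i}$. This first phase consumes $n$ forward queries to the oracles $\mathcal{O}_j$ of \Cref{eqn:sequentialO}.

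Next I would apply an input-independent controlled rotation to the ancilla qubit initialized to $\ket{0}$. For each possible value $s\in\{0,1,\ldots,\nu\}$ define the single-qubit rotation $R_s\ket{0}=\sqrt{s/\nu}\,\ket{0}+\sqrt{(\nu-s)/\nu}\,\ket{1}$; since $s/\nu\in[0,1]$ this corresponds to a legitimate rotation angle, and the controlled operation $\sum_{s}\ket{s}\!\bra{s}\otimes R_s$ is a unitary that depends only on the public parameter $\nu$, not on the datasets $\{T_j\}$. Applying it with the outcome register as control yields
\[
\ket{i}\,\ket{c_i}\,\bigl(\sqrt{c_i/\nu}\,\ket{0}+\sqrt{(\nu-c_i)/\nu}\,\ket{1}\bigr),
\]
which already carries the desired amplitudes on the ancilla.

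Finally I would clean the outcome register by running the oracles in reverse, applying $\mathcal{O}_n^{-1},\ldots,\mathcal{O}_1^{-1}$ (another $n$ queries). Because each oracle acts only on the element/outcome registers while the rotation touched only the ancilla, the two operations act on disjoint registers and commute; hence the inverse oracles subtract the $c_{ij}$ back out and restore the outcome register to $\ket{0}$ without disturbing the ancilla amplitudes. The result is $\ket{i}\ket{0}\bigl(\sqrt{c_i/\nu}\,\ket{0}+\sqrt{(\nu-c_i)/\nu}\,\ket{1}\bigr)$, and treating the clean outcome register as workspace gives exactly $D\ket{i,0}$ on the element and ancilla registers, at a total cost of $2n$ oracle calls.

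The step I expect to require the most care is justifying the uncomputation as a valid $n$-query operation: it relies on treating each $\mathcal{O}_j^{-1}$ as a single query, which is the standard convention in quantum query complexity since the oracles are unitary and therefore reversible (indeed one even has $\mathcal{O}_j^{-1}=\mathcal{O}_j^{\nu}$ because $\nu\equiv-1\pmod{\nu+1}$, though invoking that would be too costly). One should also confirm the no-wraparound property for \emph{every} partial sum, guaranteed by $\nu\ge\max_i c_i$, so that the intermediate register genuinely equals $c_i$ and the controlled rotation reads off the correct value. By linearity the construction extends from each basis vector $\ket{i,0}$ to the whole subspace spanned by $\{\ket{i,0}\}_{i\in[N]}$, matching the definition in \Cref{definition of D}.
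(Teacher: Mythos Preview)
Your proposal is correct and follows essentially the same compute--rotate--uncompute structure as the paper's proof: apply $\mathcal{O}_1,\ldots,\mathcal{O}_n$ to load $c_i$ into the outcome register, apply an input-independent controlled rotation (the paper's $\mathcal{U}$ is precisely your $\sum_s\ket{s}\!\bra{s}\otimes R_s$ extended to act trivially on the element register), then apply $\mathcal{O}_1^\dagger,\ldots,\mathcal{O}_n^\dagger$ to uncompute. Your additional remarks on avoiding modular wraparound and on counting inverse oracle calls are correct and more explicit than the paper, but the argument is the same.
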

\begin{proof}
The implementation of the operator $D$ is given by the following three steps:
\begin{align*} \ket{i,0,0}&\xrightarrow{\mathcal{O}_1\cdots \mathcal{O}_n\otimes I}\ket{i,c_i,0}
    \xrightarrow{\ \mathcal{U}\ }\sqrt{\frac{c_i}{\nu}}\ket{i,c_i,0}+\sqrt{\frac{\nu-c_i}{\nu}}\ket{i,c_i,1} \\
    &\xrightarrow{\mathcal{O}_1^\dagger\cdots \mathcal{O}_n^\dagger\otimes I}\sqrt{\frac{c_i}{\nu}}\ket{i,0,0}+\sqrt{\frac{\nu-c_i}{\nu}}\ket{i,0,1}.
\end{align*}
The first and third steps can be realized by queries to $n$ machines. The unitary operator $\mathcal{U}$ is defined to satisfy
\begin{equation}\label{eq:definition of mathcal{U}}
	\mathcal{U}\ket{i,c,0}=\sqrt{\frac{c}{\nu}}\ket{i,c,0}+\sqrt{\frac{\nu-c}{\nu}}\ket{i,c,1},
\end{equation}
which is independent of the input. It is not hard to see
\[\braket{i,c,0|\mathcal{U}^\dagger\mathcal{U}|i',c',0}=\delta_{(i,c),(i',c')}.\] Thus $\mathcal{U}$ is a unitary operator.

\end{proof}
\begin{theorem}[Quantum sampling with sequential queries] \label{thm:query complexity within the sequential model}
    Given parameters $\varepsilon\in(0,1)$, and  $N,M,n,\nu$ as in \Cref{tb:parameters} satisfying $\nu\geq \frac{M}{N\varepsilon}$, there exists an algorithm for quantum sampling which makes $O\left(n\sqrt{\nu N/M}\right)$ queries and outputs the quantum sampling state $\ket{\psi}$ defined in \Cref{eq:psi}.
\end{theorem}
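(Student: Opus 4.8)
The plan is to reduce distributed quantum sampling to a single instance of amplitude amplification, with the distributing operator $D$ from \Cref{definition of D} playing the role of the state-preparation unitary. The starting observation is that $D$ already prepares the target inside a ``good'' subspace with a precisely known amplitude. Concretely, I would apply $D$ to the uniform superposition $\ket{\pi}\ket{0}$ and compute
\[
D\ket{\pi}\ket{0}=\frac{1}{\sqrt{N\nu}}\sum_{i\in[N]}\sqrt{c_i}\ket{i}\ket{0}+\frac{1}{\sqrt{N\nu}}\sum_{i\in[N]}\sqrt{\nu-c_i}\ket{i}\ket{1}.
\]
The component flagged by ancilla $0$ equals $\sqrt{M/(N\nu)}\,\ket{\psi}\ket{0}$, so the good-subspace amplitude is exactly $a=\sqrt{M/(N\nu)}$ and the normalized good state is precisely the desired $\ket{\psi}$ (tensored with $\ket{0}$). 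Thus preparing $\ket{\psi}$ amounts to amplifying the $\ket{0}$-flagged component up to amplitude $1$.

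Next I would set up the amplitude amplification iterate $Q=-A\,S_0\,A^{\dagger}\,S_{\chi}$, where $A=D\,U_\pi$ with $U_\pi$ the input-independent preparation of $\ket{\pi}$ from $\ket{0}$, where $S_\chi$ flips the sign of the ancilla value $0$, and where $S_0$ reflects about $\ket{0}$. The key accounting observation is that $S_0$ and $S_\chi$ act only on known registers and cost no queries, while each use of $A$ or $A^{\dagger}$ invokes $D$ or $D^{\dagger}$ exactly once; by \Cref{prop:implementation of D} each such invocation is $2n$ queries, so every Grover iterate costs $O(n)$ queries. Since the initial amplitude is $a=\sqrt{M/(N\nu)}$, the standard analysis of the rotation generated by $Q$ in the two-dimensional good/bad plane shows that $k=O(1/a)=O(\sqrt{N\nu/M})$ iterations rotate the state to amplitude $\Theta(1)$ on the good subspace, yielding a total query count of $k\cdot O(n)=O(n\sqrt{\nu N/M})$, which matches the claim.

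The main obstacle, and where the hypothesis $\nu\geq M/(N\varepsilon)$ enters, is controlling the discrepancy between the amplified state and the exact target $\ket{\psi}$. Because the number of Grover steps must be an integer, the angle $(2k+1)\theta$ (with $\sin\theta=a$) generally cannot be made exactly $\pi/2$, leaving a residual infidelity of order $\sin^2\theta=a^2=M/(N\nu)\le\varepsilon$. I would resolve this in one of two standard ways: either accept the resulting fidelity $\ge 1-\varepsilon$ directly, which the hypothesis $\nu \ge M/(N\varepsilon)$ is exactly tailored to deliver; or perform exact amplitude amplification by slightly lowering the initial amplitude from $a$ to $a'=\sin\!\big(\tfrac{\pi}{2(2k+1)}\big)\le a$ through one extra input-independent ancilla rotation, so that after the chosen integer $k=\lceil \pi/(4\theta)-1/2\rceil$ iterations the good amplitude reaches exactly $1$. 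The smallness condition $a^2\le\varepsilon<1$ guarantees that $a$ is bounded away from $1$, so that we are genuinely in the amplification regime and $k=\Theta(\sqrt{N\nu/M})$ stays the dominant cost. I expect this final exactness/fidelity step—together with verifying that the output lands cleanly in the ancilla-$0$ subspace—to be the most delicate part; given \Cref{prop:implementation of D}, the query bookkeeping and the reduction to amplitude amplification are otherwise routine.
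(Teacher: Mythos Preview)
Your proposal is correct and follows essentially the same approach as the paper: both apply $D$ to $\ket{\pi}\ket{0}$ to obtain the target state with known amplitude $\sqrt{M/(\nu N)}$ in the ancilla-$0$ subspace, then use amplitude amplification with $O(\sqrt{\nu N/M})$ iterations, each costing $O(n)$ queries via \Cref{prop:implementation of D}. The only minor difference is in how exactness is achieved: the paper invokes the phase-tuned version of exact amplitude amplification from \cite[Theorem~4]{Brassard_2002}, adjusting the phases $\varphi,\phi$ in the final iteration so that the rotation lands precisely on $\ket{\psi,0}$, whereas your option~2 (amplitude reduction via an extra ancilla rotation) is an alternative standard technique achieving the same effect with the same asymptotic cost; your option~1 (accepting fidelity $\ge 1-\varepsilon$) is weaker than what the theorem states, so the exact route is the one you want.
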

\begin{proof}
Recall the operator $D$ in~\Cref{definition of D}.
By direct calculation, we have 
\begin{equation}\label{definition of phi}
    \begin{aligned}
        D\ket{\pi,0}&=\frac{1}{\sqrt{N}}\sum_{i\in[N]}\left(\sqrt{\frac{c_i}{\nu}}\ket{i,0}+\sqrt{\frac{\nu-c_i}{\nu}}\ket{i,1}\right)\\
        &=\sqrt{\frac{M}{\nu N}}\ket{\psi,0}+\sqrt{1-\frac{M}{\nu N}}\ket{\psi^{\bot},1},
    \end{aligned}
\end{equation}
where $\ket{\psi}$ is the target state defined in~\cref{eq:psi}, and $\ket{\psi^{\bot}, 1}$ is a pure state orthogonal to $\ket{\psi, 0}$. Since the amplitude $\sqrt{\frac{M}{\nu N}}$ of the target state is known, we can applying zero-error amplitude amplification~\cite[Theorem 4]{Brassard_2002} with $O(\sqrt{\nu N/M})$ calls of $D$, then the final state will be $\ket{\psi,0}$.

Specifically, choose a unitary operation $S_\chi(\varphi)$ defined by
\[
    \ket{i}\ket{b} \longmapsto \begin{cases}
        \mathrm{e}^{\mathrm{i}\varphi} \ket{i}\ket{b} & b=0 \\
        \ket{i}\ket{b} & \text{otherwise}
    \end{cases}
\]
and $S_\pi(\phi)$ defined by
\[
    F\ket{i} \otimes \ket{b} \longmapsto \begin{cases}
        \mathrm{e}^{\mathrm{i}\phi} F\ket{i} \otimes \ket{b} & i,b=0 \\
        F\ket{i} \otimes \ket{b} & \text{otherwise}
    \end{cases}
\]
where $F$ is the quantum Fourier transforms used to prepare the initial state $\ket{\pi}=F\ket{0}$. Let $\theta = \arcsin \sqrt{\frac{M}{\nu N}}$, $\tilde{m} = \frac{\pi}{4\theta} - \frac{1}{2}$, and
\[
    Q(\varphi,\phi) = -D S_\pi(\phi) D^\dagger S_\chi(\varphi).
\]
After applying $Q(\pi,\pi)$ with $\varphi=\phi=\pi$ a number of $\lfloor \tilde{m} \rfloor$ times to $D\ket{\pi,0}$ and then applying $Q(\varphi,\phi)$ once, the final state is $\ket{\psi,0}$ if and only if~\cite[Equation 12]{Brassard_2002}
\[
    \cot \left((2\lfloor\tilde{m}\rfloor+1) \theta\right)=\mathrm{e}^{\mathrm{i}\varphi} \sin \left(2 \theta\right)\left(-\cos \left(2 \theta\right)+\mathrm{i} \cot (\phi / 2)\right)^{-1}.
\]
The right side can be any complex number of norm at most $\tan(2\theta)$. Since 
\[
    \frac{\pi}{2} - 2\theta \leq (2\lfloor\tilde{m}\rfloor+1)\theta \leq \frac{\pi}{2},
\]
the left side is not greater than $\tan(2\theta)$. Therefore, with an appropriate choice of $\phi$ and $\varphi$, we can obtain $\ket{\psi,0}$ as the final state.

\end{proof}

\subsection{Parallel queries}
Modifying the algorithm for the sequential model, we then give a sampling algorithm for the parallel model. 
We still adopt the sampling algorithm in \Cref{thm:query complexity within the sequential model} given by amplitude amplification. The only change is the implementation of the operator
\[D:\ket{i,0}\longmapsto \sqrt{\frac{c_i}{\nu}}\ket{i,0}+\sqrt{\frac{\nu-c_i}{\nu}}\ket{i,1}\]
to reduce the query complexity.

\begin{lemma}
    The operator $D$ can be implemented with four queries given by the parallel query oracle $\mathcal{O}$ defined in \Cref{eqn:parallelquery} and unitary operators independent of the input.
\end{lemma}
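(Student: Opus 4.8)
The plan is to mirror the sequential construction from \Cref{prop:implementation of D}, but to collapse its $n$ individual oracle calls into a single use of the parallel oracle $\mathcal{O}$, so that the cost becomes a constant independent of $n$. The only input-dependent ingredient needed is the multiplicity vector $(c_{i1},\dots,c_{in})$ attached to the element $i$; everything else (broadcasting the label, summation, the rotation $\mathcal{U}$, and the uncomputation) will be realized by unitaries that do not depend on the datasets, exactly as required by the statement.

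Concretely, I would proceed as follows. Starting from $\ket{i,0}$, first apply an input-independent fan-out unitary that copies the basis label $i$ into all $n$ qudits of the first register and sets the flag register to $\ket{1^n}$, producing $\ket{i,\dots,i}\ket{0^n}\ket{1^n}$ on the parallel registers; this is a legitimate unitary since it acts only on computational-basis labels. Next apply one parallel query $\mathcal{O}$, which by \Cref{eqn:parallelquery} loads $c_{ij}$ into the $j$-th slot of the second register, yielding $\ket{i,\dots,i}\ket{c_{i1},\dots,c_{in}}\ket{1^n}$. Then use an input-independent modular adder to accumulate $c_i=\sum_{j}c_{ij}$ into a designated slot; because $c_i\le\nu$ by the capacity assumption, this sum never wraps around modulo $\nu+1$, so the adder is well defined and reversible. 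Now apply the rotation $\mathcal{U}$ from \Cref{prop:implementation of D}, conditioned on the stored value $c_i$, to write the flag amplitude $\sqrt{c_i/\nu}\ket{0}+\sqrt{(\nu-c_i)/\nu}\ket{1}$. Finally, reverse the adder, apply the inverse parallel query to erase the second register, and undo the fan-out, returning every ancilla to its reference state; the net action on $\ket{i,0}$ is then precisely $D$ of \Cref{definition of D}.

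For the query count, the construction uses one forward parallel query to load the multiplicities and one inverse parallel query to erase them. If only the forward oracle $\mathcal{O}$ is available, its inverse can be realized without any extra input-dependent information by conjugating each slot with the input-independent modular negation $\ket{s}\mapsto\ket{-s\bmod(\nu+1)}$, that is $\mathcal{O}^\dagger=N\mathcal{O}N$, so the total stays a small constant — at most the four parallel queries claimed — and, crucially, independent of $n$. I expect the main obstacle to be bookkeeping rather than a conceptual one: one must verify that every ancilla register (the redundant copies of $i$, all second-register slots, and the flag bits) is perfectly uncomputed so that the map restricted to the logical pair $\ket{i,0}$ equals $D$ with no residual entanglement, and that the range condition $c_i\le\nu$ is genuinely used to keep all modular additions and the negation-based inversion faithful.
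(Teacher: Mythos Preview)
Your proposal is correct and follows the same construction as the paper: broadcast $i$ into all $n$ slots, load the individual multiplicities $c_{ij}$ with one parallel query, sum them to obtain $c_i$, apply the input-independent rotation $\mathcal{U}$ from \Cref{prop:implementation of D}, and uncompute everything. The one organizational difference is that the paper adheres rigidly to the three-step template of \Cref{prop:implementation of D}: it first realizes the map $\ket{i,0}\mapsto\ket{i,c_i}$ with all ancillas returned to zero (two parallel queries --- one to load the $c_{ij}$ and one to erase them after summing), then applies $\mathcal{U}$, then inverts the first step (two more queries), giving exactly four. You instead interleave $\mathcal{U}$ between a single load and a single erase, keeping the $c_{ij}$ registers live across the rotation; since $\mathcal{U}$ does not touch those registers this is perfectly valid, and your version actually uses only two parallel queries, which you then conservatively bound by four. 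Your aside that $\mathcal{O}^\dagger$ can be simulated as $N\mathcal{O}N$ via slotwise modular negation is correct but unnecessary here, as the model already counts applications of $\mathcal{O}^\dagger$ as ordinary queries.
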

\begin{proof}
    By the proof of \Cref{prop:implementation of D}, the operator $D$ can be implemented in three steps. The second step is a unitary operator $\mathcal{U}$ independent of the input. Here, we are going to realize the first and the third steps with the oracle $\mathcal{O}$. The first step is $\ket{i,0}\mapsto\ket{i,c_i}$. With a ancillary registers initialized as $\ket{0^n,0^n,0^n}$, this step can be completed by
    \begin{align*}
        \ket{i,0,0^n,0^n,0^n}&\xrightarrow{\qquad\ \ }\ket{i,0,i^n,0^n,1^n}\\
        &\xrightarrow{I\otimes I\otimes\mathcal{O}}\ket{i,0,i^n,c_{i1}c_{i2}\cdots c_{in},1^n}\\
        &\xrightarrow{\qquad\ \ }\ket{i,c_i,i^n,c_{i1}c_{i2}\cdots c_{in},1^n}\\
    &\xrightarrow{I\otimes I\otimes\mathcal{O}^\dagger}\ket{i,c_i,i^n,0^n,1^n}\rightarrow\ket{i,c_i,0^n,0^n,0^n}.
    \end{align*}
    This procedure only applies $\mathcal{O}$ twice. The third step is just the inverse of the first step, which can be completed similarly. 
\end{proof}

\begin{theorem}[quantum sampling with parallel queries]
    Given parameters $\varepsilon\in(0,1)$, and  $N,M,n,\nu$ as in \Cref{tb:parameters} satisfying $\nu\geq \frac{M}{N\varepsilon}$, there exists an algorithm for quantum sampling which makes $O\left(\sqrt{\nu N/M}\right)$ parallel queries and outputs the quantum sampling state $\ket{\psi}$.
\end{theorem}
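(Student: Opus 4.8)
The plan is to reuse, essentially verbatim, the zero-error amplitude-amplification procedure from the proof of \Cref{thm:query complexity within the sequential model}; the only ingredient that changes is the cost of implementing the distributing operator $D$. First I would observe that $D$ is defined purely by its action in \Cref{definition of D} and is therefore the \emph{same} operator regardless of whether it is realized by sequential or parallel queries. Consequently the key identity
\[
    D\ket{\pi,0}=\sqrt{\tfrac{M}{\nu N}}\ket{\psi,0}+\sqrt{1-\tfrac{M}{\nu N}}\ket{\psi^{\bot},1}
\]
from \Cref{definition of phi} still holds, so the target state $\ket{\psi}$ of \Cref{eq:psi} appears on the good subspace $\{\ket{\cdot,0}\}$ with the known amplitude $\sqrt{M/(\nu N)}$.

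Second, I would run the zero-error amplitude amplification exactly as before. Setting $\theta=\arcsin\sqrt{M/(\nu N)}$, the hypothesis $\nu\geq M/(N\varepsilon)$ with $\varepsilon<1$ forces $M/(\nu N)\leq\varepsilon<1$, so $\theta$ is bounded away from $\pi/2$ and $\tilde m=\pi/(4\theta)-\tfrac12=O(\sqrt{\nu N/M})$. After $\lfloor\tilde m\rfloor+1$ applications of $Q(\varphi,\phi)=-D\,S_\pi(\phi)\,D^\dagger S_\chi(\varphi)$ with the phases $\varphi,\phi$ chosen exactly as in the sequential proof, the state is \emph{exactly} $\ket{\psi,0}$. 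This consumes $O(\sqrt{\nu N/M})$ applications of $D$ and $D^\dagger$.

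Third, I would substitute the new cost of $D$: by the preceding lemma, each call to $D$ (and to $D^\dagger$) is realized with only four parallel queries to the oracle $\mathcal{O}$ of \Cref{eqn:parallelquery}, a constant independent of the number of machines $n$. Multiplying the two counts yields a total of $4\cdot O(\sqrt{\nu N/M})=O(\sqrt{\nu N/M})$ parallel queries, as claimed; this is precisely where the factor of $n$ present in the sequential bound is removed.

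The one step requiring genuine care --- and which I expect to be the main obstacle --- is confirming that the parallel realization of $D$ leaves no residual entanglement. The parallel implementation introduces ancillary registers (the $n$ copies $i^n$ of the index and the per-machine workspace holding $c_{i1}\cdots c_{in}$), and the amplitude-amplification analysis remains valid only if these ancillas are uncomputed cleanly back to $\ket{0}$, so that the composite operation acts as $D\otimes I_{\mathrm{anc}}$ on the subspace of interest. Since the preceding lemma performs this uncomputation exactly --- each forward query paired with its inverse $\mathcal{O}^\dagger$ --- the reduced operator on the sampling registers coincides with $D$, the identity of \Cref{definition of phi} transfers unchanged, and the output is the exact pure state $\ket{\psi}$.
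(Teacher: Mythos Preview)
Your proposal is correct and follows essentially the same approach as the paper: reuse the zero-error amplitude-amplification argument from \Cref{thm:query complexity within the sequential model} verbatim, substituting the constant-cost parallel implementation of $D$ from the preceding lemma in place of the $2n$-query sequential one. Your additional remark about clean uncomputation of the ancillary registers is a worthwhile sanity check that the paper leaves implicit.
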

\begin{proof}
	Similarly to the proof of \Cref{thm:query complexity within the sequential model}, the algorithm begins with the uniform superposition state $\ket{\pi}$ and applies zero-error amplitude amplification \cite[Theorem 4]{Brassard_2002} with $O(\sqrt{\nu N/M})$ calls of $D$, obtaining a final state $\ket{\psi,0}$.
\end{proof}


\section{Optimality}\label{sec:lowerbound}

In this section, we prove that our algorithms are optimal. Specifically, we prove a lower bound for the query complexity for the quantum sampling problem, which aligns with the complexity of our algorithms. Intuitively, we design a potential function $D_t$ that captures the “confidence of knowledge” that the coordinator has. We show that without sufficiently many queries, the potential function will be large, leading to a small correlation between an adversarially chosen ground truth and the output state. Its formal description is given as \Cref{lem:lower bound for D_t} and \Cref{lem:upper bound for D_t} for the sequential queries, and \Cref{lem:lower bound for D_t with in the parallel model} and \Cref{lem:upper bound for D_t within the parallel model} for the parallel queries.
 
Here, we consider a slightly more general setting where each database may have an independent maximum capacity $\kappa_j$. Thus for each $j\in[n]$, it holds  
\[\max_{i\in[N]}c_{ij}\leq\kappa_j\leq \nu.\] For the case where $\kappa_j$ is unknown, we can just use $\nu$ for $\kappa_j$.

And the $j$-th machine maintains quantum oracles $\mathcal{O}_j$ and $\hat{\mathcal{O}}_j$ with the functionality
\begin{align*}
\mathcal{O}_j\ket{i}\ket{s}&=\ket{i}\ket{(s+c_{ij}) \mod (\nu+1)},\\
    \hat{\mathcal{O}}_j \ket{i,s,b}&=\begin{cases}
        (\mathcal{O}_j\ket{i,s})\otimes\ket{b}, & b=1, \\
        \ket{i,s,b}, & b=0.
    \end{cases}
\end{align*}

In addition, although our proposed algorithm in \Cref{sec:alg} can output the target state $\ket{\psi}$ with a zero error, the proof here considers a larger class of algorithms, which only need to obtain an approximate state $\rho$ with a fidelity greater than a constant.


\begin{theorem}[Lower bound for the sequential model]\label{Lower bound for the query complexity}
   For any sampling algorithm in an oblivious communication model with sequential queries, if it satisfies that $F(\rho,\psi)>9/16$, where $\rho$ is the output state and $\psi$ is the quantum sampling state, then its query complexity is at least $\Omega\left(\sum_{j\in[n]}\sqrt{\frac{\kappa_j N}{M}}\right)$.
\end{theorem}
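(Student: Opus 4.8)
The plan is to adapt Zalka's sharp version of the hybrid (adversary) argument for Grover search to the distributed counting oracle, treating the $n$ machines as $n$ nearly independent search problems whose costs add because of obliviousness. First I would fix a hard family of inputs: for each machine $j$, let it store $\kappa_j$ copies of a single secret element $x_j\in[N]$, with the $x_j$ drawn independently and uniformly, and take the all-empty database as a reference. Then $M=\sum_j\kappa_j$ and the target state is $\ket{\psi^{\vec x}}=\sum_{j}\sqrt{\kappa_j/M}\,\ket{x_j}$, so the amplitude the coordinator must place on machine $j$'s secret is exactly $\sqrt{\kappa_j/M}$; sampling is thereby reduced to simultaneously "finding" all the secrets. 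The structural feature I would exploit throughout is that in the oblivious model the schedule, and hence the number $q_j$ of queries addressed to machine $j$, is fixed in advance and independent of the input, which is what lets me charge a separate cost to each machine and then sum.

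Next I would introduce the potential. Writing $\ket{\phi_t}$ for the state after $t$ steps on the reference input and $\ket{\phi_t^{j\to i}}$ for the state when machine $j$'s secret is set to $i$, I would set $D_t=\sum_{j}\frac1N\sum_{i\in[N]}\bigl\lVert\ket{\phi_t^{j\to i}}-\ket{\phi_t}\bigr\rVert^2$, the aggregate discrepancy between the trajectories the coordinator would follow under the different ground truths. The growth direction is the BBBV step analysis specialized to this oracle: since $\mathcal{O}_j$ and the empty-machine oracle differ only on the single label $\ket{i}$, flipping machine $j$'s secret to $i$ perturbs the state in one step by exactly $\sqrt2\,\lVert P_i\ket{\phi_t}\rVert$, where $P_i$ projects the first register onto value $i$ (note this is $\sqrt2$ independent of $\kappa_j$, since $1\le\kappa_j\le\nu$ makes $\kappa_j\not\equiv 0 \pmod{\nu+1}$). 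Summing this perturbation only over the $q_j$ steps addressed to machine $j$, applying Cauchy--Schwarz, and using $\sum_i\lVert P_i\ket{\phi_t}\rVert^2=1$ gives $\frac1N\sum_i\lVert\ket{\phi_T^{j\to i}}-\ket{\phi_T}\rVert^2\le 2q_j^2/N$, hence $D_T\le\sum_j 2q_j^2/N$.

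The success-forcing direction is where the fidelity hypothesis enters. I would first convert $F(\rho,\psi^{\vec x})>9/16$ into an amplitude statement: purifying the output it forces $\lvert\braket{\psi^{\vec x}|\phi^{\vec x}}\rvert>3/4$, and expanding $\braket{\psi^{\vec x}|\phi^{\vec x}}=\sum_j\sqrt{\kappa_j/M}\,\braket{x_j|\phi^{\vec x}}$ turns this into a lower bound on a $\sqrt{\kappa_j/M}$-weighted combination of the amplitudes the coordinator places on the secrets. On the other hand, the growth bound controls each such amplitude in expectation, $\mathbb{E}_{x_j}\bigl[\lvert\braket{x_j|\phi^{\vec x}}\rvert^2\bigr]=O(q_j^2/N)$. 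The aim is then to feed these two facts, together with obliviousness (fixed $q_j$) and the trivial cap $\lvert\braket{x_j|\phi^{\vec x}}\rvert\le1$, into a per-machine conclusion $q_j=\Omega\bigl(\sqrt{\kappa_j N/M}\bigr)$, whose sum over $j\in[n]$ is the theorem.

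The main obstacle is exactly this last combination, and it is why Zalka's sharp analysis rather than the crude $\ell_2$ hybrid estimate is needed. A single constant fidelity threshold controls only the weighted sum $\sum_j\sqrt{\kappa_j/M}\,\lvert\braket{x_j|\phi^{\vec x}}\rvert$, and substituting the growth bound naively yields only the non-additive consequence $\sum_j\kappa_j q_j^2=\Omega(MN)$, which a cost-concentrating choice of $q_j$ could satisfy well below $\sum_j\sqrt{\kappa_j N/M}$. The delicate point is that the cap $\lvert\braket{x_j|\phi^{\vec x}}\rvert\le1$ forbids any single machine (or small-weight subfamily) from supplying the threshold $9/16$ on its own, so the coordinator is forced to build near-target amplitude on enough machines that their individual $\sqrt{\kappa_j N/M}$ costs genuinely add. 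I expect the tight trigonometric bookkeeping behind Zalka's argument, applied segment by segment of the fixed schedule to each machine, to be precisely what pins down the constant $9/16$ and certifies that the additive lower bound, rather than merely its $\ell_2$ shadow, is unavoidable.
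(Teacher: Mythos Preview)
Your hybrid framework and your use of obliviousness to fix the schedule $(q_1,\dots,q_n)$ are right, but the hard-input family you chose cannot yield the additive per-machine bound, and the difficulty you flag in your last paragraph is not a matter of constants that Zalka's trigonometry would sharpen away---it is structural. With all machines populated simultaneously, machine $j$'s share of the target amplitude is $\sqrt{\kappa_j/M}$, and when that is small the algorithm may simply not learn $x_j$ and still clear the $9/16$ threshold. Concretely, take $\kappa_1=M-(n-1)$ and $\kappa_j=1$ for $j\ge2$: an algorithm that searches only for $x_1$ achieves fidelity close to $\kappa_1/M\approx 1$ with $q_j=0$ for $j\ge2$, so on \emph{this} family no argument can force $q_j\ge\Omega(\sqrt{N/M})$. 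Your cap $|\langle x_j\mid\phi^{\vec x}\rangle|\le1$ does not prevent a single heavy machine from supplying the whole $3/4$, and the ``$\sum_j\kappa_j q_j^2=\Omega(MN)$'' relation you derive is genuinely all that a single joint family can give.

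What the paper does differently is to exploit obliviousness not merely to sum costs, but to decouple the adversarial choices: for each fixed $k$ it builds a \emph{separate} hard-input family in which essentially all $M$ elements live on machine $k$ (so $M_k\ge\alpha M$ for a constant $\alpha$ and the other machines are empty), and runs the hybrid argument varying only $T_k$ against the reference where $T_k$ is deleted. Because machine $k$ now carries a constant fraction of the target, the standard potential analysis forces $t_k=\Omega(\sqrt{\kappa_kN/M})$ on this family. Since the schedule is fixed before any data is seen, this inequality on $t_k$ holds for the \emph{same} algorithm regardless of which family you feed it, and therefore holds for every $k$ simultaneously; summing gives the theorem. (When $M$ is too large to fit on machine $k$ alone, $\sqrt{\kappa_kN/M}=O(1)$ and the trivial $t_k\ge1$ already suffices.) Your writeup never varies the hard input with $k$, and that is the missing idea.
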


\begin{theorem}[Lower bound for the parallel model]\label{thm:parallel}
    For any sampling algorithm in an oblivious communication model with parallel queries, if it satisfies that $F(\rho,\psi)>9/16$, where $\rho$ is the output state and $\psi$ is the quantum sampling state, then its query complexity is at least $\Omega\left(\max_{j\in[n]}\sqrt{\frac{\kappa_j N}{M}}\right)$.
\end{theorem}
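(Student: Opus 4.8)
The plan is to prove the parallel lower bound by reducing it to a single adversarial machine and then running the same potential‑function (adversary) argument used for the sequential model in \Cref{Lower bound for the query complexity}. The crucial structural observation is that, by the definition of a parallel query in \Cref{eqn:parallelquery}, one parallel query invokes each oracle $\hat{\mathcal{O}}_j$ exactly once; hence after $T$ parallel queries every individual machine has been queried exactly $T$ times. Consequently, if I can exhibit a single machine on which $\Omega(\sqrt{\kappa_j N/M})$ queries are information‑theoretically necessary, then $T$ parallel queries are necessary, and taking the worst machine yields the claimed $\Omega(\max_j \sqrt{\kappa_j N/M})$.

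Concretely, I would fix $j^\star \in \arg\max_{j}\kappa_j$ (which maximizes $\sqrt{\kappa_j N/M}$ since $N,M$ are fixed) and choose the hard instance that places all $M$ elements on machine $j^\star$, leaving every other machine empty. Since the oracle of an empty machine adds $0$ and therefore acts as the identity on its count register, the corresponding qudits of each parallel query carry no information and can be simulated by the coordinator at no cost; thus the only adversarial content of a parallel query is its action on machine $j^\star$. On machine $j^\star$ I take the family of instances indexed by subsets $S\subseteq[N]$ with $|S| = M/\kappa_{j^\star}$, where each $i\in S$ has multiplicity $\kappa_{j^\star}$ and all other multiplicities are $0$. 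For each such $S$ the target state of \Cref{eq:psi} is the uniform superposition $\ket{\psi_S} = \frac{1}{\sqrt{|S|}}\sum_{i\in S}\ket{i}$, and these targets are pairwise nearly orthogonal because in the sparse regime $|S| = M/\kappa_{j^\star}\ll N$ a random pair satisfies $\mathbb{E}|S\cap S'| = |S|^2/N \ll |S|$.

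The heart of the argument is Zalka's potential, specialized to this family exactly as in \Cref{Lower bound for the query complexity}. Writing $\ket{\psi^S_t}$ for the coordinator's global state after $t$ queries on instance $S$ and $\ket{\phi_t}$ for the state produced on the all‑empty instance, I track $D_t = \mathbb{E}_S \norm{\ket{\psi^S_t} - \ket{\phi_t}}^2$, which starts at $D_0 = 0$ since the initial state and all input‑independent unitaries agree across instances. Two bounds then close the proof. First, a per‑query growth bound: because $\ket{\phi_{t}}$ is produced by the identity oracle and the oracle for $S$ differs from it only on first‑register values in $S$, the hybrid estimate $\norm{\ket{\psi_t^S}-\ket{\phi_t}} \le \norm{\ket{\psi_{t-1}^S}-\ket{\phi_{t-1}}} + 2\norm{\Pi_S\ket{\phi_{t-1}}}$ holds, where $\Pi_S$ projects the first register onto $S$ and the magnitude $\kappa_{j^\star}$ of the count shift is irrelevant. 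Averaging over $S$ via $\mathbb{E}_S\norm{\Pi_S\ket{\phi_{t-1}}}^2 = |S|/N$ and applying Cauchy--Schwarz over the $T$ steps gives $\sqrt{D_T} = O(T\sqrt{|S|/N})$. Second, a final‑value bound: the hypothesis $F(\rho,\psi_S) > 9/16$ forces $\norm{\ket{\psi_T^S}-\ket{\psi_S}} \le 1/\sqrt{2}$, while an averaging argument shows $\mathbb{E}_S|\langle \psi_S|\phi_T\rangle|^2 = O(|S|/N) = o(1)$, so $\ket{\phi_T}$ is $\Omega(1)$‑far from $\ket{\psi_S}$ for a $1-o(1)$ fraction of $S$ and hence $D_T = \Omega(1)$. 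Combining the two bounds yields $T = \Omega(\sqrt{N/|S|}) = \Omega(\sqrt{\kappa_{j^\star}N/M}) = \Omega(\max_j\sqrt{\kappa_j N/M})$.

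I expect the main obstacle to be making these two potential estimates rigorous in the exact model rather than the clean pure‑state Grover setting. The output is a mixed state $\rho$ measured only against a fidelity threshold, so the final‑value bound must be phrased through purifications or directly through fidelities rather than vector distances, and the constant $9/16=(3/4)^2$ must be calibrated so that the fidelity‑induced closeness to $\ket{\psi_S}$ is strictly below the pairwise target separation $\sqrt{2}$, leaving slack for the triangle inequality. A secondary point is verifying that intermediate measurements permitted in the oblivious model do not destroy the input‑independence of $\ket{\phi_t}$ that the growth bound relies on; this is precisely where the restriction to the oblivious communication model enters, and is the step I would scrutinize most carefully.
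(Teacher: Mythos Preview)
Your proposal is correct and follows essentially the same approach as the paper: you concentrate the hard instance on a single machine $j^\star$ (the paper does this for each $k$ and takes the max), run Zalka's potential $D_t$ against the reference trajectory obtained by emptying that machine, bound the per-query growth via the projector onto $\Supp(T_{j^\star})$, and lower-bound the terminal value by near-orthogonality of the targets---the paper's Lemmas~5.5 and~5.6 carry exactly this content for the parallel oracle. The two technical obstacles you flag are precisely the ones the paper handles: the purification step you anticipate is the paper's Lemma~B.1/Proposition~B.2 (replacing your $\ket{\psi_S}$ by $\ket{\tilde\psi^S}=\sum_i\sqrt{c_i/M}\,\ket{i}\ket{\xi_i^S}$ so that Uhlmann gives $\lvert\langle\psi_{T}^{S}\mid\tilde\psi^{S}\rangle\rvert>3/4$), and the measurement issue is dispatched by the deferred-measurement Lemma~5.1, which relies on obliviousness exactly as you suspected.
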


\subsection{Oblivious queries with measurement}

We start by observing that, given an oblivious algorithm with measurements for quantum sampling, one can construct an algorithm without any measurements that has the same query complexity. Therefore, we can focus on algorithms without measurements in the rest of the paper.

\begin{lemma}\label{lem:measurement_to_no_measurement}
	Let $\mathcal{A}$ be an oblivious algorithm with measurements for quantum sampling. Then there exists an algorithm $\mathcal{B}$ without measurements, which has the same query complexity and fidelity.
\end{lemma}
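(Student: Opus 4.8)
The plan is to convert measurements into unitaries using the standard principle of deferred measurement, but carefully adapted to the oblivious distributed setting where the query schedule is fixed in advance. The key observation is that in an oblivious model, the sequence of which machine is queried at which round is predetermined and independent of measurement outcomes, so deferring measurements does not alter the query schedule and hence preserves query complexity exactly.

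\medskip
\noindent\textbf{Approach.} First I would model $\mathcal{A}$ as an alternating sequence of unitaries, oracle queries, and (possibly partial) measurements. A general measurement can be realized by a unitary acting jointly on the system and a fresh ancilla register, followed by tracing out or recording the ancilla; this is just Stinespring/Naimark dilation. The plan is to replace each measurement by such a coherent unitary-plus-ancilla dilation, introducing a fresh ancilla for every measurement so that no information is ever discarded mid-computation. Crucially, because the algorithm is oblivious, the choice of the next query target is scheduled ahead of time and cannot depend on any classical outcome, so introducing these dilated unitaries in place of measurements leaves the query pattern untouched. The resulting algorithm $\mathcal{B}$ performs exactly the same oracle calls in the same order, giving identical query complexity.

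\medskip
\noindent\textbf{Key steps, in order.} I would (i) write $\mathcal{A}$ as $U_T \, \mathcal{M}_T \cdots \mathcal{M}_1 \, U_0$ where each $U_t$ is a unitary interleaved with oracle queries and each $\mathcal{M}_t$ is a measurement channel; (ii) dilate each $\mathcal{M}_t$ into a unitary $V_t$ on the system together with a fresh ancilla, so that the joint state after $V_t$, when the ancilla is traced out, reproduces the post-measurement ensemble; (iii) assemble $\mathcal{B} = U_T \, V_T \cdots V_1 \, U_0$ acting on the enlarged Hilbert space with all ancillas, performing no actual measurements; (iv) verify the query schedule of $\mathcal{B}$ coincides with that of $\mathcal{A}$ round by round, using obliviousness to argue the schedule never consulted an outcome; and (v) show the fidelity is preserved by tracing out the ancilla registers of $\mathcal{B}$'s output, recovering exactly $\mathcal{A}$'s output state $\rho$, so $F(\rho, \psi)$ is unchanged.

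\medskip
\noindent\textbf{Main obstacle.} The subtle point is step (v): I must confirm that purifying the measurements and then reducing to the original system register yields the identical mixed state $\rho$ that $\mathcal{A}$ outputs, so that fidelity with $\psi$ is genuinely preserved rather than merely bounded. For an oblivious algorithm this is clean because no branching on outcomes occurs, so the final reduced state is a single well-defined channel applied to the input; the dilation reproduces that channel on the nose. I would emphasize that obliviousness is exactly what rules out the problematic case where the circuit structure after round $t$ depends on the outcome of $\mathcal{M}_t$—in a non-oblivious model deferring measurements could change which oracle is queried, which is precisely why the excerpt flags the non-oblivious case as open. Thus the argument hinges on invoking obliviousness to decouple the query schedule from measurement outcomes, after which deferred measurement applies verbatim.
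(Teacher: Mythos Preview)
Your proposal is correct and follows essentially the same approach as the paper: use obliviousness to argue the query schedule is independent of measurement outcomes, then apply the deferred-measurement principle (via dilation into ancillas) to coherently simulate the measurements without altering either the query pattern or the reduced output state. The paper's proof differs only in presentation---it first consolidates to a single projective measurement at the end and then writes down an explicit coherent-recording unitary rather than invoking Stinespring abstractly---but the content is the same.
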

It can be proved by the deferred measurement principle with minor modification. The proof is deferred in \Cref{sec:proofof1}.

\subsection{Hard inputs}

We prove the optimality via the quantum adversary argument~\cite{AMBAINIS2002750}. To this end, we describe the hard inputs in this subsection.

Let $t_k$ be the number of times the oracle $\hat{\mathcal{O}}_k$ and $\hat{\mathcal{O}}_k^\dagger$ is applied; then the query complexity  is $\sum_{k\in[n]}t_k$. We prove a lower bound for each $t_k$. Notice that for an oblivious model, the order of queries is independent of the inputs. Thus, this lower bound applies to all inputs, which implies that the summation of the individual lower bounds is a lower bound for the total query complexity.

Let $\sigma$ be a permutation on $[N]$ and $S\subseteq[N]$ be a subset. We say $\sigma$ is {\em order-preserving} for $S$ if for any $r,t\in S$, it holds that $\sigma(r)<\sigma(t)$ if and only if $r<t$. 

We fix any integer $k$ for the rest of this section.
Given a sequence of multisets $T=(T_1, T_2, \cdots, T_n)$ and an order-preserving $\sigma$ for $\Supp(T_k)$,
we permute $T_k$ by $\sigma^{-1}$ to obtain a new sequence of multisets $T'=(T_1, T_2, \cdots,T_k',\cdots, T_n)$. Specifically, define \[
c'_{ij}=\begin{cases}
	c_{ij}, & j\neq k,\\
	c_{\sigma^{-1}(i)j}, & j=k,
\end{cases}
\]
where $c_{ij}$ are the multiplicities for $T_j$. Notice that $c'_{ij}$ uniquely define a sequence of $\{T_j'\}$.
We write $\tilde{\sigma}^k(T): = \{T_j'\}$, and call $\tilde{\sigma}_k$ as the $\sigma$-induced permutation.

\begin{definition}[Hard input condition]\label{def:hardinputcond}
    Given $k\in [n]$, constants $\alpha,\beta\in (0,1]$, and a sequence of multisets $T=(T_1, T_2, \cdots, T_n)$ distributed on $n$ machines, then $T$ satisfies the hard input condition  if 
    \begin{equation}\label{eq:constraint on hard input}
	M_k\geq \alpha M, \quad M_k/m_k\geq \beta \kappa_k, \quad \max_{i\in[N],j\neq k}c_{ij}+\max_{i\in[N]}c_{ik}\leq \nu,
\end{equation}
where $M_k=|T_k|$, $m_k=|\Supp(T_k)|$ and $c_{ij}$ is the multiplicity of $i$ in $T_j$. 
\end{definition}
\begin{definition}[Hard inputs]\label{def:hardinput}
Given $k\in [n]$, constants $\alpha,\beta\in (0,1]$, and a sequence of multisets $T=(T_1, T_2, \cdots, T_n)$ satisfying the hard input condition in~\Cref{def:hardinputcond},
the collection of hard inputs for the $k$-th machine is defined as 
\[\mathcal{T}:=\{\tilde{\sigma}^k(T) : \sigma~\text{is order-preserving for }\Supp(T_k)\},\]
where $\tilde{\sigma}^k$ is the $\sigma$-induced permutation defined above.
\end{definition}

The last condition in \Cref{def:hardinputcond} guarantees $\tilde{\sigma}_k(T)\in\mathcal{T}$ is still a hard input with multiplicities not greater than $\nu$.

The following lemma gives the size of the collection of hard inputs.
\begin{lemma}\label{lem:size_of_T}
	For any $k\in[N]$, constants $\alpha,\beta\in (0,1]$, and a sequence of multisets $T=(T_1, T_2, \cdots, T_n)$ satisfying the hard input condition, let $\mathcal{T}$ be the collection of hard inputs as given in \Cref{def:hardinput}. It holds that $|\mathcal{T}|=\binom{N}{m_k}$ with $m_k:=|\operatorname{Supp}(T_k)|$.
\end{lemma}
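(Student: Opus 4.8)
The plan is to exhibit an explicit bijection between $\mathcal{T}$ and the family $\binom{[N]}{m_k}$ of $m_k$-element subsets of $[N]$, which immediately yields $|\mathcal{T}|=\binom{N}{m_k}$. The guiding observation is that every element $\tilde{\sigma}^k(T)\in\mathcal{T}$ agrees with $T$ in all components except the $k$-th, so counting $\mathcal{T}$ reduces to counting the distinct multisets $T_k'$ that arise as $\sigma$ ranges over the order-preserving permutations for $\Supp(T_k)$. Note that the specific hard input conditions of \Cref{def:hardinputcond} play no role here; the count is purely combinatorial and depends only on $m_k=|\Supp(T_k)|$.

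First I would record the rigidity imposed by order-preservation. Write $\Supp(T_k)=\{a_1<a_2<\cdots<a_{m_k}\}$ with positive multiplicities $c_{a_\ell,k}$. Since $\sigma$ is order-preserving on $\Supp(T_k)$, its images satisfy $\sigma(a_1)<\cdots<\sigma(a_{m_k})$; setting $b_\ell=\sigma(a_\ell)$ and using the definition $c'_{ik}=c_{\sigma^{-1}(i),k}$, the multiset $T_k'$ places multiplicity $c_{a_\ell,k}$ at position $b_\ell$ and $0$ elsewhere. Thus $\Supp(T_k')=B:=\sigma(\Supp(T_k))$ exactly (no cancellation, as the $c_{a_\ell,k}$ are positive), and reading the multiplicities of $T_k'$ along $b_1<\cdots<b_{m_k}$ reproduces the fixed sequence $c_{a_1,k},\ldots,c_{a_{m_k},k}$. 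Consequently $T_k'$ is completely determined by the subset $B$: all remaining freedom in $\sigma$, namely its action outside $\Supp(T_k)$ and the relative ordering within, is collapsed.

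This is the crux of the argument, and I expect it to be the only real obstacle: one must verify that order-preservation forces $\sigma$ to send the $\ell$-th smallest support element to the $\ell$-th smallest element of $B$, so that the entire multiplicity profile is pinned down by $B$ alone. Once this is established, the bijection follows routinely. The map $\tilde{\sigma}^k(T)\mapsto\Supp(T_k')$ is well defined into $\binom{[N]}{m_k}$ and injective, since two elements of $\mathcal{T}$ coincide iff their $k$-th multisets do, which by the previous paragraph holds iff they share the same support. For surjectivity, given any $B=\{b_1<\cdots<b_{m_k}\}\subseteq[N]$ I would take any permutation $\sigma$ with $\sigma(a_\ell)=b_\ell$ for all $\ell\in[m_k]$, extended arbitrarily on the complement; such $\sigma$ is order-preserving for $\Supp(T_k)$ and realizes $\Supp(T_k')=B$. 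Combining injectivity and surjectivity gives $|\mathcal{T}|=\binom{N}{m_k}$.
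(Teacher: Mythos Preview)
Your proposal is correct and follows essentially the same approach as the paper: both arguments show that $\tilde{\sigma}^k(T)$ depends only on the image set $\sigma(\Supp(T_k))$, and then count these $m_k$-subsets of $[N]$. The only cosmetic difference is that you package this as an explicit bijection $\mathcal{T}\to\binom{[N]}{m_k}$ via $\tilde{\sigma}^k(T)\mapsto\Supp(T_k')$, whereas the paper establishes the same injectivity by a short contradiction argument showing that $\tilde{\sigma}^k_1(T)=\tilde{\sigma}^k_2(T)$ forces $\sigma_1|_{\Supp(T_k)}=\sigma_2|_{\Supp(T_k)}$.
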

\begin{proof}
	Let $S=\Supp(T_k)$. A claim should be concluded to calculate the size of $\mathcal{T}$.
	
	For $\sigma_1,\sigma_2$ that are order-preserving for $S$, we claim that $\tilde{\sigma}^k_1(T)=\tilde{\sigma}^k_2(T)$ if and only if $\sigma_1(i)=\sigma_2(i)$ for every $i\in S$. The sufficiency is obvious. For the necessity, we prove by contradiction. Suppose $\tilde{\sigma}^k_1(T)=\tilde{\sigma}^k_2(T)$ and $\sigma_1(i_0)\neq\sigma_2(i_0)$ for some $i_0\in S$. Consider the multiplicity $c'_{\sigma_1(i_0)k}$ for $T':=\tilde{\sigma}^k_1(T)=\tilde{\sigma}^k_2(T)$, it follows that $c_{i_0k}=c_{\sigma_2^{-1}(\sigma_1(i_0))k}>0$. Denote $\sigma_2^{-1}(\sigma_1(i_0))$ by $i_1$. Since $\sigma_1(i_0)\neq\sigma_2(i_0)$, it holds that $i_0\neq i_1$. If $i_0<i_1$, then the order-preserving property of $\sigma_1,\sigma_2$ implies
	\begin{align*}
		|\{i>\sigma_1(i_0)|c'_{ik}>0\}| &= |\{i>i_0|c_{ik}>0\}| \\
		&> |\{i>i_1|c_{ik}>0\}| \\
		&=|\{i>\sigma_2(i_1)=\sigma_1(i_0)|c'_{ik}>0\}|,
	\end{align*} 
	The inequality holds because $i_1$ belongs to the former set but not the latter set. The first expression and the last expression are the same, which leads to a contradiction. It is similar for the case of $i_0>i_1$.
	
	With this claim, the size $|\mathcal{T}|$ equals the count of order-preserving permutations that act differently on $S$. Finding such a permutation is equivalent to choosing $|S|$ elements in $[N]$ as the image set $\sigma(S)$. Hence, we have
	$|\mathcal{T}|=\binom{N}{|S|}=\binom{N}{m_k}$,
	as $|S|=|\operatorname{Supp}(T_k)|=:m_k$.
\end{proof}

\subsection{Lower bound on sequential queries}
We are now ready to derive a lower bound on the query complexity. As argued above, we can bound each $t_k$ independently. To do so,
 for every $k\in [n]$, we consider a collection of hard inputs $\mathcal{T}$ generated by an input $T$ with respect to $k$, as given in \Cref{def:hardinput}.

Given an input $T$, let $\ket{\psi^T_t}$ be the state after $t$ calls to the oracle. It can be expressed as 
\begin{equation}\label{eqn:psiTt}
    \ket{\psi_{t}^T}=U_{t}O_t U_{t-1}O_{t-1}\cdots U_1 O_{1}U_0\ket{0},
\end{equation}
where $O_1,\cdots,O_t$ are either $\hat{\mathcal{O}}_k\otimes I$ or $\hat{\mathcal{O}}_k^\dagger\otimes I$ with identity operator $I$ on the registers that $\hat{\mathcal{O}}_k$ does not act on, and $U_0,\cdots,U_t$ are unitary operators that are independent of $T_k$, the datasets on the $k$-th machine. 
We consider an input $\tilde{T}$ obtained from $T$ by removing the dataset on the $k$-th machine. That is, we replace $T_k$ with an empty set, and the datasets on other machines are the same as those in $T$. Notice that, each of the $\{O_i\}$ is an identity operator if the dataset on the $k$-th machine is empty. Thus, for the dataset $\tilde{T}$, the state after $t$ calls of the oracle is 
\begin{equation}\label{eq:psi_t}
	\ket{\psi_{t}}=U_{t} U_{t-1}\cdots U_1U_0\ket{0}.
\end{equation}
We introduce a potential function as follows:
\begin{equation}\label{eqn:DT}
   D_t=\frac{1}{|\mathcal{T}|}\sum_{T\in\mathcal{T}}\left\| \ket{\psi_t^T}-\ket{\psi_t}\right\|^2. 
\end{equation}

We note that for our collection $\mathcal{T}$, while the state $\ket{\psi_{t}^T}$ depends on the specific choice of $T\in\mathcal{T}$, the state $\ket{\psi_{t}}$ remains the same regardless of $T$. To see this, note that for any pair $T,T' \in \mathcal{T}$, $T$ and $T'$ only differs in $T_k$.  Further, due to the obliviousness of queries, the oracles not involving $\hat{\mathcal{O}}_k$ or $\hat{\mathcal{O}}_k^\dagger$ remain the same for $\tilde{T}$ (and hence for $T$ and $T'$), regardless of $T_k$. Therefore, any input $T$ and $T'$ that only differs in $T_k$ share the same state $\ket{\psi_{t}}$.

To elaborate our proof ideas for the lower bound, we first note that $\ket{\psi_{t_k}^T}$ is the final state of the algorithm, and it must hold that the state of the first register in $\ket{\psi_{t_k}^T}$ approximates the goal state $\ket{\psi}$. Informally speaking, since the goal state $\ket{\psi}$ also depends on $T_k$, which is different for different $T \in \mathcal{T}$, any sampling algorithm must bring a large enough difference between $\ket{\psi^T_{t_k}}$ and $\ket{\psi_{t_k}}$ to get $\ket{\psi}$ for every $T\in\mathcal{T}$. Through this intuition, uniformly picking a $T\in\mathcal{T}$, we consider the expectation of the variation 
\begin{equation}\label{eqn:Dt}
    D_t=\mathbb{E}_\mathcal{T}\left[\left\|\ket{\psi_t^T}-\ket{\psi_t}\right\|^2\right].
\end{equation}
This idea is formally depicted in \Cref{lem:lower bound for D_t}.
\begin{lemma}\label{lem:lower bound for D_t}
    Let $\mathcal{T}$ be the collection of hard inputs for the $k$-th machine as in \Cref{def:hardinput}. Let $\alpha,\beta$ be the constants defined in \Cref{def:hardinput}. Suppose the fidelity between the output state $\rho$ and the quantum sampling state $\psi$ defined in \Cref{eq:psi} satisfies $F(\rho,\psi)\geq(1-\epsilon)^2>9/16$ with $\epsilon\geq 0$. If $M<\beta^2\kappa_k N/16$ and $\alpha>4\epsilon$, then the expectation of the variation defined in \Cref{eqn:Dt} is bounded by 
    $D_t\geq C\frac{M_k}{M}$ 
   for some constant $C$ dependent to $\alpha$ and $\epsilon$.
\end{lemma}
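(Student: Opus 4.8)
The plan is to read the potential at the terminal step $t=t_k$ as a statement about how spread out the algorithm's final states must be, and to drive that spread from the correctness requirement combined with the mutual near-orthogonality of the target states $\{\ket{\psi^T}\}_{T\in\mathcal T}$. Throughout I would write $\ket{\Phi^T}:=\ket{\psi_{t_k}^T}$ for the final global state on input $T$, $\rho^T$ for its reduction to the $N$-dimensional sampling register, and $\ket{\Phi}:=\ket{\psi_{t_k}}$ for the common reference obtained when the $k$-th machine is empty (the same for every $T\in\mathcal T$, as noted after \Cref{eqn:Dt}). The target is $\ket{\psi^T}=\frac1{\sqrt M}\sum_i\sqrt{c_i^T}\ket{i}$ as in \Cref{eq:psi}, and the fidelity hypothesis $F(\rho^T,\psi^T)\ge(1-\epsilon)^2$ holds for every $T$.

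First I would convert the fidelity hypothesis into a geometric statement. Let $P^T=\ket{\psi^T}\bra{\psi^T}\otimes I$ project onto the target subspace. Since $\ket{\psi^T}$ is pure, $\norm{P^T\ket{\Phi^T}}^2=F(\rho^T,\psi^T)\ge(1-\epsilon)^2$, so the component of $\ket{\Phi^T}$ outside the target subspace has squared norm at most $2\epsilon$. The same projector applied to a \emph{different} input computes the wrong-target fidelity, $\norm{P^{T'}\ket{\Phi^T}}^2=F(\rho^T,\psi^{T'})$. Using $\norm{\rho^T-\psi^T}_{\mathrm{tr}}\le\sqrt{2\epsilon}$ to perturb the correct target into the wrong one gives $F(\rho^T,\psi^{T'})\le\abs{\braket{\psi^T|\psi^{T'}}}^2+\sqrt{2\epsilon}$, and combining with the $2\epsilon$ leakage yields the off-diagonal bound $\abs{\braket{\Phi^T|\Phi^{T'}}}\le\abs{\braket{\psi^T|\psi^{T'}}}+O(\epsilon^{1/4})$ for every pair $T,T'$.

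Next I would run an averaging argument against the common reference. Writing $\bar v:=\mathbb{E}_{T\in\mathcal T}\ket{\Phi^T}$, one has $D_{t_k}=2-2\,\mathrm{Re}\braket{\Phi|\bar v}\ge 2-2\norm{\bar v}$, while $\norm{\bar v}^2=\mathbb{E}_{T,T'}\braket{\Phi^T|\Phi^{T'}}\le\mathbb{E}_{T,T'}\abs{\braket{\psi^T|\psi^{T'}}}+O(\epsilon^{1/4})$ by the previous step. Thus everything reduces to showing the targets are mutually nearly orthogonal on average, $\mathbb{E}_{T,T'}\abs{\braket{\psi^T|\psi^{T'}}}\le 1-\Omega(M_k/M)$, which immediately gives $D_{t_k}\ge 2-2\sqrt{1-\Omega(M_k/M)+O(\epsilon^{1/4})}\ge C\frac{M_k}{M}$. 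This last step is where the hypotheses enter: $M_k/M\ge\alpha$ keeps the separation bounded below by a constant, and $\alpha>4\epsilon$ makes $\epsilon$ small enough that the $O(\epsilon^{1/4})$ error is absorbed without destroying the positive bound.

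The main obstacle is the target-spread estimate. Here I would use the exact identity $1-\braket{\psi^T|\psi^{T'}}=\frac1{2M}\sum_i(\sqrt{c_i^T}-\sqrt{c_i^{T'}})^2$, valid because both targets are unit vectors, and restrict the sum to the symmetric difference of the two machine-$k$ supports $A=\sigma(\Supp(T_k))$ and $A'=\sigma'(\Supp(T_k))$, where each term compares a positive multiplicity against the shared background. By the bijection established in \Cref{lem:size_of_T}, drawing $T\in\mathcal T$ uniformly is the same as drawing the image set $A$ uniformly among $m_k$-subsets of $[N]$; the conditions $M<\beta^2\kappa_k N/16$ and $M_k/m_k\ge\beta\kappa_k$ from \Cref{def:hardinputcond} force $m_k<\beta N/16$, so a linearity-of-expectation computation shows the machine-$k$ mass landing on $A\cap A'$ is at most $\tfrac{m_k}{N}M_k<\tfrac{\beta}{16}M_k$. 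Hence in expectation at least a $\tfrac{15}{16}$-fraction of the mass $M_k$ sits on the symmetric difference and contributes, yielding $\mathbb{E}_{T,T'}[1-\braket{\psi^T|\psi^{T'}}]\ge\Omega(M_k/M)$. The most delicate point I expect is bounding $(\sqrt{c_i^T}-\sqrt{c_i^{T'}})^2$ from below by a constant multiple of the machine-$k$ multiplicity in the presence of the shared contribution $d_i=\sum_{j\ne k}c_{ij}$; the capacity condition $\max_{i,j\ne k}c_{ij}+\max_i c_{ik}\le\nu$ bounds $c_i^T\le\nu$ and the average-multiplicity condition keeps this ratio under control, and in the disjoint regime where $d_i=0$ on $A$ the estimate is immediate.
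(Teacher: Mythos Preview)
Your approach is genuinely different from the paper's, but it has a quantitative gap that prevents it from proving the lemma as stated.

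The paper does not bound pairwise overlaps $\braket{\Phi^T\mid\Phi^{T'}}$ at all. Instead it introduces, via Uhlmann's theorem, a purification $\ket{\tilde\psi^T}$ of the target $\ket{\psi^T}$ satisfying $\abs{\braket{\Phi^T\mid\tilde\psi^T}}=\sqrt{F(\rho^T,\psi^T)}$, and splits $D_{t_k}$ by a triangle inequality through this intermediate point: $D_{t_k}\ge(\sqrt{F_{t_k}}-\sqrt{E_{t_k}})^2$ with $E_{t_k}=\mathbb{E}_T\norm{\ket{\Phi^T}-\ket{\tilde\psi^T}}^2\le 2\epsilon$ and $F_{t_k}=\mathbb{E}_T\norm{\ket{\Phi}-\ket{\tilde\psi^T}}^2$. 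The heart of the argument is then showing $F_{t_k}\ge M_k/(2M)$, which is done by bounding $\sum_T\abs{\braket{\Phi\mid\tilde\psi^T}}$ directly: one introduces a unitary that separates the machine-$k$ part of $\ket{\tilde\psi^T}$ from the rest, yielding two terms bounded respectively by $\sqrt{(M-M_k)/M}$ and $m_k\sqrt{\kappa_k/(MN)}$, the second of which is absorbed using $M<\beta^2\kappa_kN/16$ and $M_k/m_k\ge\beta\kappa_k$.

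The issue with your route is the scaling of the error. Your off-diagonal step gives $\abs{\braket{\Phi^T\mid\Phi^{T'}}}\le\abs{\braket{\psi^T\mid\psi^{T'}}}+O(\epsilon^{1/4})$, so even granting the target-spread estimate $\mathbb{E}\abs{\braket{\psi^T\mid\psi^{T'}}}\le 1-c_1 M_k/M$, you obtain $D_{t_k}\ge c_1 M_k/M - c_2\,\epsilon^{1/4}$. For this to be positive you need $\alpha\gtrsim\epsilon^{1/4}$, whereas the hypothesis only provides $\alpha>4\epsilon$; for small $\epsilon$ the latter is much weaker (e.g.\ $\epsilon=10^{-8}$, $\alpha=5\cdot 10^{-8}$ satisfies the hypothesis but kills your bound). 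The paper's decomposition avoids this because both terms appear under square roots, so the comparison is $\sqrt{M_k/M}$ versus $\sqrt{\epsilon}$, and $\alpha>4\epsilon$ is exactly the right inequality. Even replacing your $\epsilon^{1/4}$ by the tighter $\sqrt{\epsilon}$ obtainable from a direct decomposition of $\ket{\Phi^T}$, the mismatch persists: you are comparing $\alpha$ with $\sqrt\epsilon$, not $\sqrt\alpha$ with $\sqrt\epsilon$.

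A secondary gap is the target-spread estimate itself. You acknowledge that bounding $(\sqrt{c_i^T}-\sqrt{c_i^{T'}})^2$ from below on the symmetric difference is delicate when the shared background $d_i=\sum_{j\ne k}c_{ij}$ is nonzero, but the capacity condition only controls individual $c_{ij}$'s, not the sum $d_i$, so the sketch ``the average-multiplicity condition keeps this ratio under control'' is not yet an argument. The paper sidesteps this entirely: its bound on $F_{t_k}$ uses the structure of $\ket{\tilde\psi^T}$ against a single fixed state, not pairwise target overlaps, and the machine-$k$ contribution is handled by a counting argument over $\mathcal T$ rather than a pointwise square-root comparison.
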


\blue{The increment of $D_t$ only comes from the application of $\hat{\mathcal{O}}_k$ and $\hat{\mathcal{O}}_k^\dagger$. We can estimate the maximum variation of $D_t$ with a single oracle call, leading to an upper bound of $D_t$ with respect to the time $t$ of oracle calls, as stated in \Cref{lem:upper bound for D_t}. To achieve a value of $D_t$ that is at least as large as the lower bound given in \Cref{lem:lower bound for D_t}, it is necessary for $t$ to be sufficiently large to ensure that the upper bound is not smaller than the lower bound.}
\begin{lemma}\label{lem:upper bound for D_t}
    Let $\mathcal{T}$ be the collection of hard inputs for the $k$-th machine as in \Cref{def:hardinput}. For $t\leq t_k$, it holds that $D_{t_k}\leq 4\frac{m_k}{N}t^2$.
\end{lemma}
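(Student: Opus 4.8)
The plan is to bound the potential directly from the definition $D_t=\mathbb{E}_{\mathcal{T}}\!\left[\norm{\ket{\psi_t^T}-\ket{\psi_t}}^2\right]$ by tracking how the difference vector $\ket{\delta_t^T}:=\ket{\psi_t^T}-\ket{\psi_t}$ grows with each oracle call. Using $\ket{\psi_t^T}=U_tO_t\ket{\psi_{t-1}^T}$ and $\ket{\psi_t}=U_t\ket{\psi_{t-1}}$ (recall that $\ket{\psi_{t-1}}$ is the same for every $T\in\mathcal{T}$, as noted after \Cref{eqn:DT}) and inserting the hybrid term $U_tO_t\ket{\psi_{t-1}}$, I would derive the recursion
\[
\ket{\delta_t^T}=U_tO_t\ket{\delta_{t-1}^T}+U_t(O_t-I)\ket{\psi_{t-1}}.
\]
Since $U_t$ and $O_t$ are unitary, the triangle inequality gives $\norm{\ket{\delta_t^T}}\le \norm{\ket{\delta_{t-1}^T}}+\norm{(O_t-I)\ket{\psi_{t-1}}}$, with $\norm{\ket{\delta_0^T}}=0$. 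This reduces the entire problem to controlling the per-step increment $\norm{(O_t-I)\ket{\psi_{t-1}}}$ on average over $T$.

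For the increment, the key observation is that $O_t$ is $\hat{\mathcal{O}}_k$ or $\hat{\mathcal{O}}_k^\dagger$, which acts as the identity on every basis state $\ket{i,s,b}$ with $b=0$ or with $i\notin\Supp(T_k')$ (there $c_{ik}=0$). Hence $(O_t-I)$ annihilates everything outside the subspace spanned by basis states with $i\in\Supp(T_k')$; letting $P^T$ be the projector onto this subspace and using the crude operator bound $\norm{\hat{\mathcal{O}}_k^\dagger-I}\le 2$ (and likewise for $\hat{\mathcal{O}}_k$), I get
\[
\norm{(O_t-I)\ket{\psi_{t-1}}}^2=\norm{(O_t-I)P^T\ket{\psi_{t-1}}}^2\le 4\norm{P^T\ket{\psi_{t-1}}}^2\le 4\sum_{i\in\Supp(T_k')}P_i,
\]
where $P_i$ is the marginal probability of measuring $i$ in the first register of $\ket{\psi_{t-1}}$. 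The crucial point is that $\ket{\psi_{t-1}}$, and hence every $P_i$, is independent of $T$; the only $T$-dependence is in which index set $\Supp(T_k')$ we sum over. By \Cref{lem:size_of_T} these supports range bijectively over all $m_k$-subsets of $[N]$, so each fixed $i$ lies in a fraction $m_k/N$ of them, and since $\sum_i P_i=1$ this yields the step bound $\mathbb{E}_{\mathcal{T}}[\norm{(O_t-I)\ket{\psi_{t-1}}}^2]\le 4m_k/N$.

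Finally I would combine the steps using the triangle inequality in $L^2(\mathcal{T})$ (Minkowski). Unfolding the recursion gives $\norm{\ket{\delta_t^T}}\le\sum_{\tau=1}^t\norm{(O_\tau-I)\ket{\psi_{\tau-1}}}$ pointwise in $T$, and taking $L^2$-norms over $\mathcal{T}$ gives $\sqrt{D_t}\le\sum_{\tau=1}^t\sqrt{\mathbb{E}_{\mathcal{T}}[\norm{(O_\tau-I)\ket{\psi_{\tau-1}}}^2]}\le 2t\sqrt{m_k/N}$. Squaring yields $D_t\le 4\frac{m_k}{N}t^2$ for all $t\le t_k$, which is the claimed bound (the statement's $D_{t_k}$ should read $D_t$).

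I expect the main obstacle to be the per-step/averaging step rather than the recursion. The substance lies in pinning down that $(O_t-I)$ is localized on $\Supp(T_k')$ and that the adversary family is engineered so that averaging over $\mathcal{T}$ replaces this localization by each element's marginal probability, which then sum to the total mass $1$ scaled by $m_k/N$; this is precisely where the uniform-over-$m_k$-subsets structure of $\mathcal{T}$ from \Cref{lem:size_of_T} is essential. The recursion and the Minkowski combination are then routine.
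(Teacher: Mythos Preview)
Your proposal is correct and follows essentially the same approach as the paper's proof: both isolate the per-step increment $\norm{(O_t-I)\ket{\psi_{t-1}}}$, bound its average over $\mathcal{T}$ by $4m_k/N$ using that the supports of $T_k'$ range uniformly over $m_k$-subsets of $[N]$, and then combine steps to obtain $\sqrt{D_t}\le 2t\sqrt{m_k/N}$. The only cosmetic differences are that the paper packages the recursion via Cauchy--Schwarz on $D_{t+1}$ (yielding $D_{t+1}\le(\sqrt{D_t}+2\sqrt{m_k/N})^2$) and bounds the increment by expanding $\hat{\mathcal{O}}_k$ explicitly with $|a-b|^2\le 2(|a|^2+|b|^2)$, whereas you unfold pointwise and apply Minkowski, and use the cleaner projector/operator-norm argument $\norm{(O_t-I)P^T}\le 2$; your observation that $D_{t_k}$ in the statement should be $D_t$ is also correct.
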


\blue{The proof of \Cref{lem:lower bound for D_t} and \Cref{lem:upper bound for D_t} is deferred to \Cref{sec:proof of lower bound for D_t} and \Cref{sec:proof of upper bound for D_t}, respectively. With these two lemmas, we are ready to show a lower bound for the query complexity.}

\begin{proof}[Proof of \cref{Lower bound for the query complexity}]
    We start by showing that for each $j \in [n], t_j=\Omega\left(\sqrt{\kappa_j N/M}\right)$. Fixing a $k\in[n]$, for a constant $\beta\in(0,1]$, we split the proof into two cases of $M\geq \beta^2\kappa_k/16 N$ and $M<\beta^2\kappa_k N/16$ .

    For $M\geq \beta^2\kappa_k N/16$, since the model is oblivious, nothing about $T_k$, except $\kappa_k$, is known for the coordinator before the end of the algorithm. If $\kappa_k=0$, clearly we have $t_k\ge 0$ . If $\kappa_k>0$, then $T_k$ is non-empty, and we have to invoke the oracle corresponding to the $k$-th machine to get information about $T_k$. Thus, the oracle $\hat{\mathcal{O}}_k$ should be applied at least once in principle when $\kappa_k>0$. In this way, it holds that $t_k\geq 1 \geq (\beta/4)\cdot\sqrt{\kappa_k N/M}$.

    For $M<\beta^2\kappa_k N/16$, we consider an input $T$ for $k$ with constants $\alpha,\beta$ as in \Cref{def:hardinput}. Let $\alpha\in(4\epsilon,1]$. Since $M<\beta^2\kappa_k N/16$, we can put all of the elements to the $k$-th machine to construct the input $T$ satisfying the hard input conditions in \cref{eq:constraint on hard input}. Thus, the hard input $T$ must exist. By Lemma \ref{lem:lower bound for D_t}, it follows that $D_{t_k}\geq CM_k/M$ for $\mathcal{T}$ generated by $T$. Combining Lemma \ref{lem:upper bound for D_t} with $t=t_k$, we have $4m_kt_k^2/N\geq CM_k/M$. Recall that $M_k/m_k\geq \beta\kappa_k$ for hard input by  \Cref{def:hardinput}, it holds that 
    \begin{equation}\label{tk}
        t_k\geq \sqrt{\frac{C}{4}\frac{\beta\kappa_k N}{M}}.
    \end{equation}
    
    Combining these two cases, we have $t_k\geq C'\sqrt{\kappa_k N/M}$ for some positive constant $C'$. Since $k$ is arbitrary, this lower bound holds for $t_j$ for every $j\in[n]$. By obliviousness of the queries, the value of $t_j$ is invariant across every input with the same parameters $N,M,\kappa_j,n$, so we can directly add them together to conclude that the query complexity is
    \[\sum_{j\in[n]}t_j\geq C'\sum_{j\in[n]}\sqrt{\frac{\kappa_j N}{M}}=\Omega\left(\sum_{j\in[n]}\sqrt{\frac{\kappa_j N}{M}}\right).\]
\end{proof}

\subsection{Lower bound on parallel queries}
This subsection proves a lower bound on the query complexity for the parallel model. Similarly to the method for the sequential model, for each $k\in[n]$, we consider the number of oracle calls required for hard inputs for $k$, respectively. Suppose the lower bound obtained by considering the hard inputs for $k$ is $\hat{t}_k$, then $\max_{j\in[n]}\hat{t}_j$ is a lower bound for the query complexity. Since the algorithm has no measurement, for any input $T$, the state after $t$ oracles can be written as
\[\ket{\psi_{t}^T}=U_{t} O_t U_{t-1}O_{t-1}\cdots U_1 O_{1}U_0\ket{0},\]
where $O_1,\cdots,O_t$ are either $\mathcal{O}\otimes I$ or $\mathcal{O}^\dagger\otimes I$ with identity operator $I$ on the registers that $\mathcal{O}$ doesn't act on, and $U_0,\cdots,U_t$ are unitary operators that are independent of the input. We also consider the input $\tilde{T}$ obtained from $T$ by removing the dataset on the $k$-th machine, and the state with input $\tilde{T}$. Suppose the parallel oracle corresponding to $\tilde{T}$ is $\tilde{\mathcal{O}}$, then the state after $t$ calls of this oracle is
\[\ket{\psi_{t}}=U_{t}\tilde{O}_t U_{t-1}\tilde{O}_{t-1}\cdots U_1\tilde{O}_1U_0\ket{0}\]
where $\tilde{O}_1,\cdots,\tilde{O}_t$ are either $\tilde{\mathcal{O}}\otimes I$ or $\tilde{\mathcal{O}}^\dagger\otimes I$

With the above assumptions, we can conclude two lemmas for the parallel query similar to Lemma \ref{lem:lower bound for D_t} and Lemma \ref{lem:upper bound for D_t} for the sequential query.

\begin{lemma}\label{lem:lower bound for D_t with in the parallel model}
    Let $\mathcal{T}$ be the hard input for the $k$-th machine as in \Cref{def:hardinput}. Let $\alpha,\beta$ be the constants defined in \Cref{def:hardinput}. Suppose the fidelity between the output state $\rho$ and $\psi$ satisfies $F(\rho,\psi)\geq(1-\epsilon)^2>9/16$ with $\epsilon\geq 0$. If $M<\frac{\beta^2}{16}\kappa_k N$ and $\alpha>4\epsilon$, then  
    \[\mathbb{E}_\mathcal{T}\left[\left\|\ket{\psi_{\hat{t}_k}^T}-\ket{\psi_{\hat{t}_k}}\right\|^2\right]\geq C\frac{M_k}{M}\] 
   for some constant $C$ dependent to $\alpha$ and $\epsilon$.
\end{lemma}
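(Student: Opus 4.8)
The plan is to follow the template of \Cref{lem:lower bound for D_t} for the sequential model, since the lower-bound mechanism is insensitive to whether queries are issued sequentially or in parallel: it only uses properties of the final state and of the family of targets. The single model-specific point to check first is that the reference state $\ket{\psi_{\hat t_k}}$ is the same for every $T\in\mathcal{T}$. This holds because the oracle $\tilde{\mathcal{O}}$ obtained by deleting the $k$-th machine does not depend on $T_k$, the fixed unitaries $U_0,\dots,U_{\hat t_k}$ are input-independent, and every $T\in\mathcal{T}$ differs only in $T_k$; obliviousness then guarantees that the placement of the $\tilde O_i$ is identical across $\mathcal{T}$. Hence $\ket{\psi_{\hat t_k}}$ is a single fixed vector, whose first-register reduced state I denote $\rho_0$, while $\ket{\psi_{\hat t_k}^T}$ must output on the first register a state $\rho^T$ with $F(\rho^T,\psi^T)\geq(1-\epsilon)^2$, and the target $\ket{\psi^T}$ genuinely depends on the permutation $\sigma$.

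The heart of the argument is to isolate the part of each target that varies with $\sigma$, rather than comparing against the full $\ket{\psi^T}$. For each $\sigma$, let $\Pi_\sigma$ project the first register onto $\Supp(T_k')=\sigma(\Supp(T_k))$, the permuted support of the $k$-th machine. On these coordinates the $k$-th machine contributes all of its mass, so $\norm{\Pi_\sigma\ket{\psi^T}}^2\geq M_k/M$, and the fidelity guarantee $F(\rho^T,\psi^T)\geq(1-\epsilon)^2$ transfers this, up to an $\epsilon$-controlled loss, to a lower bound on $\norm{\Pi_\sigma\ket{\psi_{\hat t_k}^T}}$ of at least a constant times $\sqrt{M_k/M}$. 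For the fixed reference state, the key observation is that $\ket{\psi_{\hat t_k}}$ is oblivious to $\sigma$: writing $q_i=\bra{i}\rho_0\ket{i}$ for its output distribution, its mass on $\sigma(\Supp(T_k))$ is $\norm{\Pi_\sigma\ket{\psi_{\hat t_k}}}^2=\sum_{i\in\sigma(\Supp(T_k))}q_i$. By \Cref{lem:size_of_T}, a uniform $T\in\mathcal{T}$ corresponds to a uniformly random $m_k$-subset $\sigma(\Supp(T_k))$ of $[N]$, so $\mathbb{E}_{\mathcal{T}}\big[\norm{\Pi_\sigma\ket{\psi_{\hat t_k}}}^2\big]=m_k/N$.

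Combining these through the contractivity of $\Pi_\sigma$ and the reverse triangle inequality, $\norm{\ket{\psi_{\hat t_k}^T}-\ket{\psi_{\hat t_k}}}\geq\norm{\Pi_\sigma\ket{\psi_{\hat t_k}^T}}-\norm{\Pi_\sigma\ket{\psi_{\hat t_k}}}$, I would then average over $\mathcal{T}$. The first term contributes order $M_k/M$, while the reference term is negligible: the hypotheses $M<\tfrac{\beta^2}{16}\kappa_k N$ together with $M_k/m_k\geq\beta\kappa_k$ give $m_k/N\leq\tfrac{1}{16}\,M_k/M$, so a Markov bound on $\sum_{i\in\sigma(\Supp(T_k))}q_i$ shows that on a large fraction of $\mathcal{T}$ the reference mass is only a small multiple of $M_k/M$. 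This yields $\mathbb{E}_{\mathcal{T}}\big[\norm{\ket{\psi_{\hat t_k}^T}-\ket{\psi_{\hat t_k}}}^2\big]\geq C\,M_k/M$ for a constant $C=C(\alpha,\epsilon)>0$, which is the claim.

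The step I expect to be the main obstacle is controlling the $\epsilon$-dependence so that the bound stays positive under the hypothesis $\alpha>4\epsilon$. Comparing against the full target $\ket{\psi^T}$ fails, because the shared mass from the machines $j\neq k$ lets a fixed state attain fidelity close to $1-M_k/M$ with \emph{every} target simultaneously; this is exactly why one must project onto $\sigma(\Supp(T_k))$ and score only the variable part. Moreover, the naive conversion of $F(\rho^T,\psi^T)\geq(1-\epsilon)^2$ into a statement about $\mathrm{Tr}[\Pi_\sigma\rho^T]$ through trace distance costs a factor $\sqrt{\epsilon}$, which would overwhelm the $M_k/M\geq\alpha$ signal for small $\alpha$. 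To avoid this I would argue at the level of amplitudes, using $\norm{\Pi_\sigma\ket{\psi_{\hat t_k}^T}}\geq\norm{\Pi_\sigma\ket{\psi^T}}\,\norm{g^T}-\norm{\Pi_\sigma\ket{\perp^T}}$ for the decomposition $\ket{\psi_{\hat t_k}^T}=\ket{\psi^T}\ket{g^T}+\ket{\perp^T}$, so that the $1-\epsilon$ from the fidelity enters linearly, and then lean on the calibration between the constant $16$ in $M<\tfrac{\beta^2}{16}\kappa_k N$ and the threshold $\alpha>4\epsilon$ to keep every constant strictly positive. The remaining estimates are routine and mirror the sequential case.
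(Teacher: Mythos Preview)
Your opening observation is exactly right and is in fact the whole proof in the paper: the argument for \Cref{lem:lower bound for D_t} never touches the form of the oracle, only the final states $\ket{\psi_{\hat t_k}^T}$, the reference $\ket{\psi_{\hat t_k}}$, and the targets. Once you verify (as you do) that $\ket{\psi_{\hat t_k}}$ is the same for every $T\in\mathcal{T}$ in the parallel model, the sequential proof applies verbatim. The paper says precisely this in one sentence.

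What you then propose, however, is not the proof of \Cref{lem:lower bound for D_t}; it is a different argument based on projecting onto $\sigma(\Supp(T_k))$. That route has a real quantitative gap. Your decomposition $\ket{\psi_{\hat t_k}^T}=\ket{\psi^T}\ket{g^T}+\ket{\perp^T}$ does make the factor $\norm{g^T}\geq 1-\epsilon$ enter linearly, but it does \emph{not} eliminate the $\sqrt{\epsilon}$ loss: you still have $\norm{\Pi_\sigma\ket{\perp^T}}\leq\norm{\ket{\perp^T}}\leq\sqrt{2\epsilon}$, and no better control is available because an adversarial algorithm can place all of $\ket{\perp^T}$ on $\sigma(\Supp(T_k))$. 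Hence your signal is only
\[
\norm{\Pi_\sigma\ket{\psi_{\hat t_k}^T}}\;\geq\;\sqrt{M_k/M}\,(1-\epsilon)-\sqrt{2\epsilon}\;=\;\sqrt{M_k/M}\bigl[(1-\epsilon)-\sqrt{2\epsilon/\alpha'}\bigr],
\]
with $\alpha':=M_k/M\geq\alpha$. Under the stated hypothesis $\alpha>4\epsilon$ this bracket can be as small as $3/4-1/\sqrt{2}\approx 0.043$. You then still have to subtract the reference contribution, whose mean is $m_k/N\leq M_k/(16M)$; any Markov-type argument leaves a term of order $\sqrt{M_k/(16M)}=\tfrac14\sqrt{M_k/M}$, which already exceeds the signal. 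So the bound does not close when $\alpha$ is near $4\epsilon$; your claimed ``calibration'' between the constants $16$ and $4$ does not rescue it.

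The paper's proof of \Cref{lem:lower bound for D_t} avoids this by comparing not against $\ket{\psi^T}$ but against the Uhlmann purification $\ket{\tilde\psi^T}$, splitting $D_{t_k}\geq(\sqrt{F_{t_k}}-\sqrt{E_{t_k}})^2$ with $E_{t_k}\leq 2\epsilon$ and, crucially, $F_{t_k}\geq M_k/(2M)$ \emph{independently of} $\epsilon$. The latter is obtained by a per-machine decomposition of $\langle\psi_{t_k}\,|\,\tilde\psi^T\rangle$ (their \Cref{psi sum}), which is the piece your projection argument does not reproduce. If you want a self-contained proof here, you should reproduce that $E/F$ split rather than the $\Pi_\sigma$ approach; otherwise, simply invoke the sequential lemma as the paper does.
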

\begin{proof}
	Noticing the proof of Lemma \ref{lem:lower bound for D_t} is independent of the form of the oracle, it also holds for the parallel queries. The lemma is then obtained immediately.
\end{proof}

\begin{lemma}\label{lem:upper bound for D_t within the parallel model}
    For every collection $\mathcal{T}$ generated by a hard input for $k$ and $t\leq \hat{t}_k$, it holds that $\mathbb{E}_\mathcal{T}\left[\left\|\ket{\psi_t^T}-\ket{\psi_t}\right\|^2\right]\leq 4\frac{m_k}{N}t^2$ within the parallel model.
\end{lemma}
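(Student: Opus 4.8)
The plan is to mirror the telescoping argument behind the sequential bound in \Cref{lem:upper bound for D_t}, the only genuinely new feature being that in the parallel setting the reduced oracle $\tilde{\mathcal{O}}$ is no longer the identity. First I would record a one-step recursion for the error vector. Writing $\ket{\psi_t^T}=U_tO_t\ket{\psi_{t-1}^T}$ and $\ket{\psi_t}=U_t\tilde O_t\ket{\psi_{t-1}}$, inserting $\pm U_tO_t\ket{\psi_{t-1}}$ and using that $U_t$ and $U_tO_t$ are unitary, one gets
\[
\norm{\ket{\psi_t^T}-\ket{\psi_t}}\le\norm{\ket{\psi_{t-1}^T}-\ket{\psi_{t-1}}}+\norm{(O_t-\tilde O_t)\ket{\psi_{t-1}}}.
\]
Since $\ket{\psi_0^T}=\ket{\psi_0}=U_0\ket{0}$, unrolling yields $\norm{\ket{\psi_t^T}-\ket{\psi_t}}\le\sum_{s=1}^t\norm{(O_s-\tilde O_s)\ket{\psi_{s-1}}}$. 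The crucial point, already observed in the text before \Cref{lem:lower bound for D_t}, is that the reference states $\ket{\psi_{s-1}}$ coincide for every $T\in\mathcal{T}$; the oracle difference is therefore measured against a \emph{fixed} vector, which is exactly what makes the subsequent averaging possible.

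Next I would bound a single increment. Here $O_s$ is $\mathcal{O}\otimes I$ (or its adjoint) and $\tilde O_s$ is $\tilde{\mathcal{O}}\otimes I$, where $\mathcal{O}=\bigotimes_j\hat{\mathcal{O}}_j$ and $\tilde{\mathcal{O}}$ replaces the $k$-th factor $\hat{\mathcal{O}}_k$ by the identity, the $k$-th machine being emptied. Hence $O_s-\tilde O_s=(\hat{\mathcal{O}}_k-I)\otimes R$, where $R=\bigotimes_{j\ne k}\hat{\mathcal{O}}_j$ is a unitary on the remaining registers that is fixed across $\mathcal{T}$ (the adjoint case being identical with $R^\dagger$). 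Because $R$ is unitary it drops out of the norm, so $\norm{(O_s-\tilde O_s)\ket{\psi_{s-1}}}=\norm{(\hat{\mathcal{O}}_k-I)\ket{\psi_{s-1}}}$, with $\hat{\mathcal{O}}_k-I$ understood to act on the $k$-th block. This is precisely the quantity from the sequential proof, which is why the two arguments coincide from here. Since $\hat{\mathcal{O}}_k-I$ vanishes except on basis states with $b_k=1$ and $i_k\in\Supp(T_k')=:S'$, and is a difference of two unitaries on its support, one has $\norm{(\hat{\mathcal{O}}_k-I)\ket{\psi_{s-1}}}\le 2\norm{\Pi_{S'}\ket{\psi_{s-1}}}$, where $\Pi_{S'}$ projects the $k$-th block onto $\{i_k\in S',\,b_k=1\}$.

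Finally I would average over $\mathcal{T}$. By \Cref{lem:size_of_T} a uniformly random $T\in\mathcal{T}$ makes $S'=\Supp(T_k')$ a uniformly random $m_k$-subset of $[N]$, so $\Pr[i\in S']=m_k/N$ for each $i$. Writing $\Pi_{S'}=\sum_{i\in S'}P_i$ with $P_i$ the orthogonal projections onto $\{i_k=i,\,b_k=1\}$ (so $\sum_i P_i\le I$) and using that $\ket{\psi_{s-1}}$ is fixed gives
\[
\mathbb{E}_{\mathcal{T}}\left[\norm{\Pi_{S'}\ket{\psi_{s-1}}}^2\right]=\sum_{i\in[N]}\Pr[i\in S']\,\norm{P_i\ket{\psi_{s-1}}}^2=\frac{m_k}{N}\sum_{i\in[N]}\norm{P_i\ket{\psi_{s-1}}}^2\le\frac{m_k}{N}.
\]
Squaring the unrolled recursion, applying Cauchy--Schwarz $\bigl(\sum_{s=1}^t a_s\bigr)^2\le t\sum_{s=1}^t a_s^2$ pointwise in $T$, and then taking the expectation yields
\[
\mathbb{E}_{\mathcal{T}}\left[\norm{\ket{\psi_t^T}-\ket{\psi_t}}^2\right]\le t\sum_{s=1}^t\mathbb{E}_{\mathcal{T}}\left[\norm{(O_s-\tilde O_s)\ket{\psi_{s-1}}}^2\right]\le t\cdot t\cdot 4\frac{m_k}{N}=4\frac{m_k}{N}t^2,
\]
as claimed.

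The main obstacle, and the only place the parallel setting departs from the sequential one, is the tensor factorization $O_s-\tilde O_s=(\hat{\mathcal{O}}_k-I)\otimes R$ with $R$ unitary and independent of $T\in\mathcal{T}$; once this is established the unitary $R$ is norm-transparent and every remaining estimate is literally the sequential computation. I would therefore spend most of the care on confirming that emptying the $k$-th machine leaves the other machines' oracles untouched, so that $\tilde{\mathcal{O}}$ and hence each $\ket{\psi_{s-1}}$ are genuinely independent of the chosen $T$, and that $\hat{\mathcal{O}}_k$ acts only on the $k$-th block, so that the support projection $\Pi_{S'}$ and the uniform-subset averaging carry over verbatim.
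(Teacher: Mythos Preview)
Your proposal is correct and follows essentially the same route as the paper: both reduce to the one-step bound $\mathbb{E}_{\mathcal T}\bigl[\|(O_s-\tilde O_s)\ket{\psi_{s-1}}\|^2\bigr]\le 4m_k/N$ and then telescope, the paper reusing the sequential recursion of \Cref{lem:upper bound for D_t} verbatim. The only cosmetic difference is that the paper obtains that one-step bound by expanding $(O_s-\tilde O_s)\ket{\psi_{s-1}}$ directly in the basis $\ket{\bar i,\bar s,b,l}$ and restricting to $i_k\in T_k$, whereas you factor $O_s-\tilde O_s=(\hat{\mathcal O}_k-I)\otimes R$ and drop the unitary $R$; both lead to the same projection-mass average.
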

\begin{proof}
	The proof of Lemma \ref{lem:upper bound for D_t} depends on the form of the oracle. But, specifically, the only part of it that depends on the oracle is the proof of Proposition \ref{sum O-I}. Hence, we only need to prove the conclusion of this proposition 
	\[\sum_{T\in\mathcal{T}}\left\|\left(O_{t+1}-\tilde{O}_{t+1}\right)\ket{\psi_t}\right\|^2\leq 4\frac{m_k}{N}\binom{N}{m_k}\]
	within the parallel model.
	
	Expanding the identity operator,
	\[
    \begin{aligned}
        &(O_{t+1}-\tilde{O}_{t+1})\ket{\psi_t}= \\
        &\sum_{\bar{i}:i_k\in T_k}\sum_{\bar{s}\in \{0,\dots,\nu\}^n}\sum_{b\in\{0,1\}^n}\sum_l\ket{\bar{i},\bar{s},b,l}\bra{\bar{i},\bar{s},b,l}\left(O^\dagger_{t+1}-\tilde{O}^\dagger_{t+1}\right)\ket{\psi_t}.
    \end{aligned}
    \] 
	we have
	\begin{align*}
	&\sum_{T\in\mathcal{T}}\left\|\left(O_{t+1}-\tilde{O}_{t+1}\right)\ket{\psi_t}\right\|^2 \\
    \leq& 4\sum_{T\in\mathcal{T}}\sum_{\bar{i}:i_k\in T_k}\sum_{\bar{s}\in\{0,1,\cdots,\nu\}^n}\sum_{b\in\{0,1\}^n}\sum_l |\braket{\bar{i},\bar{s},b,l|\psi_t}|^2.
	\end{align*}
	Similarly to the proof of Proposition \ref{sum O-I}, changing the order of the summation, we can obtain the same expression as Proposition \ref{sum O-I}. This gives the lemma.
\end{proof}

Following the same idea for the sequential query, we can conclude the lower bound for the parallel query complexity with these two lemmas.

\begin{proof}[Proof of \Cref{thm:parallel}]
	Similarly to the proof of \Cref{Lower bound for the query complexity}, by the above two lemmas, we can conclude that $\hat{t}_k\geq C'\sqrt{\frac{\kappa_k N}{M}}$
	for some constant $C'$. Hence, the query complexity is not less than
	$\max_{j\in[n]}\hat{t}_j=\Omega\left(\max_{j\in[n]}\sqrt{\frac{\kappa_j N}{M}}\right)$.
\end{proof} 
\section{Conclusion}
\blue{Motivated by the hefty cost of a single large-scale quantum storage, we design distributed databases for a fundamental task in quantum computing - quantum sampling. In our design, data can be partitioned across multiple machines with limited local quantum storage. By introducing basic Grover oracles maintained by each machine, we discuss the distributed quantum sampling task in two query patterns: sequential query and parallel query.}

\blue{With the proposed quantum distributed model, in an oblivious query setting, we show that a sequential sampling algorithm requiring $\Theta(n\sqrt{\nu N/M})$ queries and a parallel sampling algorithm requiring $\Theta(\sqrt{\nu N/M})$ queries by extending the method for the centralized model. These results demonstrate that the fundamental limitations of distributed quantum sampling align with those of the centralized case, despite the additional constraints imposed by the distributed architecture.}

\blue{This work opens several promising directions for future research. The extension to non-oblivious communication models is still possible to yield further improvements in query complexity. Additionally, only a simple distributed architecture is discussed here, it remains valuable to consider other types of architecture close to the practical scenario for a quantum network.}

\paragraph{Acknowledgment.} LC was supported by the National Natural Science Foundation of China (NSFC)/Research Grants Council (RGC) Joint Research Scheme (Grant No. N\_HKU7107/24), the National Natural Science Foundation of China via the Excellent Young Scientists Fund (Hong Kong and Macau) (Grant No. 12322516) and the Hong Kong Research Grant Council (RGC) through the General Research Fund (GRF) grant (Grant No. 17303923). JL and PY were supported by National
Natural Science Foundation of China (Grant No. 62332009, 12347104, 62472212), Innovation Program
for Quantum Science and Technology (Grant No. 2021ZD0302901), NSFC/RGC Joint Research
Scheme (Grant No. 12461160276), Natural Science Foundation of Jiangsu Province (Grant No.
BK20243060) and the New Cornerstone Science Foundation.

\bibliographystyle{ACM-Reference-Format}
\bibliography{ref}


\begin{thebibliography}{27}


\ifx \showCODEN    \undefined \def \showCODEN     #1{\unskip}     \fi
\ifx \showDOI      \undefined \def \showDOI       #1{#1}\fi
\ifx \showISBNx    \undefined \def \showISBNx     #1{\unskip}     \fi
\ifx \showISBNxiii \undefined \def \showISBNxiii  #1{\unskip}     \fi
\ifx \showISSN     \undefined \def \showISSN      #1{\unskip}     \fi
\ifx \showLCCN     \undefined \def \showLCCN      #1{\unskip}     \fi
\ifx \shownote     \undefined \def \shownote      #1{#1}          \fi
\ifx \showarticletitle \undefined \def \showarticletitle #1{#1}   \fi
\ifx \showURL      \undefined \def \showURL       {\relax}        \fi
\providecommand\bibfield[2]{#2}
\providecommand\bibinfo[2]{#2}
\providecommand\natexlab[1]{#1}
\providecommand\showeprint[2][]{arXiv:#2}

\bibitem[Aaronson and Rothblum(2019)]%
        {10.1145/3313276.3316378}
\bibfield{author}{\bibinfo{person}{Scott Aaronson} {and} \bibinfo{person}{Guy~N. Rothblum}.} \bibinfo{year}{2019}\natexlab{}.
\newblock \showarticletitle{Gentle measurement of quantum states and differential privacy}. In \bibinfo{booktitle}{\emph{Proceedings of the 51st Annual ACM SIGACT Symposium on Theory of Computing}} (Phoenix, AZ, USA) \emph{(\bibinfo{series}{STOC 2019})}. \bibinfo{publisher}{Association for Computing Machinery}, \bibinfo{address}{New York, NY, USA}, \bibinfo{pages}{322–333}.
\newblock
\showISBNx{9781450367059}
\urldef\tempurl%
\url{https://doi.org/10.1145/3313276.3316378}
\showDOI{\tempurl}


\bibitem[Aharonov and Ta-Shma(2003)]%
        {aharonov2003adiabaticquantumstategeneration}
\bibfield{author}{\bibinfo{person}{Dorit Aharonov} {and} \bibinfo{person}{Amnon Ta-Shma}.} \bibinfo{year}{2003}\natexlab{}.
\newblock \bibinfo{title}{Adiabatic Quantum State Generation and Statistical Zero Knowledge}.
\newblock
\newblock
\showeprint[arxiv]{quant-ph/0301023}~[quant-ph]
\urldef\tempurl%
\url{https://arxiv.org/abs/quant-ph/0301023}
\showURL{%
\tempurl}


\bibitem[Ambainis(2002)]%
        {AMBAINIS2002750}
\bibfield{author}{\bibinfo{person}{Andris Ambainis}.} \bibinfo{year}{2002}\natexlab{}.
\newblock \showarticletitle{Quantum Lower Bounds by Quantum Arguments}.
\newblock \bibinfo{journal}{\emph{J. Comput. System Sci.}} \bibinfo{volume}{64}, \bibinfo{number}{4} (\bibinfo{year}{2002}), \bibinfo{pages}{750--767}.
\newblock
\showISSN{0022-0000}
\urldef\tempurl%
\url{https://doi.org/10.1006/jcss.2002.1826}
\showDOI{\tempurl}


\bibitem[Ambainis et~al\mbox{.}(2011)]%
        {5959806}
\bibfield{author}{\bibinfo{person}{Andris Ambainis}, \bibinfo{person}{Loïck Magnin}, \bibinfo{person}{Martin Roetteler}, {and} \bibinfo{person}{Jeremie Roland}.} \bibinfo{year}{2011}\natexlab{}.
\newblock \showarticletitle{Symmetry-Assisted Adversaries for Quantum State Generation}. In \bibinfo{booktitle}{\emph{2011 IEEE 26th Annual Conference on Computational Complexity}}. \bibinfo{publisher}{IEEE}, \bibinfo{pages}{167–177}.
\newblock
\urldef\tempurl%
\url{https://doi.org/10.1109/ccc.2011.24}
\showDOI{\tempurl}


\bibitem[Arunachalam et~al\mbox{.}(2020)]%
        {arunachalam_et_al:LIPIcs.TQC.2020.10}
\bibfield{author}{\bibinfo{person}{Srinivasan Arunachalam}, \bibinfo{person}{Aleksandrs Belovs}, \bibinfo{person}{Andrew~M. Childs}, \bibinfo{person}{Robin Kothari}, \bibinfo{person}{Ansis Rosmanis}, {and} \bibinfo{person}{Ronald de Wolf}.} \bibinfo{year}{2020}\natexlab{}.
\newblock \showarticletitle{{Quantum Coupon Collector}}. In \bibinfo{booktitle}{\emph{15th Conference on the Theory of Quantum Computation, Communication and Cryptography (TQC 2020)}} \emph{(\bibinfo{series}{Leibniz International Proceedings in Informatics (LIPIcs)}, Vol.~\bibinfo{volume}{158})}, \bibfield{editor}{\bibinfo{person}{Steven~T. Flammia}} (Ed.). \bibinfo{publisher}{Schloss Dagstuhl -- Leibniz-Zentrum f{\"u}r Informatik}, \bibinfo{address}{Dagstuhl, Germany}, \bibinfo{pages}{10:1--10:17}.
\newblock
\showISBNx{978-3-95977-146-7}
\showISSN{1868-8969}
\urldef\tempurl%
\url{https://doi.org/10.4230/LIPIcs.TQC.2020.10}
\showDOI{\tempurl}


\bibitem[Arunachalam et~al\mbox{.}(2021)]%
        {Arunachalam2021twonewresultsabout}
\bibfield{author}{\bibinfo{person}{Srinivasan Arunachalam}, \bibinfo{person}{Sourav Chakraborty}, \bibinfo{person}{Troy Lee}, \bibinfo{person}{Manaswi Paraashar}, {and} \bibinfo{person}{Ronald de Wolf}.} \bibinfo{year}{2021}\natexlab{}.
\newblock \showarticletitle{Two new results about quantum exact learning}.
\newblock \bibinfo{journal}{\emph{{Quantum}}}  \bibinfo{volume}{5} (\bibinfo{date}{Nov.} \bibinfo{year}{2021}), \bibinfo{pages}{587}.
\newblock
\showISSN{2521-327X}
\urldef\tempurl%
\url{https://doi.org/10.22331/q-2021-11-24-587}
\showDOI{\tempurl}


\bibitem[Arunachalam and De~Wolf(2018)]%
        {10.5555/3291125.3309633}
\bibfield{author}{\bibinfo{person}{Srinivasan Arunachalam} {and} \bibinfo{person}{Ronald De~Wolf}.} \bibinfo{year}{2018}\natexlab{}.
\newblock \showarticletitle{Optimal quantum sample complexity of learning algorithms}.
\newblock \bibinfo{journal}{\emph{J. Mach. Learn. Res.}} \bibinfo{volume}{19}, \bibinfo{number}{1} (\bibinfo{date}{jan} \bibinfo{year}{2018}), \bibinfo{pages}{2879–2878}.
\newblock
\showISSN{1532-4435}


\bibitem[Boyer et~al\mbox{.}(1998)]%
        {Boyer_1998}
\bibfield{author}{\bibinfo{person}{Michel Boyer}, \bibinfo{person}{Gilles Brassard}, \bibinfo{person}{Peter H{\o}yer}, {and} \bibinfo{person}{Alain Tapp}.} \bibinfo{year}{1998}\natexlab{}.
\newblock \showarticletitle{Tight Bounds on Quantum Searching}.
\newblock \bibinfo{journal}{\emph{Fortschritte der Physik}} \bibinfo{volume}{46}, \bibinfo{number}{4-5} (\bibinfo{date}{jun} \bibinfo{year}{1998}), \bibinfo{pages}{493--505}.
\newblock
\urldef\tempurl%
\url{https://doi.org/10.1002/(sici)1521-3978(199806)46:4/5<493::aid-prop493>3.0.co;2-p}
\showDOI{\tempurl}


\bibitem[Brassard et~al\mbox{.}(2002)]%
        {Brassard_2002}
\bibfield{author}{\bibinfo{person}{Gilles Brassard}, \bibinfo{person}{Peter H\o~yer}, \bibinfo{person}{Michele Mosca}, {and} \bibinfo{person}{Alain Tapp}.} \bibinfo{year}{2002}\natexlab{}.
\newblock \showarticletitle{Quantum amplitude amplification and estimation}.
\newblock In \bibinfo{booktitle}{\emph{Quantum computation and information ({W}ashington, {DC}, 2000)}}. \bibinfo{series}{Contemp. Math.}, Vol.~\bibinfo{volume}{305}. \bibinfo{publisher}{Amer. Math. Soc., Providence, RI}, \bibinfo{pages}{53--74}.
\newblock
\showISBNx{0-8218-2140-7}
\urldef\tempurl%
\url{https://doi.org/10.1090/conm/305/05215}
\showDOI{\tempurl}


\bibitem[Cornelissen et~al\mbox{.}(2022)]%
        {10.1145/3519935.3520045}
\bibfield{author}{\bibinfo{person}{Arjan Cornelissen}, \bibinfo{person}{Yassine Hamoudi}, {and} \bibinfo{person}{Sofiene Jerbi}.} \bibinfo{year}{2022}\natexlab{}.
\newblock \showarticletitle{Near-optimal Quantum algorithms for multivariate mean estimation}. In \bibinfo{booktitle}{\emph{Proceedings of the 54th Annual ACM SIGACT Symposium on Theory of Computing}} (Rome, Italy) \emph{(\bibinfo{series}{STOC 2022})}. \bibinfo{publisher}{Association for Computing Machinery}, \bibinfo{address}{New York, NY, USA}, \bibinfo{pages}{33–43}.
\newblock
\showISBNx{9781450392648}
\urldef\tempurl%
\url{https://doi.org/10.1145/3519935.3520045}
\showDOI{\tempurl}


\bibitem[Gily\'{e}n and Li(2020)]%
        {gilyen_et_al:LIPIcs.ITCS.2020.25}
\bibfield{author}{\bibinfo{person}{Andr\'{a}s Gily\'{e}n} {and} \bibinfo{person}{Tongyang Li}.} \bibinfo{year}{2020}\natexlab{}.
\newblock \showarticletitle{{Distributional Property Testing in a Quantum World}}. In \bibinfo{booktitle}{\emph{11th Innovations in Theoretical Computer Science Conference (ITCS 2020)}} \emph{(\bibinfo{series}{Leibniz International Proceedings in Informatics (LIPIcs)}, Vol.~\bibinfo{volume}{151})}, \bibfield{editor}{\bibinfo{person}{Thomas Vidick}} (Ed.). \bibinfo{publisher}{Schloss Dagstuhl -- Leibniz-Zentrum f{\"u}r Informatik}, \bibinfo{address}{Dagstuhl, Germany}, \bibinfo{pages}{25:1--25:19}.
\newblock
\showISBNx{978-3-95977-134-4}
\showISSN{1868-8969}
\urldef\tempurl%
\url{https://doi.org/10.4230/LIPIcs.ITCS.2020.25}
\showDOI{\tempurl}


\bibitem[Grover(1996)]%
        {10.1145/237814.237866}
\bibfield{author}{\bibinfo{person}{Lov~K. Grover}.} \bibinfo{year}{1996}\natexlab{}.
\newblock \showarticletitle{A fast quantum mechanical algorithm for database search}. In \bibinfo{booktitle}{\emph{Proceedings of the Twenty-Eighth Annual ACM Symposium on Theory of Computing}} (Philadelphia, Pennsylvania, USA) \emph{(\bibinfo{series}{STOC '96})}. \bibinfo{publisher}{Association for Computing Machinery}, \bibinfo{address}{New York, NY, USA}, \bibinfo{pages}{212–219}.
\newblock
\showISBNx{0897917855}
\urldef\tempurl%
\url{https://doi.org/10.1145/237814.237866}
\showDOI{\tempurl}


\bibitem[Hamoudi(2021)]%
        {hamoudi:hal-03454632}
\bibfield{author}{\bibinfo{person}{Yassine Hamoudi}.} \bibinfo{year}{2021}\natexlab{}.
\newblock \showarticletitle{Quantum Sub-Gaussian Mean Estimator}. \bibinfo{publisher}{Schloss Dagstuhl – Leibniz-Zentrum für Informatik}.
\newblock
\urldef\tempurl%
\url{https://doi.org/10.4230/LIPICS.ESA.2021.50}
\showDOI{\tempurl}


\bibitem[Hamoudi and Magniez(2019)]%
        {hamoudi:hal-02349991}
\bibfield{author}{\bibinfo{person}{Yassine Hamoudi} {and} \bibinfo{person}{Fr{\'e}d{\'e}ric Magniez}.} \bibinfo{year}{2019}\natexlab{}.
\newblock \showarticletitle{{Quantum Chebyshev's Inequality and Applications}}. In \bibinfo{booktitle}{\emph{{46th International Colloquium on Automata, Languages, and Programming (ICALP 2019)}}}. \bibinfo{address}{Patras, Greece}.
\newblock
\urldef\tempurl%
\url{https://doi.org/10.4230/LIPIcs.ICALP.2019.69}
\showDOI{\tempurl}


\bibitem[Harrow et~al\mbox{.}(2009)]%
        {PhysRevLett.103.150502}
\bibfield{author}{\bibinfo{person}{Aram~W. Harrow}, \bibinfo{person}{Avinatan Hassidim}, {and} \bibinfo{person}{Seth Lloyd}.} \bibinfo{year}{2009}\natexlab{}.
\newblock \showarticletitle{Quantum Algorithm for Linear Systems of Equations}.
\newblock \bibinfo{journal}{\emph{Phys. Rev. Lett.}}  \bibinfo{volume}{103} (\bibinfo{date}{Oct} \bibinfo{year}{2009}), \bibinfo{pages}{150502}.
\newblock
Issue 15.
\urldef\tempurl%
\url{https://doi.org/10.1103/PhysRevLett.103.150502}
\showDOI{\tempurl}


\bibitem[Ju et~al\mbox{.}(2007)]%
        {JuYi-Lin2007QCDa}
\bibfield{author}{\bibinfo{person}{Yi-Lin Ju}, \bibinfo{person}{I-Ming Tsai}, {and} \bibinfo{person}{Sy-Yen Kuo}.} \bibinfo{year}{2007}\natexlab{}.
\newblock \showarticletitle{Quantum Circuit Design and Analysis for Database Search Applications}.
\newblock \bibinfo{journal}{\emph{IEEE transactions on circuits and systems. I, Regular papers}} \bibinfo{volume}{54}, \bibinfo{number}{11} (\bibinfo{year}{2007}), \bibinfo{pages}{2552--2563}.
\newblock
\showISSN{1549-8328}


\bibitem[Le~Gall and Suruga(2022)]%
        {Le_Gall_2022}
\bibfield{author}{\bibinfo{person}{François Le~Gall} {and} \bibinfo{person}{Daiki Suruga}.} \bibinfo{year}{2022}\natexlab{}.
\newblock \bibinfo{booktitle}{\emph{Bounds on Oblivious Multiparty Quantum Communication Complexity}}.
\newblock \bibinfo{publisher}{Springer International Publishing}, \bibinfo{pages}{641–657}.
\newblock
\showISBNx{9783031206245}
\showISSN{1611-3349}
\urldef\tempurl%
\url{https://doi.org/10.1007/978-3-031-20624-5_39}
\showDOI{\tempurl}


\bibitem[Lee et~al\mbox{.}(2011)]%
        {6108195}
\bibfield{author}{\bibinfo{person}{Troy Lee}, \bibinfo{person}{Rajat Mittal}, \bibinfo{person}{Ben~W. Reichardt}, \bibinfo{person}{Robert Špalek}, {and} \bibinfo{person}{Mario Szegedy}.} \bibinfo{year}{2011}\natexlab{}.
\newblock \showarticletitle{Quantum Query Complexity of State Conversion}. In \bibinfo{booktitle}{\emph{2011 IEEE 52nd Annual Symposium on Foundations of Computer Science}}. \bibinfo{pages}{344--353}.
\newblock
\urldef\tempurl%
\url{https://doi.org/10.1109/FOCS.2011.75}
\showDOI{\tempurl}


\bibitem[Lindzey and Rosmanis(2020)]%
        {lindzey_et_al:LIPIcs.ITCS.2020.59}
\bibfield{author}{\bibinfo{person}{Nathan Lindzey} {and} \bibinfo{person}{Ansis Rosmanis}.} \bibinfo{year}{2020}\natexlab{}.
\newblock \showarticletitle{{A Tight Lower Bound For Non-Coherent Index Erasure}}. In \bibinfo{booktitle}{\emph{11th Innovations in Theoretical Computer Science Conference (ITCS 2020)}} \emph{(\bibinfo{series}{Leibniz International Proceedings in Informatics (LIPIcs)}, Vol.~\bibinfo{volume}{151})}, \bibfield{editor}{\bibinfo{person}{Thomas Vidick}} (Ed.). \bibinfo{publisher}{Schloss Dagstuhl -- Leibniz-Zentrum f{\"u}r Informatik}, \bibinfo{address}{Dagstuhl, Germany}, \bibinfo{pages}{59:1--59:37}.
\newblock
\showISBNx{978-3-95977-134-4}
\showISSN{1868-8969}
\urldef\tempurl%
\url{https://doi.org/10.4230/LIPIcs.ITCS.2020.59}
\showDOI{\tempurl}


\bibitem[Liu et~al\mbox{.}(2023)]%
        {Liu_2023}
\bibfield{author}{\bibinfo{person}{Junyu Liu}, \bibinfo{person}{Connor~T. Hann}, {and} \bibinfo{person}{Liang Jiang}.} \bibinfo{year}{2023}\natexlab{}.
\newblock \showarticletitle{Data centers with quantum random access memory and quantum networks}.
\newblock \bibinfo{journal}{\emph{Physical Review A}} \bibinfo{volume}{108}, \bibinfo{number}{3} (\bibinfo{date}{Sept.} \bibinfo{year}{2023}).
\newblock
\showISSN{2469-9934}
\urldef\tempurl%
\url{https://doi.org/10.1103/physreva.108.032610}
\showDOI{\tempurl}


\bibitem[Magniez et~al\mbox{.}(2011)]%
        {doi:10.1137/090745854}
\bibfield{author}{\bibinfo{person}{Fr\'{e}d\'{e}ric Magniez}, \bibinfo{person}{Ashwin Nayak}, \bibinfo{person}{J\'{e}r\'{e}mie Roland}, {and} \bibinfo{person}{Miklos Santha}.} \bibinfo{year}{2011}\natexlab{}.
\newblock \showarticletitle{Search via Quantum Walk}.
\newblock \bibinfo{journal}{\emph{SIAM J. Comput.}} \bibinfo{volume}{40}, \bibinfo{number}{1} (\bibinfo{year}{2011}), \bibinfo{pages}{142--164}.
\newblock
\urldef\tempurl%
\url{https://doi.org/10.1137/090745854}
\showDOI{\tempurl}
\showeprint{https://doi.org/10.1137/090745854}


\bibitem[Nielsen and Chuang(2011)]%
        {nielsen2010quantum}
\bibfield{author}{\bibinfo{person}{Michael~A. Nielsen} {and} \bibinfo{person}{Isaac~L. Chuang}.} \bibinfo{year}{2011}\natexlab{}.
\newblock \bibinfo{booktitle}{\emph{Quantum Computation and Quantum Information: 10th Anniversary Edition} (\bibinfo{edition}{10th} ed.)}.
\newblock \bibinfo{publisher}{Cambridge University Press}, \bibinfo{address}{USA}.
\newblock
\showISBNx{1107002176}


\bibitem[Ozols et~al\mbox{.}(2013)]%
        {10.1145/2493252.2493256}
\bibfield{author}{\bibinfo{person}{Maris Ozols}, \bibinfo{person}{Martin Roetteler}, {and} \bibinfo{person}{J\'{e}r\'{e}mie Roland}.} \bibinfo{year}{2013}\natexlab{}.
\newblock \showarticletitle{Quantum rejection sampling}.
\newblock \bibinfo{journal}{\emph{ACM Trans. Comput. Theory}} \bibinfo{volume}{5}, \bibinfo{number}{3}, Article \bibinfo{articleno}{11} (\bibinfo{date}{aug} \bibinfo{year}{2013}), \bibinfo{numpages}{33}~pages.
\newblock
\showISSN{1942-3454}
\urldef\tempurl%
\url{https://doi.org/10.1145/2493252.2493256}
\showDOI{\tempurl}


\bibitem[Shi(2002)]%
        {1181975}
\bibfield{author}{\bibinfo{person}{Yaoyun Shi}.} \bibinfo{year}{2002}\natexlab{}.
\newblock \showarticletitle{Quantum lower bounds for the collision and the element distinctness problems}. In \bibinfo{booktitle}{\emph{The 43rd Annual IEEE Symposium on Foundations of Computer Science, 2002. Proceedings.}} \bibinfo{pages}{513--519}.
\newblock
\urldef\tempurl%
\url{https://doi.org/10.1109/SFCS.2002.1181975}
\showDOI{\tempurl}


\bibitem[Szegedy(2004)]%
        {1366222}
\bibfield{author}{\bibinfo{person}{M. Szegedy}.} \bibinfo{year}{2004}\natexlab{}.
\newblock \showarticletitle{Quantum speed-up of Markov chain based algorithms}. In \bibinfo{booktitle}{\emph{45th Annual IEEE Symposium on Foundations of Computer Science}}. \bibinfo{pages}{32--41}.
\newblock
\urldef\tempurl%
\url{https://doi.org/10.1109/FOCS.2004.53}
\showDOI{\tempurl}


\bibitem[Wocjan and Abeyesinghe(2008)]%
        {PhysRevA.78.042336}
\bibfield{author}{\bibinfo{person}{Pawel Wocjan} {and} \bibinfo{person}{Anura Abeyesinghe}.} \bibinfo{year}{2008}\natexlab{}.
\newblock \showarticletitle{Speedup via quantum sampling}.
\newblock \bibinfo{journal}{\emph{Phys. Rev. A}}  \bibinfo{volume}{78} (\bibinfo{date}{Oct} \bibinfo{year}{2008}), \bibinfo{pages}{042336}.
\newblock
Issue 4.
\urldef\tempurl%
\url{https://doi.org/10.1103/PhysRevA.78.042336}
\showDOI{\tempurl}


\bibitem[Zalka(1999)]%
        {Zalka_1999}
\bibfield{author}{\bibinfo{person}{Christof Zalka}.} \bibinfo{year}{1999}\natexlab{}.
\newblock \showarticletitle{Grover's quantum searching algorithm is optimal}.
\newblock \bibinfo{journal}{\emph{Physical Review A}} \bibinfo{volume}{60}, \bibinfo{number}{4} (\bibinfo{date}{oct} \bibinfo{year}{1999}), \bibinfo{pages}{2746--2751}.
\newblock
\urldef\tempurl%
\url{https://doi.org/10.1103/physreva.60.2746}
\showDOI{\tempurl}


\end{thebibliography}
\appendix
\section{Proof of Lemma~\ref{lem:measurement_to_no_measurement}}\label{sec:proofof1}

	By the assumption of the oblivious algorithm, the order of the oracles that $\mathcal{A}$ makes is predetermined and independent of the input, which, therefore, is also independent of the outcomes of the measurements. Thus, the measurement can be deferred to the end of the algorithm. Without loss of generality, it can be assumed that there is only one projective measurement \cite{nielsen2010quantum}.
	
	Suppose this projective measurement is described by the projection operators $\{\Pi_i\}_{i=1}^{N}$. The algorithm now implements $\mathcal{V}$ and then follows a projective measurement. Suppose the state before this measurement is $\ket{s_{\sigma}}$. Then the output state is 
	\[\rho=\operatorname{Tr_\mathcal{Y}}\left[\sum_{i=1}^{N}\Pi_i\ket{s_\sigma}\bra{s_\sigma}\Pi_i\right],\]
	where $\operatorname{\mathcal{Y}}$ is the non-output registers. The fidelity for $\mathcal{A}$ is
	\[F(\rho,\psi)=\sum_{i=1}^{N}\sum_{\eta} |(\bra{\psi}\otimes\bra{\eta})\Pi_i\ket{s_\sigma}|^2,\]
	where $\{\ket{\eta}\}$ is a basis of $\mathcal{Y}$.
	
	To give a new algorithm $\mathcal{B}$, we use an ancillary bit and choose a unitary transform $U$ such that
	\[U:\ket{s,0}\longmapsto \sum_{i=1}^{N}\sqrt{p_i}\ket{s_i,i}\]
	with an arbitrary state $\ket{s}$, the coefficients $p_i=\braket{s|\Pi_i|s}$ and the normalized projections $\ket{s_i}=\Pi_i\ket{s}/\sqrt{p_i}$. It can be checked that this definition preserves the inner product, due to $\Pi_i\Pi_j=\delta_{ij}$. Hence, $U$ is well-defined. Let the new algorithm be $\mathcal{B}=U(\mathcal{V}\otimes I)$ with initial state $\ket{\zeta}\ket{0}$, where $\ket{\zeta}$ is the initial state of the algorithm $\mathcal{A}$. Since the queries are not changed, the query complexity of $\mathcal{B}$ is the same as $\mathcal{A}$'s.
	
	Finally, we show the fidelity for $\mathcal{B}$ is still the same. The final state of $\mathcal{B}$ is $\sum_{i=1}^{N}(\Pi_i\ket{s_\sigma})\otimes\ket{i}$. Thus, if the output state is $\rho'$, then the fidelity
	\begin{align*}
		F(\rho',\psi) &= \sum_{\eta}\sum_{l=0}^{N}|\bra{\psi,\eta,l}\sum_{i=1}^{N}(\Pi_i\ket{s_\sigma})\otimes\ket{i}|^2 \\
		&= \sum_{\eta}\sum_{l=1}^{N}|\bra{\psi,\eta}\Pi_l\ket{s_\sigma}|^2 \\
		&= F(\rho,\psi).
	\end{align*}

\section{Proof of Lemma \ref{lem:lower bound for D_t}} \label{sec:proof of lower bound for D_t}

To simplify the expression of the fidelity in the condition, we introduce another pure state $\ket{\tilde{\psi}^T}$.
\begin{lemma} \label{lem:purification for fidelity}
	Let $\mathcal{X}$ and $\mathcal{Y}$ be the output register and the working register of the coordinator, respectively. Let $\rho=\operatorname{Tr_{\mathcal{Y}}}[\ket{\psi^T_{t_k}}\bra{\psi^T_{t_k}}]$ be the output state. If $\dim\mathcal{X}\leq \dim\mathcal{Y}$, then $F(\rho,\psi)=|\braket{\psi^T_{t_k}|\tilde{\psi}^T}|^2$ for 
	\begin{equation}\label{eq:psi'}
		\ket{\tilde{\psi}^T}=\frac{1}{\sqrt{M}}\sum_{i\in[N]}\sqrt{c_{i}}\ket{i}\ket{\xi^T_i}\in\mathcal{X}\otimes\mathcal{Y}
	\end{equation}
	with some $\ket{\xi_i^T}\in\mathcal{Y}$. 
\end{lemma}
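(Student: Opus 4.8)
The plan is to recognize $\ket{\psi^T_{t_k}}$ as a fixed purification of the output state $\rho$ and to invoke Uhlmann's theorem, then to exploit the fact that the target $\psi=\ket{\psi}\bra{\psi}$ is \emph{pure} so that its purifications take a trivial form. First I would recall the pure-target specialization of the fidelity defined in \Cref{sec:preli}, namely $F(\rho,\psi)=\braket{\psi|\rho|\psi}$, which follows immediately because $\sqrt{\rho}\,\ket{\psi}\bra{\psi}\,\sqrt{\rho}$ is a rank-one positive operator with trace $\braket{\psi|\rho|\psi}$, so that $\operatorname{Tr}\bigl[\sqrt{\sqrt{\rho}\,\psi\,\sqrt{\rho}}\bigr]=\norm{\sqrt{\rho}\ket{\psi}}=\sqrt{\braket{\psi|\rho|\psi}}$. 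Since $\rho=\operatorname{Tr}_{\mathcal{Y}}[\ket{\psi^T_{t_k}}\bra{\psi^T_{t_k}}]$, the vector $\ket{\psi^T_{t_k}}\in\mathcal{X}\otimes\mathcal{Y}$ is by construction a purification of $\rho$, and the hypothesis $\dim\mathcal{X}\le\dim\mathcal{Y}$ guarantees that the purifying register $\mathcal{Y}$ is large enough (certainly $\dim\mathcal{Y}\ge\operatorname{rank}\rho$) for the Uhlmann optimum to be attainable against this fixed purification by a unitary acting on $\mathcal{Y}$ alone.

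By Uhlmann's theorem in its fixed-purification form, I would then write $F(\rho,\psi)=\max_{\ket{\Phi}}\abs{\braket{\psi^T_{t_k}|\Phi}}^2$, where $\ket{\Phi}$ ranges over all purifications of $\psi$ inside $\mathcal{X}\otimes\mathcal{Y}$. The decisive simplification is that $\psi$ is pure: any purification of a rank-one state has Schmidt rank one across the cut $\mathcal{X}:\mathcal{Y}$, hence is a product $\ket{\psi}\otimes\ket{\chi}$ for some unit vector $\ket{\chi}\in\mathcal{Y}$ (the phase on $\mathcal{X}$ being absorbed). Substituting $\ket{\psi}=\frac{1}{\sqrt{M}}\sum_i\sqrt{c_i}\ket{i}$ gives $\ket{\Phi}=\frac{1}{\sqrt{M}}\sum_i\sqrt{c_i}\ket{i}\ket{\chi}$, which is exactly the claimed form of $\ket{\tilde{\psi}^T}$ with $\ket{\xi^T_i}=\ket{\chi}$ for every $i$. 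Taking $\ket{\chi}$ to be the maximizer $\ket{\chi^*}$ then produces a state of the stated shape with $F(\rho,\psi)=\abs{\braket{\psi^T_{t_k}|\tilde{\psi}^T}}^2$.

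As a self-contained alternative that avoids quoting Uhlmann, I would expand $\ket{\psi^T_{t_k}}=\sum_i\ket{i}\ket{\phi_i}$ in the computational basis of $\mathcal{X}$ (with unnormalized $\ket{\phi_i}\in\mathcal{Y}$), compute $\rho=\sum_{i,j}\braket{\phi_j|\phi_i}\ket{i}\bra{j}$, and obtain $F(\rho,\psi)=\braket{\psi|\rho|\psi}=\frac{1}{M}\norm{\sum_i\sqrt{c_i}\ket{\phi_i}}^2$. Setting $\ket{\chi^*}\propto\sum_i\sqrt{c_i}\ket{\phi_i}$ (normalized) and $\ket{\xi^T_i}=\ket{\chi^*}$, a single application of Cauchy--Schwarz shows $\abs{\braket{\psi^T_{t_k}|\tilde{\psi}^T}}^2=\frac{1}{M}\norm{\sum_i\sqrt{c_i}\ket{\phi_i}}^2$, matching the fidelity; note that $\ket{\chi^*}$ lies in $\mathcal{Y}$ automatically as a combination of the $\ket{\phi_i}$.

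I do not expect a serious obstacle, since this is essentially a pure-target instance of Uhlmann's theorem; the only points demanding care are (i) checking that $\dim\mathcal{X}\le\dim\mathcal{Y}$ is precisely the hypothesis licensing the fixed-purification form of Uhlmann with $\ket{\psi^T_{t_k}}$ held fixed, and (ii) reading \emph{``for some $\ket{\xi^T_i}$''} as an existence statement: the equality holds for the specific optimal choice with all $\ket{\xi^T_i}$ equal to $\ket{\chi^*}$ (equivalently, for the product purification $\ket{\psi}\otimes\ket{\chi^*}$), and not for arbitrary decorations, since letting the $\ket{\xi^T_i}$ differ could align each with $\ket{\phi_i}$ and thereby overshoot $F(\rho,\psi)$ while still staying within the unit ball.
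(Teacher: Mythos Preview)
Your proposal is correct and follows essentially the same approach as the paper: both invoke Uhlmann's theorem with $\ket{\psi^T_{t_k}}$ held as the fixed purification of $\rho$ and identify $\ket{\tilde{\psi}^T}$ as the maximizing purification of $\psi$ in $\mathcal{X}\otimes\mathcal{Y}$. Your extra observation that, because $\psi$ is pure, every purification is necessarily a product $\ket{\psi}\otimes\ket{\chi}$ (so the $\ket{\xi_i^T}$ can all be taken equal) is a clean sharpening of the paper's argument, which instead writes the maximizer in the general form $\sum_i\omega_i\ket{i}\ket{\xi_i^T}$ and reads off $|\omega_i|=\sqrt{c_i/M}$ from the partial-trace constraint.
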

\begin{proof}
	Since $\ket{\psi^T_{t_k}}$ is a purification of $\rho$, by Uhlmann's theorem, the fidelity \[F(\rho,\psi)=\max_{\ket{v}\in\mathcal{X}\otimes\mathcal{Y}}|\braket{\psi^T_{t_k}\mid v}|^2\]
    with $\ket{v}$ satisfying $\operatorname{Tr_{\mathcal{Y}}}[\ket{v}\bra{v}]=\ket{\psi}\bra{\psi}$.
	Let $\ket{\tilde{\psi}^T}$ be the state $\ket{v}$ that makes the inner product attain the maximum. Suppose $\ket{\tilde{\psi}^T}=\sum_{i\in[N]}\omega_i \ket{i}\ket{\xi_i^T}$. Since $\operatorname{Tr_{\mathcal{Y}}}[\ket{\tilde{\psi}^T}\bra{\tilde{\psi}^T}]=\ket{\psi}\bra{\psi}$, it must hold that $|\omega_i|=\sqrt{c_i/M}$. Moving the phase of $\omega_i$ to the phase of $\ket{\xi_i^T}$, we can suppose that $\omega_i=\sqrt{c_i/M}$, which leads to the lemma.
\end{proof}

Without loss of generality, we suppose that $\dim\mathcal{Y}$ is sufficiently large. Then by Lemma \ref{lem:purification for fidelity}, the condition $F(\rho,\psi)\geq (1-\epsilon)^2$ of Lemma \ref{lem:lower bound for D_t} implies
$|\braket{\psi^T_{t_k}\mid\tilde{\psi}^T}|\geq 1-\epsilon$.

To obtain the lower bound for $D_{t_k}$ given in Lemma \ref{lem:lower bound for D_t}, we divide $D_{t_k}$ into two parts:
\[E_{t_k}=\frac{1}{|\mathcal{T}|}\sum_{T\in\mathcal{T}}\left\|\ket{\psi_{t_k}^T}-\ket{\tilde{\psi}^T}\right\|^2, F_{t_k}=\frac{1}{|\mathcal{T}|}\sum_{T\in\mathcal{T}}\left\|\ket{\psi_{t_k}}-\ket{\tilde{\psi}^T}\right\|^2\]
by a triangle inequality as follows:
\begin{align}
    D_{t_k}&=\frac{1}{|\mathcal{T}|}\sum_{T\in\mathcal{T}}\left\|\ket{\psi_{t_k}^T}-\ket{\tilde{\psi}^T}+\ket{\tilde{\psi}^T}-\ket{\psi_{t_k}}\right\|^2\notag\\
    &\geq E_{t_k}+F_{t_k}-\frac{2}{|\mathcal{T}|}\sum_{T\in\mathcal{T}}\left\|\ket{\psi_{t_k}^T}-\ket{\tilde{\psi}^T}\right\| \left\|\ket{\tilde{\psi}^T}-\ket{\psi_{t_k}}\right\|\notag\\
    &\geq E_{t_k}+F_{t_k}-2\sqrt{E_{t_k}}\sqrt{F_{t_k}}\notag\\
    &=\left(\sqrt{F_{t_k}}-\sqrt{E_{t_k}}\right)^2. \label{eq:lower bound for D_t by E_{t_k} and F_{t_k}}
\end{align}
Hence, we should look for a lower bound for $\sqrt{F_{t_k}}-\sqrt{E_{t_k}}$. In the following part, we upper bound $E_{t_k}$ in Lemma \ref{lem:upper bound for E_{t_k}} and lower bound $F_{t_k}$ in Lemma \ref{lem:lower bound for F_{t_k}}.

\begin{lemma}\label{E}\label{lem:upper bound for E_{t_k}}
    Suppose $M\leq\beta^2\kappa_k N/16$ and $\mathcal{T}$ is a collection of hard inputs for $k\in[n]$ as in~\Cref{def:hardinput}. If the final state $\ket{\psi_{t_k}^T}$ satisfies that $|\braket{\psi^T_{t_k}\mid\tilde{\psi}^T}|\geq 1-\epsilon$ for every $T\in\mathcal{T}$, with $\ket{\tilde{\psi}^T}$ given by  \cref{eq:psi'}, then it holds that $E_{t_k}\leq 2\epsilon$.
\end{lemma}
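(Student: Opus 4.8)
The plan is to reduce the bound on $E_{t_k}$ to a per-input estimate and then average, since by definition $E_{t_k}$ is the uniform average over $T\in\mathcal{T}$ of the squared distances $\|\ket{\psi_{t_k}^T}-\ket{\tilde\psi^T}\|^2$. First I would record that both $\ket{\psi_{t_k}^T}$ and $\ket{\tilde\psi^T}$ are unit vectors: the former because it is produced by unitaries and oracle calls applied to $\ket{0}$, and the latter because $\operatorname{Tr}_{\mathcal{Y}}[\ket{\tilde\psi^T}\bra{\tilde\psi^T}]=\ket{\psi}\bra{\psi}$ has trace $1$. For any two unit vectors one has the elementary identity $\|\ket{u}-\ket{v}\|^2=2-2\operatorname{Re}\braket{u|v}$, so the whole estimate comes down to lower bounding $\operatorname{Re}\braket{\psi_{t_k}^T|\tilde\psi^T}$ for each $T$.

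The one point that needs care is that the hypothesis only controls the modulus $|\braket{\psi_{t_k}^T|\tilde\psi^T}|\ge 1-\epsilon$ (this is exactly $\sqrt{F(\rho,\psi)}\ge 1-\epsilon$ via \cref{lem:purification for fidelity}), whereas the distance identity involves the real part, and $\operatorname{Re}$ is only an upper bound for the modulus, not a lower bound. To bridge this I would exploit the global phase freedom in \cref{eq:psi'}: multiplying $\ket{\tilde\psi^T}$ by an overall phase $e^{\mathrm{i}\phi}$ replaces each $\ket{\xi_i^T}$ by $e^{\mathrm{i}\phi}\ket{\xi_i^T}$ and hence preserves the prescribed form with real nonnegative amplitudes $\sqrt{c_i/M}$. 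Choosing $\phi$ so that $\braket{\psi_{t_k}^T|\tilde\psi^T}$ becomes real and nonnegative then yields $\operatorname{Re}\braket{\psi_{t_k}^T|\tilde\psi^T}=|\braket{\psi_{t_k}^T|\tilde\psi^T}|\ge 1-\epsilon$.

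Combining these observations, for every $T\in\mathcal{T}$ I obtain $\|\ket{\psi_{t_k}^T}-\ket{\tilde\psi^T}\|^2=2-2\operatorname{Re}\braket{\psi_{t_k}^T|\tilde\psi^T}\le 2\epsilon$, and averaging this uniform bound over the $|\mathcal{T}|$ inputs immediately gives $E_{t_k}\le 2\epsilon$. I do not expect a genuine obstacle in this direction: the argument uses only the per-input fidelity hypothesis together with the phase normalization, and in particular the extra assumptions $M\le\beta^2\kappa_k N/16$ and the hard-input structure of $\mathcal{T}$ are not needed here — they are inherited from the surrounding statement and will instead be consumed by the companion lower bound on $F_{t_k}$. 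The only subtlety worth spelling out is the phase-alignment step, which is why I would make the global-phase invariance of \cref{eq:psi'} explicit rather than leave it implicit.
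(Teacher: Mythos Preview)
Your proposal is correct and matches the paper's proof almost verbatim: the paper also invokes global phase freedom to assume $\braket{\psi_{t_k}^T\mid\tilde{\psi}^T}=|\braket{\psi_{t_k}^T\mid\tilde{\psi}^T}|\ge 1-\epsilon$, then uses $\|\ket{\psi_{t_k}^T}-\ket{\tilde\psi^T}\|^2=2-2\braket{\psi_{t_k}^T\mid\tilde\psi^T}\le 2\epsilon$ and averages. Your observation that the hypotheses $M\le\beta^2\kappa_k N/16$ and the hard-input structure are unused here is also accurate.
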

\begin{proof}
    Since the change of the global phase does not affect the quantum state, without loss of generality, we can assume 
    \[
    \braket{\psi_{t_k}^T\mid \tilde{\psi}^T}=\left|\braket{\psi_{t_k}^T\mid\tilde{\psi}^T}\right|\geq 1-\epsilon.
    \]
    Therefore,
    \[\left\|\ket{\psi_{t_k}^T}-\ket{\tilde{\psi}^T}\right\|^2=2-2\braket{\psi_{t_k}^T\mid\tilde{\psi}^T}\leq 2\epsilon.\]
    It follows that $E_{t_k}\leq 2\epsilon$.
\end{proof}

To bound $F_{t_k}$, we need the following proposition:
\begin{proposition}\label{psi sum}
    Let $\mathcal{T}$ be a collection of hard inputs for $k\in[n]$ with constants $\alpha,\beta$, then \[\sum\limits_{T\in\mathcal{T}}\left|\braket{\psi_{t_k}\mid\tilde{\psi}^T}\right|\leq \sqrt{\frac{\sum_{j\neq k}M_j}{M}}|\mathcal{T}|+\sqrt{\frac{\kappa_k}{MN}}m_k|\mathcal{T}|,\]
    with $\ket{\psi_{t_k}}$ defined by \Cref{eq:psi_t} and $\ket{\tilde{\psi}^T}$ given by \Cref{eq:psi'}.
\end{proposition}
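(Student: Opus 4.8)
The plan is to expand the fixed vector $\ket{\psi_{t_k}}$ in the computational basis of the output register, reduce the inner products to the scalar profile of coefficient norms, and then exploit the combinatorial symmetry of the order-preserving permutations to average away the $k$-th machine's contribution. First I would write $\ket{\psi_{t_k}}=\sum_{i\in[N]}\ket{i}\ket{\phi_i}$ with (unnormalized) $\ket{\phi_i}$ in the working register $\mathcal{Y}$ satisfying $\sum_{i}\norm{\phi_i}^2=1$; recall this state is independent of $T_k$, hence of the chosen $T\in\mathcal{T}$. Pairing against $\ket{\tilde\psi^T}=\frac{1}{\sqrt{M}}\sum_i\sqrt{c_i^T}\ket{i}\ket{\xi_i^T}$ from \Cref{eq:psi'}, and using that each $\ket{\xi_i^T}$ is a unit vector (forced by $\operatorname{Tr}_{\mathcal{Y}}\ket{\tilde\psi^T}\bra{\tilde\psi^T}=\ket\psi\bra\psi$), the triangle inequality and Cauchy--Schwarz on $\mathcal{Y}$ give
\[
\left|\braket{\psi_{t_k}|\tilde\psi^T}\right|\leq\frac{1}{\sqrt{M}}\sum_{i\in[N]}\sqrt{c_i^T}\,\norm{\phi_i}.
\]

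The key structural observation is that for $T=\tilde\sigma^k(T_0)\in\mathcal{T}$, the total multiplicity splits as $c_i^T=a_i+b_{\sigma^{-1}(i)}$, where $a_i:=\sum_{j\neq k}c_{ij}$ is $\sigma$-independent and $b_i:=c_{ik}$ is supported on $S:=\Supp(T_k)$. Subadditivity of the square root, $\sqrt{c_i^T}\leq\sqrt{a_i}+\sqrt{b_{\sigma^{-1}(i)}}$, separates the estimate into an ``other machines'' term and a ``$k$-th machine'' term. Summing the first term over $\mathcal{T}$ is immediate since it does not depend on $\sigma$: one Cauchy--Schwarz over $i$ together with $\sum_i a_i=\sum_{j\neq k}M_j$ and $\sum_i\norm{\phi_i}^2=1$ yields exactly $\sqrt{(\sum_{j\neq k}M_j)/M}\,|\mathcal{T}|$, matching the first summand.

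The heart of the argument is the second term $\frac{1}{\sqrt{M}}\sum_{\sigma}\sum_{i\in[N]}\sqrt{b_{\sigma^{-1}(i)}}\,\norm{\phi_i}$. Here I would first bound $b_{i'}\leq\kappa_k$, reducing it to $\frac{\sqrt{\kappa_k}}{\sqrt{M}}\sum_{\sigma}\sum_{i'\in S}\norm{\phi_{\sigma(i')}}$, and then double count over pairs $(\sigma,i')$. Invoking the image-set parametrization of effective order-preserving permutations that underlies \Cref{lem:size_of_T}, for each target $\ell\in[N]$ there are exactly $\binom{N-1}{m_k-1}$ permutations with $\ell\in\sigma(S)$, each contributing a unique $i'\in S$ with $\sigma(i')=\ell$. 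This gives $\sum_{\sigma}\sum_{i'\in S}\norm{\phi_{\sigma(i')}}=\binom{N-1}{m_k-1}\sum_{\ell}\norm{\phi_\ell}$, after which $\sum_\ell\norm{\phi_\ell}\leq\sqrt{N}$ by Cauchy--Schwarz and $\binom{N-1}{m_k-1}=\frac{m_k}{N}\binom{N}{m_k}=\frac{m_k}{N}|\mathcal{T}|$ produce precisely $\sqrt{\kappa_k/(MN)}\,m_k|\mathcal{T}|$.

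I expect the main obstacle to be this double-counting step: one must verify that averaging $\norm{\phi_{\sigma(i')}}$ over the order-preserving permutations genuinely symmetrizes the arbitrary, algorithm-dependent profile $\{\norm{\phi_\ell}\}$ into the flat sum $\sum_\ell\norm{\phi_\ell}$, which is what forces the fixed state $\ket{\psi_{t_k}}$ to ``spread out'' over all $N$ candidate images. The uniformity of the count $\binom{N-1}{m_k-1}$ across every $\ell$ is exactly the place where the coordinator's ignorance of the location of $T_k$ enters quantitatively; establishing it carefully via the image-set bijection of \Cref{lem:size_of_T} is the crux, while the remaining steps are routine applications of the triangle inequality, subadditivity of $\sqrt{\cdot}$, and Cauchy--Schwarz.
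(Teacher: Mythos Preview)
Your proposal is correct and in fact a little more direct than the paper's argument. The paper reaches the same two-term split by first embedding $\mathcal{Z}\hookrightarrow\mathcal{Z}\otimes\mathbb{C}^n$ and constructing an auxiliary unitary $A$ that, on $\ket{\tilde\psi^T,0}$, redistributes the amplitude $\sqrt{c_i}$ into machine-labeled components $\sqrt{c_{ij}}\ket{i,\xi_i^T,j}$; it then splits $\braket{\psi_{t_k},0|A^\dagger A|\tilde\psi^T,0}$ into the $j\neq k$ part (handled by Cauchy--Schwarz against the orthonormal system $\{A^\dagger\ket{i,\xi_i^T,j}\}$) and the $j=k$ part (where $A$ is inverted explicitly to recover $\sqrt{c_{ik}/c_i}\cdot\sqrt{c_{ik}}\braket{\psi_{t_k}|i,\xi_i^T}$). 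You bypass this device entirely by first bounding $|\braket{\phi_i|\xi_i^T}|\le\norm{\phi_i}$ and then applying the scalar inequality $\sqrt{a_i+b_{\sigma^{-1}(i)}}\le\sqrt{a_i}+\sqrt{b_{\sigma^{-1}(i)}}$, which accomplishes the same separation with no auxiliary register. From that point on the two proofs coincide: the double-counting over order-preserving permutations via the image-set bijection of \Cref{lem:size_of_T}, the uniform multiplicity $\binom{N-1}{m_k-1}$, and the final Cauchy--Schwarz $\sum_\ell\norm{\phi_\ell}\le\sqrt{N}$ are identical. The paper's unitary construction is a cleaner abstraction if one wanted an exact machine-wise decomposition, but for the upper bound at hand your subadditivity-of-square-root route is shorter and loses nothing.
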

\begin{proof}
	Let $\mathcal{Z}$ be the space of the output register and working register. 
    Consider an embedding map $\mathcal{Z}\hookrightarrow\mathcal{Z}\otimes\mathbb{C}^n$ defined as $\ket{\varphi}\mapsto\ket{\varphi}\ket{0}$, where $\mathbb{C}^n$ is the $n$-dimensional complex Hilbert space. Define a linear transform $A$ on $\mathcal{Z}\otimes\mathbb{C}^n$ satisfying 
    \begin{equation}
        \begin{aligned}
            A\ket{\tilde{\psi}^T,0}
            =&A\left(\frac{1}{\sqrt{M}}\sum_{i\in[N]}\sqrt{c_i}\ket{i,\xi_i^T,0}\right)\\
            =&\frac{1}{\sqrt{M}}\sum_{i\in[N]}\sum_{j\in[n]}\sqrt{c_{ij}}\ket{i,\xi_i^T,j}, 
        \end{aligned}
        \label{eq:A*psi'}
    \end{equation}
   and
   \begin{equation}
       \begin{aligned}
           &A\left(\frac{1}{\sqrt{M_k}}\sum_{i\in [N]}\left[\sqrt{c_{ik}}\ket{i,\xi_i^T}\otimes \left(\sqrt{\frac{c_{ik}}{c_i}}\ket{0}+\sqrt{1-\frac{c_{ik}}{c_i}}\ket{1}\right) \right]\right) \\
           =&\frac{1}{\sqrt{M_k}}\sum_{i\in [N]}\sqrt{c_{ik}}\ket{i,\xi_i^T,k}.
       \end{aligned}
         \label{eq:A act on state for k}
    \end{equation}
    Since $c_{ik}\leq c_i$, $A$ is well-defined. By direct calculation, it can be verified that the definition of $A$ preserves the inner product, so $A$ can be supposed as a unitary \cite{nielsen2010quantum}. By \Cref{eq:A*psi'}, we have
    \begin{align}
\sum_{T\in\mathcal{T}}\left|\braket{\psi_{t_k}|\tilde{\psi}^T}\right|\nonumber 
        =&\sum_{T\in\mathcal{T}}\left|\braket{\psi_{t_k},0|\tilde{\psi}^T,0}\right|\nonumber\\
        =&\sum_{T\in\mathcal{T}}\left|\bra{\psi_{t_k},0}A^\dagger \frac{1}{\sqrt{M}}\sum_{i\in[N]}\sum_{j\in[n]}\sqrt{c_{ij}}\ket{i,\xi_i^T,j}\right|\nonumber\\
        \leq& \begin{aligned}[t]
        \frac{1}{\sqrt{M}}&\sum_{T\in\mathcal{T}}\left|\sum_{i\in[N]}\sum_{j\neq k}\sqrt{c_{ij}}\bra{\psi_{t_k},0}A^\dagger\ket{i,\xi_i^T,j}\right|\\
        &+\frac{1}{\sqrt{M}}\sum_{T\in\mathcal{T}}\left|\sum_{i\in[N]}\sqrt{c_{ik}}\bra{\psi_{t_k},0}A^\dagger\ket{i,\xi_i^T,k}\right|.
        \end{aligned}\label{eqn:twoterm}
    \end{align}
    We then estimate the upper bound for these two terms in \Cref{eqn:twoterm} separately. 
    
    For the first term, note that $M_j=\sum_{i}c_{ij}$, by Cauchy-Schwartz inequality,
    \[
        \begin{aligned}
            &\frac{1}{\sqrt{M}}\sum_{T\in\mathcal{T}}\left|\sum_{i\in[N]}\sum_{j\neq k}\sqrt{c_{ij}}\bra{\psi_{t_k},0}A^\dagger\ket{i,\xi_i^T,j}\right|\\
            \leq& \frac{1}{\sqrt{M}}\sum_{T\in\mathcal{T}}\sqrt{\sum_{j\neq k}M_j}\sqrt{\sum_{i\in[N]}\sum_{j\neq k}\left|\bra{\psi_{t_k},0}A^\dagger\ket{i,\xi_i^T,j}\right|^2}
        \end{aligned}
    \]
    Since $A^\dagger$ is unitary, $\{A^\dagger\ket{i,\xi_i^T,j}\}_{i\in[N],j\in[n]}$ is an orthonormal system. Thus, the last square root is not greater than the norm of a unit vector $\ket{\psi_{t_k},0}$. So $\sqrt{\sum_{j\neq k}M_j/M}\cdot|\mathcal{T}|$ is an upper bound for the first term. 

    For the second term in \Cref{eqn:twoterm}, by \Cref{eq:A act on state for k} and the definition of the embedding map, we have 
    \begin{align*}
        &\frac{1}{\sqrt{M}}\sum_{T\in\mathcal{T}}\left|\sum_{i\in[N]}\sqrt{c_{ik}}\bra{\psi_{t_k},0}A^\dagger\ket{i,\xi_i^T,k}\right| \\
        =& \frac{1}{\sqrt{M}}\sum_{T\in\mathcal{T}}\left|\sum_{i\in[N]}\sqrt{\frac{c_{ik}}{c_i}}\sqrt{c_{ik}}\braket{\psi_{t_k}|i,\xi_i^T}\right| \\
        \leq& \frac{1}{\sqrt{M}}\sum_{T\in\mathcal{T}}\sum_{i\in[N]}\sqrt{c_{ik}}\left|\braket{\psi_{t_k}|i,\xi_i^T}\right|.
    \end{align*}
    Recall that $\kappa_k\geq c_{ik}$ and $c_{ik}=0$ for $i\notin T_k$, so the last expression is not greater than 
    \[\sqrt{\frac{\kappa_k}{M}}\sum_{T\in\mathcal{T}}\sum_{i\in T_k}\left|\braket{\psi_{t_k}|i,\xi_i^T}\right|.\] 
    Changing the summation order, we come to 
    \begin{align*}
    \sqrt{\frac{\kappa_k}{M}}\sum_{T\in\mathcal{T}}\sum_{i\in T_k}\left|\braket{\psi_{t_k}|i,\xi_i^T}\right|
    &=\sqrt{\frac{\kappa_k}{M}}\sum_{i\in[N]}\sum_{T\in\mathcal{T},i\in T_k}\left|\braket{\psi_{t_k}|i,\xi_i^T}\right|\\
    &\leq\sqrt{\frac{\kappa_k}{M}}\sum_{i\in[N]}\sum_{T\in\mathcal{T},i\in T_k}\left|\braket{\psi_{t_k}|i,\iota_i}\right| ,
    \end{align*}
    where "$\sum\limits_{T\in\mathcal{T},i\in T_k}$" means summation for all $T\in \mathcal{T}$ satisfying $i\in T_k$, and $\ket{\iota_i}$ independent to $T$ defined as \[|\braket{\psi_{t_k}|i,\iota_i}|=\max_{\|\ket{\xi}\|=1}|\braket{\psi_{t_k}|i,\xi}|.\] As $\ket{\psi_{t_k}}$ is also independent of the choice of the input $T\in\mathcal{T}$, the second summation just brings a multiplier, which is the count of $T\in \mathcal{T}$ satisfying $i\in T_k$. Since the definition of $\mathcal{T}$ is to choose $m_k$ elements in $[N]$ and assign them non-zero multiplicities, the count of the choices is $\binom{N-1}{m_k-1}$ as $i\in T_k$ means that $i$ has to been chosen. Then by Cauchy-Schwartz inequality and Lemma \ref{lem:size_of_T},
    \begin{align}
        &\sqrt{\frac{\kappa_k}{M}}\sum_{i\in[N]}\sum_{T\in\mathcal{T},i\in T_k}\left|\braket{\psi_{t_k}|i,\iota_i}\right| \notag \\
        =& \sqrt{\frac{\kappa_k}{M}}\binom{N-1}{m_k-1}\sum_{i\in[N]}\left|\braket{\psi_{t_k}|i,\iota_i}\right|\label{count}\\
        \leq& \sqrt{\frac{\kappa_k}{M}}\frac{m_k}{N}\binom{N}{m_k}\sqrt{N}\sqrt{\sum_{i\in[N]}\left|\braket{\psi_{t_k}|i,\iota_i}\right|^2}\notag\\
        \leq& \sqrt{\frac{\kappa_k}{MN}}m_k|\mathcal{T}|.\notag
    \end{align}

    Combining the above upper bounds for the two terms, we can obtain the proposition immediately.
\end{proof}

\begin{lemma}\label{F}\label{lem:lower bound for F_{t_k}}Let $\mathcal{T}$ be a collection generated by a hard input for $k\in[n]$ with constant $\beta$. If $M\leq\beta^2\kappa_k N/16$, then it holds that $F_{t_k}\geq M_k/2M$.
\end{lemma}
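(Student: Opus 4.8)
The plan is to convert $F_{t_k}$ into an inner-product estimate and then feed it to Proposition~\ref{psi sum}. Both $\ket{\psi_{t_k}}$ and $\ket{\tilde\psi^T}$ are unit vectors ($\ket{\psi_{t_k}}$ is a unitary image of $\ket{0}$, and $\|\ket{\tilde\psi^T}\|^2=\frac1M\sum_{i}c_i\|\ket{\xi_i^T}\|^2=1$), so each summand in $F_{t_k}$ expands as $\|\ket{\psi_{t_k}}-\ket{\tilde\psi^T}\|^2=2-2\operatorname{Re}\braket{\psi_{t_k}|\tilde\psi^T}\geq 2-2|\braket{\psi_{t_k}|\tilde\psi^T}|$. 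Averaging over $T\in\mathcal{T}$ and applying Proposition~\ref{psi sum} yields
\[
F_{t_k}\;\geq\; 2-2\sqrt{\frac{\sum_{j\neq k}M_j}{M}}-2\sqrt{\frac{\kappa_k}{MN}}\,m_k .
\]
The whole task is then to show that the two subtracted terms together leave at least $M_k/(2M)$.

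For the leakage term $2\sqrt{\kappa_k/(MN)}\,m_k$, I would invoke both hard-input conditions of \Cref{def:hardinputcond}. The bound $M_k/m_k\geq\beta\kappa_k$ gives $m_k\leq M_k/(\beta\kappa_k)$, while $M\leq \beta^2\kappa_k N/16$ rearranges to $\sqrt{N\kappa_k}\geq 4\sqrt{M}/\beta$. Substituting the first and then the second,
\[
2\sqrt{\frac{\kappa_k}{MN}}\,m_k\;\leq\;\frac{2M_k}{\beta\sqrt{M}\sqrt{N\kappa_k}}\;\leq\;\frac{2M_k}{\beta\sqrt{M}\cdot 4\sqrt{M}/\beta}\;=\;\frac{M_k}{2M}.
\]
This is exactly the step where the two constraints on the hard input are used in tandem, and I expect it to be the main obstacle, in the sense that the precise form of the hard-input condition (the coupling of $m_k$, $\kappa_k$ and $N$ through $\beta$) is dictated by needing this term to land below $M_k/(2M)$.

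For the remaining term, writing $M=\sum_j M_j$ gives $\sum_{j\neq k}M_j=M-M_k$, so with $x:=M_k/M\in(0,1]$ the displayed lower bound becomes $F_{t_k}\geq 2-2\sqrt{1-x}-x/2$. It then suffices to verify the elementary inequality $2-2\sqrt{1-x}\geq x$ on $[0,1]$, which follows by squaring $2\sqrt{1-x}\leq 2-x$ (legitimate since $2-x>0$ for $x\leq 1$) down to the always-true $0\leq x^2$. Combining the two estimates, $F_{t_k}\geq x-x/2=x/2=M_k/(2M)$, as claimed. The only care needed is to track the factors of two and to note that $\operatorname{Re}\braket{\psi_{t_k}|\tilde\psi^T}\leq|\braket{\psi_{t_k}|\tilde\psi^T}|$ is all that is required to pass from Proposition~\ref{psi sum} (which bounds the modulus) to the bound on $F_{t_k}$.
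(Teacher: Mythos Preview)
Your proposal is correct and follows essentially the same route as the paper's proof: both reduce $F_{t_k}$ to $2-2|\mathcal{T}|^{-1}\sum_T|\braket{\psi_{t_k}|\tilde\psi^T}|$, apply Proposition~\ref{psi sum}, bound the leakage term $2\sqrt{\kappa_k/(MN)}\,m_k$ by $M_k/(2M)$ via the hard-input constraints, and show $2-2\sqrt{1-x}\geq x$ for $x=M_k/M$. The only cosmetic difference is that the paper rationalizes $2-2\sqrt{1-x}=\tfrac{2x}{1+\sqrt{1-x}}\geq x$ instead of squaring.
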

\begin{proof}
Noting that 
$\left\|\ket{\psi_{t_k}}-\ket{\tilde{\psi}^T}\right\|^2\geq 2-2\left|\braket{\psi_{t_k}|\tilde{\psi}^T}\right|$,
we have 
\[F_{t_k}\geq 2-\frac{2}{|\mathcal{T}|}\sum_{T\in\mathcal{T}}\left|\braket{\psi_{t_k}|\tilde{\psi}^T}\right|.\]
A bound for the summation is given by Proposition \ref{psi sum}. Through its conclusion, with $M=\sum_{j\in[n]}M_j$, we obtain a lower bound for $F_{t_k}$:
\begin{align*}
    F_{t_k}
    &\geq 2-2\sqrt{\frac{\sum_{j\neq k}M_j}{M}}-2\sqrt{\frac{\kappa_k}{MN}}m_k\\
    &= \frac{2}{1+\sqrt{\frac{\sum_{j\neq k}M_j}{M}}}\frac{M_k}{M}-2\sqrt{\frac{\kappa_k}{MN}}m_k\\
    &\geq \frac{M_k}{M}-2\sqrt{\frac{\kappa_k}{MN}}m_k.
\end{align*}
Recalling $\frac{M_k}{m_k}\geq \beta\kappa_k$ in the hard input conditions, and the condition of $M\leq \frac{\beta^2}{16}\kappa_k N$, we can bound the second term with \[\sqrt{\frac{\kappa_k}{MN}}\cdot m_k=\sqrt{\frac{M}{\kappa_kN}}\cdot \frac{\kappa_k m_k}{M}\leq \sqrt{\frac{\beta^2}{16}}\cdot\frac{M_k}{\beta M}= M_k/4M.\]
Thus, $F_t\geq M_k/2M$.
\end{proof}

To sum up, under the conditions of Lemma \ref{lem:lower bound for D_t}, by Lemma \ref{lem:upper bound for E_{t_k}} and Lemma \ref{lem:lower bound for F_{t_k}}, it follows that
$E_{t_k}\leq 2\epsilon, F_{t_k}\geq \frac{M_k}{2M}$.
Combining them, we immediately obtain  a lower bound for $\sqrt{F_{t_k}}-\sqrt{E_{t_k}}$:
\begin{align*}
    \sqrt{F_t}-\sqrt{E_t}
    &\geq\sqrt{\frac{M_k}{2M}}-\sqrt{2\epsilon}.
\end{align*}
The definition of hard input implies $M_k\geq\alpha M$. Since we have chosen $\alpha>4\epsilon$ as the condition of Lemma \ref{lem:lower bound for D_t}, it holds that $\epsilon\leq C_0M_k/M$ for some constant $C_0<1/4$. With the inequality \eqref{eq:lower bound for D_t by E_{t_k} and F_{t_k}}, the lower bound for $D_t$ can be obtained by
\[D_t\geq\left(\sqrt{F_t}-\sqrt{E_t}\right)^2\geq \left(\sqrt{\frac{M_k}{2M}}-\sqrt{2C_0\frac{M_k}{M}}\right)^2=C\frac{M_k}{M}\]
with constant $C=(1/\sqrt{2}-\sqrt{2C_0})^2$. Lemma \ref{lem:lower bound for D_t} follows from this.

\section{Proof of Lemma \ref{lem:upper bound for D_t}} \label{sec:proof of upper bound for D_t}

By the Cauchy-Schwartz inequality, 
\begin{align}
    D_{t+1}
    &=\frac{1}{|\mathcal{T}|}\sum_{T\in\mathcal{T}}\left\|U_{t+1}\left(O_{t+1}\ket{\psi_t^T}-\ket{\psi_t}\right)\right\|^2\notag\\
    &=\frac{1}{|\mathcal{T}|}\sum_{T\in\mathcal{T}}\left\|O_{t+1}\ket{\psi_t^T}-\ket{\psi_t}\right\|^2\notag\\
    &=\frac{1}{|\mathcal{T}|}\sum_{T\in\mathcal{T}}\left\|O_{t+1}(\ket{\psi_t^T}-\ket{\psi_t})+\left(O_{t+1}-I\right)\ket{\psi_t}\right\|^2\notag\\
    &\leq \begin{aligned}[t]
    &\frac{1}{|\mathcal{T}|}\sum_{T\in\mathcal{T}}\left\|\ket{\psi_t^T}-\ket{\psi_t}\right\|^2 \\
    &+\frac{2}{|\mathcal{T}|}\sum_{T\in\mathcal{T}}\left\|\ket{\psi_t^T}-\ket{\psi_t}\right\|\left\|\left(O_{t+1}-I\right)\ket{\psi_t}\right\|\\
    &+\frac{1}{|\mathcal{T}|}\sum_{T\in\mathcal{T}}\left\|\left(O_{t+1}-I\right)\ket{\psi_t}\right\|^2
    \end{aligned}\notag\\
    &\leq  \begin{aligned}[t]
    D_t+&2\sqrt{D_t}\left[\frac{1}{|\mathcal{T}|}\sum_{T\in\mathcal{T}}\left\|\left(O_{t+1}-I\right)\ket{\psi_t}\right\|^2\right]^{1/2} \\
    &+\frac{1}{|\mathcal{T}|}\sum_{T\in\mathcal{T}}\left\|\left(O_{t+1}-I\right)\ket{\psi_t}\right\|^2.
    \end{aligned}
    \label{D_t}
\end{align}

It remains to prove an upper bound for 
\[
\frac{1}{|\mathcal{T}|}\sum_{T\in\mathcal{T}}\left\|\left(O_{t+1}-I\right)\ket{\psi_t}\right\|^2.
\]
\begin{proposition}\label{sum O-I}
    For every collection $\mathcal{T}$ generated by a hard input for $k$, it holds that 
    \[\frac{1}{|\mathcal{T}|}\sum_{T\in\mathcal{T}}\left\|\left(O_{t+1}-I\right)\ket{\psi_t}\right\|^2\leq 4\frac{m_k}{N}.\]
\end{proposition}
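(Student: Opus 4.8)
The plan is to exploit the locality of the difference operator $O_{t+1}-I$: it is supported on a small subspace determined by $\Supp(T_k)$, and averaging the resulting projection weight over the collection $\mathcal{T}$ reduces, by a counting argument, to the factor $m_k/N$.

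First I would establish the locality. Expand $\ket{\psi_t}$ in the computational basis $\{\ket{i,s,b,l}\}$, where $i\in[N]$, $s\in\{0,\dots,\nu\}$, $b\in\{0,1\}$, and $l$ indexes the working register, and recall that $O_{t+1}$ is either $\hat{\mathcal{O}}_k\otimes I$ or $\hat{\mathcal{O}}_k^\dagger\otimes I$. By the definition of $\hat{\mathcal{O}}_k$, the oracle acts as the identity whenever $b=0$, and when $b=1$ it shifts $s$ by $\pm c_{ik}\bmod(\nu+1)$; but $c_{ik}=0$ for $i\notin\Supp(T_k)$, so the oracle is the identity there as well. Hence $O_{t+1}-I$ annihilates every basis state except those with $b=1$ and $i\in\Supp(T_k)$. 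Letting $P_T=\sum_{i\in\Supp(T_k)}\sum_{s,l}\ket{i,s,1,l}\bra{i,s,1,l}$ be the projector onto this active subspace, we get $(O_{t+1}-I)\ket{\psi_t}=(O_{t+1}-I)P_T\ket{\psi_t}$, and since $\norm{O_{t+1}-I}\leq\norm{O_{t+1}}+\norm{I}=2$ in operator norm, it follows that $\norm{(O_{t+1}-I)\ket{\psi_t}}^2\leq 4\norm{P_T\ket{\psi_t}}^2$.

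Next I would average $\norm{P_T\ket{\psi_t}}^2=\sum_{i\in\Supp(T_k)}\sum_{s,l}\abs{\braket{i,s,1,l|\psi_t}}^2$ over $T\in\mathcal{T}$, crucially using that $\ket{\psi_t}$ is independent of $T$ (it depends only on $\tilde{T}$) while $\Supp(T_k)$ varies with $T$. Swapping the order of summation rewrites the average as $\frac{1}{|\mathcal{T}|}\sum_{i\in[N]}c(i)\sum_{s,l}\abs{\braket{i,s,1,l|\psi_t}}^2$, where $c(i)$ counts the $T\in\mathcal{T}$ with $i\in\Supp(T_k)$. By the bijection between $\mathcal{T}$ and the $m_k$-subsets of $[N]$ established in the proof of \Cref{lem:size_of_T}, each fixed $i$ lies in exactly $\binom{N-1}{m_k-1}$ of them, so $c(i)=\binom{N-1}{m_k-1}$ is the same for every $i$. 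Using $|\mathcal{T}|=\binom{N}{m_k}$ together with the identity $\binom{N-1}{m_k-1}/\binom{N}{m_k}=m_k/N$, and bounding $\sum_{i,s,l}\abs{\braket{i,s,1,l|\psi_t}}^2\leq\norm{\ket{\psi_t}}^2=1$, I obtain $\frac{1}{|\mathcal{T}|}\sum_{T\in\mathcal{T}}\norm{P_T\ket{\psi_t}}^2\leq m_k/N$. Combining with the factor $4$ from the locality step gives the claimed bound $4m_k/N$.

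The calculations are elementary once the locality observation is in place; the step carrying the real content is the averaging, where the uniform appearance count $\binom{N-1}{m_k-1}$ of each element across the hard inputs converts the total projection weight into the factor $m_k/N$. The only points I expect to require care are verifying that $\ket{\psi_t}$ genuinely does not depend on the choice of $T\in\mathcal{T}$, and checking that the same argument applies verbatim to both $O_{t+1}=\hat{\mathcal{O}}_k\otimes I$ and its adjoint, which is immediate since the two share the identical active subspace $P_T$ and the identical operator-norm bound.
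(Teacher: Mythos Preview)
Your proposal is correct and follows essentially the same approach as the paper: both arguments isolate the support of $O_{t+1}-I$ to basis states with $b=1$ and $i\in\Supp(T_k)$, extract a factor $4$, and then average over $\mathcal{T}$ using the uniform count $\binom{N-1}{m_k-1}$ together with $|\mathcal{T}|=\binom{N}{m_k}$. The only cosmetic difference is that the paper obtains the factor $4$ via the scalar inequality $|a-b|^2\le 2(|a|^2+|b|^2)$ applied to the shifted and unshifted amplitudes, whereas you obtain it more cleanly from the operator-norm bound $\norm{O_{t+1}-I}\le 2$ after introducing the projector $P_T$; the counting step and the final bound are identical.
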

\begin{proof}
    Since $O_{t+1}$ is either $\hat{\mathcal{O}}_k\otimes I$ or $\hat{\mathcal{O}}_k^\dagger\otimes I$, by the definition of $\hat{\mathcal{O}}_k$, we have
    \begin{align*}
        &\left(O_{t+1}-I\right)\ket{\psi_t} \\
        =&\sum_{i\in T_k}\sum_{s=0}^\nu \sum_l\ket{i,s,1,l}\left(\bra{i,s\oplus \pm c_{ik}, 1,l}-\bra{i,s,1,l}\right)\ket{\psi_t} 
    \end{align*}
    with $x\oplus y:=(x+y)\mod (\nu+1)$.
    
    Notice that for two complex numbers $a,b$, it holds that $|a-b|^2\leq 2(|a|^2+|b|^2)$. Hence,
    \begin{align*}
        &\sum_{T\in\mathcal{T}}\left\|\left(O_{t+1}-I\right)\ket{\psi_t}\right\|^2 \\
        =& \sum_{T\in\mathcal{T}}\left\|\sum_{i\in T_k}\sum_{s=0}^\nu \sum_l\ket{i,s,1,l}\left(\bra{i,s\oplus \pm c_{ik}, 1,l}-\bra{i,s,1,l}\right)\ket{\psi_t} \right\|^2 \\
        =&\sum_{T\in\mathcal{T}}\sum_{i\in T_k}\sum_{s=0}^\nu \sum_l \left|\left(\bra{i,s\oplus \pm c_{ik}, 1,l}-\bra{i,s,1,l}\right)\ket{\psi_t}\right|^2 \\
        \leq& 2\sum_{T\in\mathcal{T}}\sum_{i\in T_k}\sum_{s=0}^\nu\sum_l\left(|\braket{i,s\oplus \pm c_{ik},1,l|\psi_t}|^2+|\braket{i,s,1,l|\psi_t}|^2\right)\\
        =& 4\sum_{T\in\mathcal{T}}\sum_{i\in T_k}\sum_{s=0}^\nu\sum_l|\braket{i,s,1,l|\psi_t}|^2.
    \end{align*}
    Similarly to the deduction of \Cref{count}, through changing the summation order, since $\ket{\psi_t}$ is invariant across every $T\in\mathcal{T}$, and the count of $T\in\mathcal{T}$ satisfying $i\in T_k$ is $\binom{N-1}{m_k-1}$, it can be concluded that
    \begin{align*}
        &4\sum_{T\in\mathcal{T}}\sum_{i\in T_k}\sum_{s=0}^\nu\sum_l|\braket{i,s,1,l|\psi_t}|^2\\
        =& 4\sum_{s=0}^\nu\sum_l\sum_{i\in [N]}\sum_{T\in\mathcal{T},i\in T_k}|\braket{i,s,1,l|\psi_t}|^2\\
        =& 4\binom{N-1}{m_k-1}\sum_{s=0}^\nu\sum_l\sum_{i\in [N]}|\braket{i,s,1,l|\psi_t}|^2\\
        \leq& 4\binom{N-1}{m_k-1}=4\frac{m_k}{N}\binom{N}{m_k}.
    \end{align*}
    Thus, the proposition can be obtained immediately from this with \Cref{lem:size_of_T}.
\end{proof}

By Proposition \ref{sum O-I} and the inequality \eqref{D_t}, we have the relationship
\[D_{t+1}\leq D_t+4\sqrt{\frac{m_k}{N}D_t}+4\frac{m_k}{N}=\left(\sqrt{D_t}+2\sqrt{\frac{m_k}{N}}\right)^2.\]
This implies the upper bounds for $\sqrt{D_t}$ form an arithmetic progression.
Since $D_0=0$, we can obtain that 
$ D_t\leq \left(2\sqrt{\frac{m_k}{N}}t\right)^2= 4\frac{m_k}{N}t^2$.

\end{document}